\documentclass[10pt]{article}
\usepackage[preprint]{tmlr}

\usepackage[utf8]{inputenc}
\usepackage{amsmath, amssymb, amsthm}
\usepackage{booktabs}
\usepackage{longtable}
\usepackage{graphicx}
\usepackage{algorithm}
\usepackage{algpseudocode}
\usepackage{xcolor}
\usepackage{microtype}
\usepackage{hyperref}
\usepackage{url}
\theoremstyle{plain}
\newtheorem{theorem}{Theorem}
\newtheorem{lemma}{Lemma}
\newtheorem{proposition}{Proposition}
\newtheorem{corollary}{Corollary}
\theoremstyle{definition}
\newtheorem{definition}{Definition}

\newcommand{\indep}{\perp\!\!\!\perp}
\newcommand{\E}{\mathbb{E}}
\newcommand{\Var}{\mathrm{Var}}
\newcommand{\Cov}{\mathrm{Cov}}
\newcommand{\given}{\,\vert\,}

\title{GFCM: A Tail-Sensitive Mixed-Type Conditional Independence Test for Causal Discovery}

\author{\name Pavel Averin \email averin.p1@live.unic.ac.cy \\
      \addr Department of Computer Science \\ School of Sciences and Engineering\\University of Nicosia \\ 2417, Nicosia, Cyprus
      \AND
      \name Theodoros Moysiadis \email moysiadis.t@unic.ac.cy \\
      \addr Department of Computer Science \\ School of Sciences and Engineering\\University of Nicosia \\ 2417, Nicosia, Cyprus
      \AND
      \name Ioannis Katakis \email katakis.i@unic.ac.cy \\
      \addr Department of Computer Science \\ School of Sciences and Engineering\\University of Nicosia \\ 2417, Nicosia, Cyprus}

\def\month{MM}  
\def\year{YYYY} 
\def\openreview{\url{https://openreview.net/forum?id=XXXX}} 

\begin{document}
\maketitle

\begin{abstract}
Constraint-based causal discovery like PC and FCI depends on its conditional independence
test. Partial correlation and the Generalised Covariance Measure (GCM) detect only the
conditional covariance of residuals, and therefore miss dependence in the mean's nonlinear
part, the scale, and the tails. Tests that detect more are biased inside PC, or not scalable, or
only continuous, or not aiming at the tails. Our proposal, the Generalised Feature Covariance Measure (GFCM)
makes a test that is valid, sensitive beyond covariance, robust inside PC, and applicable to mixed-type
data. It runs the GCM template on a
configurable set of residual features with conditional mean zero (centered moments and
conditional quantile indicators), pooled in blocks and combined by the Cauchy rule, with a growing-knot spline
nuisance at regression cost. We contribute (i) a centering result making the scale feature
Neyman orthogonal, where the uncentered version is biased; (ii) the orientation asymmetry the
mean-quantile construction creates inside PC, and its fix; (iii) a $\Phi$-faithfulness theory
under which PC with GFCM recovers the CPDAG of the set's detection class; and (iv) a systematic benchmark of CI tests sensitive beyond covariance on synthetic data (continuous and mixed-type) semi-synthetic real data tail injections, and PC discovery on random DAGs. Empirically, under size-corrected
power, GFCM recovers the scale and tail edges the covariance family misses and alone keeps power
at the deep conditioning sets PC issues. It remains calibrated as $n$ grows, whereas FFCI,
the boosted GCM, and the partial copula test do not. The test handles
mixed-type data directly. Inside PC at scale, it achieves the lowest skeleton SHD among the tests
that stay calibrated; the others inflate false edges. Validity rests on an additive nuisance; the tail advantage is shown on simulated and
semi-synthetic data, as no fully real benchmark with both heavy tails and known structure exists.
\end{abstract}

\section{Introduction}
\label{sec:intro}

Constraint-based methods such as PC \citep{algo65} and FCI \citep{algo18_1} learn causal structure by asking, for each pair of variables $X,Y$ and conditioning set $Z$, whether $X$ and $Y$ are independent given $Z$. The quality of the recovered graph therefore depends on the CI test making the correct statistical decision at each such query. 

The CI tests used in practice almost all detect dependence only through the conditional mean or covariance. Partial correlation (Fisher's Z) for continuous data \citep{ci29}, and GCM \citep{ci32} with its variants WGCM \citep{ci50} and PCM \citep{ci52} when nonlinear conditioning is needed, all are in this family. So none of them can see the dependence that lives in the variance or in the tails. Even PCM, the strongest, reaches the conditional mean, but no further.

This blindness is consequential, because the dependence that matters most is often in the tails. In finance, equity correlations that look modest in calm periods spike precisely during crashes \citep{app34}, and systemic risk and spillovers live in joint tail, which measures based on variance and correlation systematically understate \citep{app35,app36}. Heavy tails are the rule rather than the exception in financial and insurance data \citep{app31}. The same pattern holds in climate, where consequential events are extremes whose joint, compound occurrence is missed by analyses of mean dependence \citep{app37}. Crucially, the dependence structure in the tails can differ qualitatively from that in the bulk. In EU carbon markets the network of extreme dependence reorganises relative to that of mean dependence, with peripheral assets becoming central under stress \citep{app30}, and in heavy-tailed structural causal models the relationships among the extremes can differ from those in the centre of the distribution \citep{algo96}. A test that resolves only the conditional mean or covariance therefore recovers the bulk graph and is structurally blind to edges that act through scale or the tails.

This gap has motivated a literature on causal and graphical models built specifically for extremes \citep{algo95,algo96,algo97,algo98,app41}. That work is powerful but specialized: it relies on extreme-value machinery, typically treats continuous data restricted to the tails, in modest dimension, and stands apart from the constraint-based pipelines used in practice. We take a complementary route: instead of a bespoke extremal method, a general conditional independence test that brings location-scale and tail sensitivity into standard constraint-based discovery on mixed-type data.

Tests that look beyond the covariance already exist, and we do not claim that sensitivity as new. Kernel tests such as KCI \citep{ci24} and GKCM \citep{ci60}, nearest-neighbor conditional mutual information \citep{ci10}, and distance correlation \citep{ci55} all detect the dependence above. The spectral GCM (SGCM) \citep{ci67} has a broader detection class and, through characteristic kernel constructions, even reaches non-Euclidean data. The quantile regression partial copula test of \citet{ci90} is the closest predecessor in construction, forming copula residuals by the probability integral transform and testing their dependence. It estimates each conditional distribution by quantile regression over a grid of quantile levels, which is much heavier per query than our reused ridge fits.

The exact kernel and operator tests are impractical for the repeated querying PC demands (KCI is $O(n^3)$, and SGCM rests on a multiplier bootstrap evaluated only up to $n=600$ in its own study), but the fast residual tests are both quick and well calibrated, so speed alone no longer separates a test sensitive beyond covariance from one that relies on it.

The fast tests, though, split on what they give up. RCoT \citep{ci9} and the recent BLITZ \citep{ci100} are well calibrated but not natively mixed-type (RCoT is continuous only; BLITZ handles categoricals only through a jittered integer coding that is not invariant to relabeling; 
the native mixed-type kernel estimators \citep{ci41} that exist are among the slowest), while the Fourier feature CI test FFCI \citep{ci105} is native to mixed-type data but, like the others, untargeted and not robust in the regime PC stresses. On heavy-tailed conditioning sets its size climbs past nominal and keeps rising with sample size, and every fast competitor loses power as the conditioning set grows. None can be aimed at the scale or the tails, so all need far more data to detect dependence that lives purely in conditional skew (Section~\ref{sec:experiments}). Most importantly, none is hardened for the directional and conditioning asymmetries that arise inside PC. The gap this paper fills is therefore a conjunction: a fast, mixed-type test that can target the tails and holds its calibration and its power as both sample size and conditioning depth grow, hardened for constraint-based causal discovery.

We close this gap with the Generalised Feature Covariance Measure (GFCM): a regression-cost, mixed-type, well calibrated conditional independence test that runs correctly inside constraint-based causal discovery algorithms.

\paragraph{Contributions.} The contribution is not detection beyond covariance, which is established; that idea, quantile CI testing \citep{ci90}, and the spectral feature expansion \citep{ci67} all predate us. It is a construction that makes such a test usable inside constraint-based discovery: a correctness result, an identification theorem linking the test to the graph it recovers, two findings on PC integration, and a mixed-type construction.
\begin{enumerate}

\item \textbf{The test.} GFCM (Section~\ref{sec:method}) runs the GCM template at regression cost, on a configurable set of residual features targeting the mean, the variance, and the tails, rather than the single mean residual. It is mixed-type by construction.

\item \textbf{Running inside PC} (Section~\ref{sec:fixes}). A test sensitive beyond covariance can fail inside PC in two ways we identify and repair. \textbf{F1} (Proposition~\ref{prop:f1}): the mean-quantile construction is directionally asymmetric, so for a scale edge one orientation has no signal and PC deletes the true edge; running both orientations repairs it. \textbf{F2}: when PC conditions on a heavy-tailed variable, a spline nuisance with fixed knots loses calibration, its size inflating toward one; rank-transforming the conditioning set restores nominal level at the same cost.

\item \textbf{Theory} (Section~\ref{sec:theory}). We show that the scale residual feature (the absolute residual $|e|$ by default), which detects variance and scale dependence, is biased without centering, and that the centered version satisfies the orthogonality condition our validity theory requires (Lemma~\ref{lem:orth}). We also characterise the set's detection class: which alternatives it provably catches and which it misses (Proposition~\ref{thm:detection}). Validity then follows by verifying the GCM/DML conditions for each feature, binding on a nonparametric nuisance rate. Finally we connect the test to discovery: defining $\Phi$-faithfulness (Definition~\ref{def:phifaith}), we prove that PC with GFCM consistently recovers the CPDAG of the graph whose conditional independence relation is the set's detection class (Theorem~\ref{thm:phicpdag}), so what the set detects and what the algorithm recovers are the same object.

\item \textbf{Empirics} (Section~\ref{sec:experiments}). Across synthetic benchmarks, GFCM is the only test in our panel that keeps both its calibration and its size-corrected power as sample size and conditioning depth grow. On a heavy-tailed null it stays near nominal at the conditioning depths PC queries and approaches nominal with sample size, while the fast mixed-type test FFCI, the boosted GCM, and the partial copula test of \citet{ci90} see their size climb past $0.2$ and keep rising with $n$ (full sweep in Appendix~\ref{app:experiments}, Table~\ref{tab:conv-null}). Under size-corrected power it recovers the location-scale and tail edges the mean and covariance family misses, and at the deep conditioning sets PC issues it is the only test to retain power while the competitors fall to chance. It runs orders of magnitude faster than the kernel and operator tests and natively on mixed-type data. These advantages carry into discovery on random DAGs, where PC+GFCM attains the lowest SHD at scale among tests that control their false positives. They also carry onto real data through two semi-synthetic injections (financial innovations and the Causal Chamber light tunnel), where the covariance family fails to detect the scale and tail edges GFCM recovers.

\end{enumerate}

The remainder of the paper is organized as follows. Section~\ref{sec:prelim} sets up the preliminaries, DAGs, CPDAGs, the PC algorithm, and faithfulness, and the GCM template we build on. Section~\ref{sec:method}
presents GFCM, its feature set, and the two fixes that let it run inside PC.
Section~\ref{sec:theory} establishes validity, the detection class, and the $\Phi$-faithfulness
identification result. Section~\ref{sec:experiments} reports the simulation studies, the real-data
injections, and discovery inside PC. Section~\ref{sec:related}
reviews related work, and Section~\ref{sec:discussion} discusses scope, limitations, and
conclusions. Proofs are deferred to the appendix.

\section{Preliminaries}
\label{sec:prelim}

\paragraph{Graphs, $d$-separation, and Markov equivalence.} A directed acyclic graph (DAG) \citep{def2}
$G$ over a variable set $V$ encodes conditional independences through $d$-separation \citep{def4}:
disjoint sets $X,Y$ are $d$-separated by $S$ in $G$ if $S$ blocks every path between them. A
distribution $P$ is Markov to $G$ if $d$-separation implies conditional independence
($X$ $d$-separated from $Y$ by $S\Rightarrow X\indep Y\given S$), and faithful \citep{algo15} if the
converse holds as well. DAGs that encode the same $d$-separations form a Markov
equivalence class, summarized by a completed partially directed acyclic graph (CPDAG) \citep{def10}: the
shared skeleton (undirected adjacencies) together with the edge orientations common to the
whole class.

\paragraph{The PC algorithm.} PC~\citep{algo65} learns the CPDAG from
conditional independence queries in two phases. The skeleton phase begins with the
complete undirected graph and deletes an edge $X-Y$ as soon as some set $S$ drawn from the neighbours of
$X$ or $Y$ gives $X\indep Y\given S$; the orientation phase then orients $v$-structures and applies the Meek rules~\citep{algo55} to obtain the CPDAG. PC uses its
conditional independence test only as an oracle: if the test reports independence
exactly on the $d$-separations of $G$, and $P$ is Markov and faithful to $G$, PC
returns $\mathrm{CPDAG}(G)$. Moreover, any relation that coincides with $d$-separation on $(P,G)$
may replace full conditional independence without affecting the guarantee, a fact we use in
Section~\ref{sec:bankfaithful}. In finite samples the oracle is a statistical test, and
consistency of PC follows from consistency of the test across the queries it issues.

\paragraph{The Generalised Covariance Measure.} For continuous $X,Y$ and conditioning set
$Z$, write the regression residuals $e_X=X-\E[X\given Z]$ and $e_Y=Y-\E[Y\given Z]$.
GCM~\citep{ci32} tests $X\indep Y\given Z$ by studentizing the empirical mean of the product
$e_Xe_Y$; it has asymptotic level $\alpha$ under a product condition on the
estimation rates of nuisance functions (each $o_P(n^{-1/4})$), and is the nonparametric, symmetric
generalization of partial correlation, with population target the expected conditional
covariance $\E[\Cov(X,Y\given Z)]$. 
The weighted variant (WGCM,~\citealp{ci50}) stays within
the covariance family, while the projected variant (PCM,~\citealp{ci52}) targets the
conditional mean, testing $\E[Y\given X,Z]=\E[Y\given Z]$. GCM and WGCM thus have power zero
whenever $\Cov(X,Y\given Z)=0$, and PCM whenever the conditional mean is unchanged; all three
are blind to pure scale (location-scale) dependence, where $X$ alters the distribution of $Y$
but not its mean. GFCM removes this shared blind spot while keeping GCM's regression cost and validity machinery.

\section{Methods}
\label{sec:method}

\subsection{The Generalised Feature Covariance Measure}
\label{sec:gfcm}

GCM tests $X\indep Y\given Z$ through a single feature of each residual, the identity feature (the residual itself).
GFCM keeps that template but replaces the identity by a configurable set of
features chosen so the test sees beyond the conditional covariance. For a variable $A$
and conditioning set $Z$, let $e_A=A-\widehat\E[A\given Z]$ be the cross-fitted mean
residual. Let $c_A=|e_A|-\widehat\E[|e_A|\given Z]$ be the centered scale residual, using the
absolute residual by default for heavy-tail robustness (the squared form is an admissible
alternative); it has mean zero given $Z$ by construction, which the centering
Lemma~\ref{lem:orth} requires since the uncentered scale term is biased. Finally, let $r_\tau(A;Z)=\tau-\mathbf 1\{A\le
\widehat Q_\tau(A\given Z)\}$ be the quantile-indicator residual ($\E[r_\tau\given
Z]=0$ at the true conditional quantile; equivalently, under the location-scale model
$\widehat Q_\tau(A\given Z)=\widehat m_A(Z)+\widehat s_A(Z)\,\widehat q_\tau$ this is the
standardized form $\tau-\mathbf 1\{e_A/\widehat s_A\le\widehat q_\tau\}$ used in
Algorithm~\ref{alg:qgcm}). The set is
\[
\Phi(A;Z)=\big\{\,e_A,\;c_A,\;r_{\tau_1}(A;Z),\dots,r_{\tau_k}(A;Z)\,\big\},
\]
with default levels $\mathcal T=\{0.1,0.5,0.9\}$; for a categorical $A$ the set is the centered one-hot residuals $\mathbf 1\{A=c\}-\widehat\E[\mathbf 1\{A=c\}\given
Z]$.

From the admissible pairs (one feature from $\Phi(X;Z)$, one from $\Phi(Y;Z)$, at
least one with mean zero given $Z$) GFCM forms three blocks of per-observation
products: the moment block $\{e_Xe_Y,\,c_Xe_Y,\,e_Xc_Y\}$ and the two quantile
orientations $\{e_X\,r_\tau(Y)\}_\tau$ and $\{r_\tau(X)\,e_Y\}_\tau$. Every product in the
default set pairs a higher-order feature of one variable with the mean residual of the
other; products pairing two higher-order features (such as $c_Xc_Y$) are admissible under
the validity theory but excluded from the default set: they target a distinct and narrower
class of co-dispersion alternatives (conditional co-volatility and joint tail dependence) rather
than the location, scale, and shape dependence GFCM targets, and are better activated as a
domain-specific configuration (Section~\ref{sec:discussion}, item ix).

Each block is
reduced to one $p$-value by a multivariate GCM quadratic form, calibrated by its asymptotic
$\chi^2$ reference by default (a sign-flip bootstrap is exact in finite samples,
Lemma~\ref{lem:signflip}), and
the three block $p$-values are combined by the Cauchy (ACAT) rule~\cite{ci99}.
Pooling coherent evidence within a block avoids the dilution of treating every product
separately, while keeping the orientations in distinct blocks preserves validity when one
is structurally null (Proposition~\ref{prop:f1}).

The members target distinct dependence:
$e_Xe_Y$ is the conditional
covariance; $c_Xe_Y$ and $e_Xc_Y$ catch non-monotone mean and
location-scale dependence (where $\Cov(X,Y\given Z)=0$); the quantile products
$e_X\,r_\tau(Y)$ catch dependence in the conditional distribution at a chosen
level, including changes of tail shape that preserve the mean and variance, which the moment features miss. Ordinary GCM is the special case of a single feature. No CI test controls
size and has power against every alternative~\cite{ci32}, and GFCM does not evade this
barrier. Instead, the set fixes a detection class, and the detection class
characterisation (Section~\ref{sec:theory}) states which alternatives a given set
catches and provably misses, so one configures the set for the dependence of interest
(moments for mean/scale, quantiles for tails) rather than claiming universality; the choice
of features changes the test's power, not its level (Corollary~\ref{cor:validinv}).

\paragraph{Nuisance functions.} Validity requires the nuisances estimated at a
nonparametric rate (Section~\ref{sec:theory}); a fixed parametric design of low degree does
not achieve it and distorts size under misspecification
(Section~\ref{sec:nuisance}). GFCM's default nuisance is a cubic spline basis whose knot count grows with $n$,
$K(n)\approx 10\,n^{1/5}$ (rounded; floored at $25$ for small $n$, capped at $150$), on
rank-transformed $Z$: nonparametric, so its approximation error vanishes as $n$ grows and
orthogonality holds, yet every fit is ordinary ridge least squares on that basis.
The growing rule is what makes assumption (A1) bind in practice rather than asymptotically:
a fixed basis leaves an approximation bias the $\sqrt n$ statistic amplifies, drifting size
upward on a misspecified mean, while $K(n)$ holds the level at regression cost across $n$
(Section~\ref{sec:nuisance}).

The conditional mean $\widehat\E[A\given Z]$
and the conditional scale $\widehat s(Z)=\widehat\E[|e_A|\given Z]$ (the same fit that
defines $c_A$) are two such regressions; the conditional
quantile is then a location-scale model
$\widehat Q_\tau(A\given Z)=\widehat\E[A\given Z]+\widehat s(Z)\,\widehat q_\tau$, with
$\widehat q_\tau$ the empirical $\tau$-quantile of the standardized residual
$e_A/\widehat s(Z)$, so the quantile feature reuses the
mean and scale fits and needs no separate quantile regression (the squared-residual
variant instead standardizes by $\widehat s(Z)^2=\widehat\E[e_A^2\given Z]$). 

This makes the
same backend both fast (regression cost, scaling in $n$ to $10^5$:
Section~\ref{sec:speed}, Figure~\ref{fig:speed}) and robust against the bias that a fixed
parametric design incurs when misspecified. The conditioning design augments the rank-$Z$ spline with three blocks that the rank transform
alone cannot represent. First, a raw standardized linear-$Z$ block, which restores the scale the
rank transform discards and is what holds the level on heavy-tailed conditioners (the rank-$Z$
spline cannot represent a heavy-tailed conditioner's linear effect). Second, for $2\le|S|\le20$,
a polynomial of degree 3 interaction block, which protects validity against bilinear and product
confounders the additive spline misses, dropped past $|S|=20$ where the test is impractical
regardless. Third, at large $n$ with $|S|\ge2$, a cross-fitted single-index block, a
spline of degree 3 on each estimated index $\hat a^\top Z$ with $\hat a$ the average derivative
(ridge OLS) direction of $X$ and of $Y$ on $Z$, which captures non-additive means such as
$\sin(\sum_j Z_j)$ that neither the additive spline nor the interaction block of low order fits. The
single-index block is gated to large $n$,\footnote{Operationally the single-index block is
activated at $n\ge5000$ with $|S|\ge2$. The threshold comes from a crossover study on a
non-additive single-index mean ($\sin(\sum_j Z_j)$ at depth $|S|=3$): the single-index design
ties the additive design at $n=2000$ and overtakes it from $n\approx4000$, bounding the Type-I
rate to $0.04$--$0.13$ while the purely additive design's size climbs from $0.10$ to $0.47$ to
$0.99$, and reaching nominal ($\approx0.06$) by $n\approx2\times10^{4}$. The gate at $n=5000$ is a
safe margin above the crossover.} where the index direction is estimable, and reduces to
the additive path below the gate. The ridge penalty (selected by GCV on the scale regression,
a fixed small value on the mean regression, since GCV on the mean overfits the
heteroscedastic nulls) shrinks these blocks to negligible weight when they are not needed. The location-scale quantile residual is exactly mean zero given $Z$ under a
location-scale null, and its bias is otherwise second-order, like any nuisance error
(Section~\ref{sec:theory}). 

One caveat: the location-scale form is itself a shape
assumption. When the true conditional shape is not location-scale, level is still
controlled (the bias is second-order), but the estimand drifts from the population
$\tau$-quantile, so the clean effect curve reading of Proposition~\ref{thm:detection}
becomes approximate.

\paragraph{Calibration.} Each block's multivariate GCM quadratic form is calibrated by
default against its asymptotic $\chi^2_k$ reference (Theorem~\ref{thm:validity}), which
needs no resampling and is the calibration we run. A sign-flip (Rademacher) bootstrap of
its per-observation scores is the finite-sample alternative (Lemma~\ref{lem:signflip}):
exact under conditional sign symmetry (asymptotic, not exact, under conditional skew,
the regime it targets in the tail), robust to the unbounded moment features, and still
valid when a block covariance is degenerate. The two agree at every sample size we tested.

The Cauchy
(ACAT) combination~\cite{ci99} of the three block $p$-values is asymptotically
valid under arbitrary dependence between blocks and avoids both Bonferroni
conservativeness and a degenerate joint Wald: a pure scale edge makes one orientation
exactly null (Proposition~\ref{prop:f1}), which would inject a block with zero signal into a
single joint quadratic form, whereas a separate $p$-value per block leaves the other
blocks unaffected.

\subsection{Running GFCM inside PC}
\label{sec:fixes}

A distribution test that works on a single triple can still fail inside PC, where
an edge is deleted if any conditioning set yields independence. Two issues arise:
one specific to our asymmetric construction (F1), one shared by any fast residual test with a fixed nuisance (F2). GFCM absorbs both.

\paragraph{Directional asymmetry (F1).} The cross-products are not
symmetric in $(X,Y)$: $e_X\,r_\tau(Y)$ residualizes $X$ at the mean and $Y$ at a
quantile. For a pure scale edge $X\to Y$ (with $Y$ having mean zero given $X$), the orientation
that residualizes the driven variable $Y$ at the mean has population signal exactly zero
at every level and for any nuisance (Proposition~\ref{prop:f1}), $Y$ is uncorrelated
with every function of $X$, so used alone it falsely reports independence and PC deletes
the edge. This risk of deleting a true edge is specific to a mean-quantile cross-product; a
symmetric test (vanilla GCM, RCoT), having no preferred orientation, is not exposed to it.

We keep the asymmetric construction not to patch a
self-made defect but because it targets a directed, level-driven alternative, ``$X$'s
level drives $Y$'s distribution'', that a symmetric quantile-quantile product cannot: on the skew edge that preserves mean and variance the asymmetric product has power $1.00$ versus
$0.22$ for the symmetric one, which instead catches tail co-movement
(Section~\ref{sec:ablation}, Table~\ref{tab:asym}); the per-quantile effect curve and targeting at tail levels follow from the same factor (Section~\ref{sec:theory}). 

F1 is a consequence of that directed targeting; running both orientations
($e_X\,r_\tau(Y)$ and $r_\tau(X)\,e_Y$) removes it, detecting the
edge whichever way it is driven.

\paragraph{Nuisance fragility (F2).} A fast, fixed nuisance is fragile in two ways that
both surface inside PC. When a conditioning set contains a heavy-tailed variable, a spline
with knots spaced over the raw range wastes its resolution in the tails and underfits the
bulk mean, so the leaked residual inflates the null size (to $1.00$ on a Cauchy conditioner;
Table~\ref{tab:rankz}). And when the conditional mean is wiggly and non-polynomial, a
basis of fixed size cannot track it, and the approximation bias, amplified by $\sqrt n$, drives
level toward $1.00$ as $n$ grows (Section~\ref{sec:nuisance}).

Both are failures of a fast fixed
nuisance, not of the quantile target, and both are removed by GFCM's default growing-knot
spline on rank-$Z$ at regression cost. Unlike F1, this is not specific to our
test: any CI test based on regression residuals with a cheap fixed nuisance inherits it.
Together these take recovery of scale edges inside PC from $0\%$ to full recovery in our
experiments (Section~\ref{sec:experiments}). Algorithm~\ref{alg:qgcm} summarizes GFCM.

\begin{algorithm}[t]
\caption{GFCM conditional independence test}
\label{alg:qgcm}
\begin{algorithmic}[1]
\Require $X,Y$, each real-valued or categorical; mixed-type conditioning set $Z$; levels
$\mathcal{T}$, cross-fit folds $K_{\mathrm{fold}}$. A categorical variable uses its centered one-hot residual
set in place of $e,c,r_\tau$ (Section~\ref{sec:mixed}); the steps below show the
path for continuous variables.
\State $K_n \gets \min\{150,\,\max(25,\,\mathrm{round}(10\,n^{1/5}))\}$ \Comment{spline knots grow with $n$ (A1)}
\State $\tilde Z \gets \textsc{SplineBasis}_{K_n}(\textsc{RankTransform}(Z))$, with a raw standardized linear-$Z$ block and, for $2\le|Z|\le20$, a degree-3 interaction block \Comment{nonparametric design}
\State \textbf{if} $|Z|\ge2$ and $n\ge n_0$ (default $n_0{=}5000$) \textbf{then} append to $\tilde Z$ a cross-fitted single-index spline on $\widehat a^\top Z$, with $\widehat a$ the per-fold ridge-OLS (Stein) index direction \Comment{corrects non-additive single-index means at large $n$}
\State $e_X,e_Y \gets$ cross-fitted mean residuals (ridge with fixed penalty on $\tilde Z$)
\State $\alpha \gets$ ridge penalty selected by GCVfor the scale fit \Comment{data-adaptive}
\State $c_X\gets |e_X|-\widehat\E_\alpha[|e_X|\mid\tilde Z]$,\; $c_Y\gets |e_Y|-\widehat\E_\alpha[|e_Y|\mid\tilde Z]$ \Comment{centered scale ($|e|$ default, or $e^2$)}
\State $\widehat s_X,\widehat s_Y \gets$ cross-fitted conditional scale (from the scale fit) \Comment{for standardization}
\State $r_\tau(X),r_\tau(Y) \gets \tau-\mathbf 1\{e/\widehat s\le \widehat q_\tau\},\ \tau\in\mathcal T$ \Comment{quantile residuals}
\State $B_1\gets\{e_Xe_Y,\,c_Xe_Y,\,e_Xc_Y\}$ \Comment{moment block}
\State $B_2\gets\{e_X\,r_\tau(Y)\}_{\tau}$,\; $B_3\gets\{r_\tau(X)\,e_Y\}_{\tau}$ \Comment{quantile orientations}
\State for each block $B_j$:\; $p_j \gets$ analytic $\chi^2_k$ $p$-value of its multivariate GCM quadratic form (or sign-flip)
\State \Return $\textsc{Cauchy}(p_1,p_2,p_3)$ \Comment{ACAT}
\end{algorithmic}
\end{algorithm}

For a categorical variable the mean/quantile features are replaced by the
centered one-hot residuals (Section~\ref{sec:mixed}); the
quadratic transforms are dropped (redundant for a two-valued residual), and the
continuous side keeps its set.

\subsection{Mixed type data}
\label{sec:mixed}

GFCM is mixed-type through the same feature set. For a categorical variable the set is
the centered one-hot residuals $\mathbf 1\{A=c\}-\widehat\E[\mathbf 1\{A=c\}\given
Z]$, with $\widehat\E[\mathbf 1\{A=c\}\given Z]$ estimated by a cross-fitted ridge least-squares fit of the
indicator on the same spline design as the continuous nuisances. The GCM product needs only a consistent
estimate of the conditional mean $\E[\mathbf 1\{A=c\}\given Z]=\Pr(A=c\given Z)$, not a logistic
link; the least-squares fit supplies this while sharing the per-fold Cholesky factorization,
so a $J$-level categorical residualizes at the cost of $J-1$ extra least-squares targets and is no
more expensive than a continuous variable. (Fitted values may fall outside $[0,1]$; only the
residual enters the statistic, and the sign-flip / $\chi^2$ calibration is unaffected.) These
residuals have mean zero given $Z$, so the validity argument
(Section~\ref{sec:theory}) applies: it uses only that one factor of each product has
conditional mean zero, regardless of variable type, and the bounded categorical residual
meets the moment conditions automatically, with the density condition (A3) vacuous in the
absence of a continuous quantile feature. The quadratic transforms are
dropped for a binary residual (where they are redundant); a continuous variable keeps
its full set. 

For a $J$-level categorical, against each feature of the
other variable we test the $(J-1)$-vector of products with a single multivariate GCM
quadratic form (a sum-of-squares aggregation of the per-level statistics, $\chi^2$
calibrated), and combine these per-feature $p$-values by Cauchy.

This sum-of-squares pooling of the $J-1$ levels contrasts with the
original GCM's max-norm aggregation of multivariate residuals~\citep[\S3.2]{ci32}: a
categorical location shift moves many of the $J-1$ levels together, a dense signal the
quadratic form pools rather than reduces to a maximum.

The conditioning design is itself aware of variable type, continuous $Z$-columns enter
through the spline basis, categorical $Z$-columns through one-hot encoding (splining a categorical variable coded as integers is a calibration trap). Unlike covariance tests on dummy-coded
categoricals, the centered one-hot residual is invariant to relabeling the categories.

\section{Theoretical properties}
\label{sec:theory}

\subsection{Validity}
Each GFCM component is a product $\psi=\phi_X(X;Z)\,\phi_Y(Y;Z)$ in which at least one
factor is a residual that is conditionally centered (mean zero given $Z$) under the null.
The two factors are then conditionally independent given $Z$, so $\E[\psi\given Z]=0$ and
every component score has conditional mean zero (Lemma~\ref{lem:meanzero}). The centered factor also delivers
Neyman orthogonality: a perturbation of the other factor's nuisance (the mean
$\widehat\E[A\given Z]$, or, for a quantile factor, $\widehat Q_\tau$, which pulls in a
conditional density term) enters multiplied by that centered factor and so has zero
first-order effect. With cross-fitting, the surviving remainder is the product of
the two nuisance estimation errors, which is $o_P(n^{-1/2})$ when both are estimated at
$o_P(n^{-1/4})$ (Theorem~\ref{thm:validity}); each block's quadratic form statistic is then
asymptotically calibrated against its $\chi^2_k$ reference, and the Cauchy combination is asymptotically valid under
arbitrary dependence between blocks. The quantile and centered one-hot residuals are bounded
($|r_\tau|\le1$), so heavy-tailed or zero-inflated data does not threaten the component
CLTs.

The rate condition is the assumption that does the work here. The orthogonal score removes the
first-order nuisance error, leaving the product remainder above, which vanishes only if
both nuisances are consistent. A fixed parametric design (a low-degree polynomial) is
generally misspecified: it converges to a pseudo-true nuisance at which the factor no longer has mean zero, so the score acquires a bias and the test distorts size, not
merely power (Section~\ref{sec:nuisance} measures a null rejection rate up to $0.98$).
The spline basis is nonparametric: its approximation error vanishes as the knot count
grows with $n$, so orthogonality is restored, and this comes at regression cost.

Theorem~\ref{thm:validity} gives pointwise asymptotic
level, the standard guarantee in the constraint-based CI testing literature: KCI~\citep{ci24}
and RCoT~\citep{ci9} likewise establish pointwise asymptotic null distributions and handle
the regression nuisance at the level of consistency. Our hypotheses are, if anything,
stronger: (A2) imposes the double machine learning rate $o_P(n^{-1/4})$ rather than mere
consistency, and the sign-flip calibration (Lemma~\ref{lem:signflip}) is finite-sample exact
under conditional sign symmetry and otherwise converges to its $\chi^2$ reference at the
$O_P(n^{-1/2})$ Berry--Esseen rate. A uniformly valid version, uniform asymptotic
level over a class of nulls with explicit size rates, follows the GCM template
of~\citet{ci32}; we state the pointwise result, which suffices for the discovery guarantee
of Section~\ref{sec:bankfaithful}.

\subsection{Detection class of a feature set}
\label{sec:detection}
For a feature pair $(\phi_X,\phi_Y)$ define the effect $\beta_{\phi_X\phi_Y}=\E[\phi_X(X;Z)\,\phi_Y(Y;Z)]$, zero under the null. GFCM is consistent against an alternative iff
$\beta_{\phi_X\phi_Y}\neq0$ for some pair in the set, and powerless iff every
$\beta_{\phi_X\phi_Y}=0$. This fixes a precise detection class: the covariance pair detects
linear mean dependence; the quadratic pairs $c_Xe_Y,\,e_Xc_Y$ detect
non-monotone mean ($\Cov(X,X^2)=0$) and location-scale dependence; the quantile pairs
detect dependence in the conditional distribution at the chosen levels, including
changes of tail shape that preserve mean and variance and that every moment pair misses.
Consistent with the hardness of CI testing~\cite{ci32}, GFCM is not universal: it has
power zero against any alternative orthogonal to every feature in the set, dependence
that lives between the chosen quantile levels and in moments beyond the second. Testing a
finite set realizes Daudin's weak, or mean, conditional independence~\citep{ci107}:
the vanishing of $\E[\phi_X\phi_Y]$ over a fixed feature set is necessary but not sufficient
for full conditional independence, so this blind spot is intrinsic to any set, not an
artefact of ours. The
quantile part is characterized exactly by the quantile effect curve
$\beta(\tau) = \E\!\big[e_X\,(\tau-\mathbf{1}\{Y\le Q_\tau(Y\given Z)\})\big]$.

\begin{proposition}[Detection class]
\label{thm:detection}
Under the regularity conditions of the validity result, GFCM with feature set $\Phi$ is
consistent against an alternative iff $\beta_{\phi_X\phi_Y}\neq0$ for some admissible pair
$(\phi_X,\phi_Y)$, and has asymptotic power equal to its level iff $\beta_{\phi_X\phi_Y}=0$ for
all of them. For the quantile subset at levels $\mathcal{T}=\{\tau_1,\dots,\tau_k\}$
this specializes to the effect curve $\beta(\tau)$: (i) a location shift is
detected at any single level, including the median; (ii) for a scale shift with
symmetric noise, $\beta(0.5)=0$ and $\beta$ is odd around $0.5$, so $\{0.5\}$ is
blind but any $\tau\neq0.5$ detects it (under skewed noise the median also moves, so
$\{0.5\}$ may detect it as well); (iii) a finite set misses exactly the alternatives whose
effect curve vanishes at all chosen levels but not between them. The moment pairs add
their own coordinates: $e_Xe_Y$ detects linear mean dependence, and $c_Xe_Y,\,e_Xc_Y$
detect non-monotone mean and location-scale dependence where $\Cov(X,Y\given Z)=0$.
\end{proposition}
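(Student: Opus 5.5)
The argument has two layers. The first is a general dichotomy for any feature set, and it comes from the validity machinery. The second is an explicit computation of the quantile effect curve $\beta(\tau)$ and of the moment pairs.

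\emph{General dichotomy.} Fix a block $B_j$ with product vector $\psi_j=(\phi_X\phi_Y)$ and population mean vector $\beta_j$. The cross-fitted, Neyman-orthogonal expansion behind Theorem~\ref{thm:validity} (Lemma~\ref{lem:meanzero}, Lemma~\ref{lem:orth}) does not use the null except to set the centre. Under the same rate conditions (A1)--(A3), it therefore gives
\[
\sqrt n(\bar\psi_j-\beta_j)=n^{-1/2}\sum_i(\psi_{j,i}-\beta_j)+o_P(1),
\]
together with a consistent covariance estimate $\widehat\Sigma_j\to\Sigma_j$. If $\beta_j\neq0$, the quadratic form $n\bar\psi_j^\top\widehat\Sigma_j^{-1}\bar\psi_j$ grows like $n\,\beta_j^\top\Sigma_j^{-1}\beta_j\to\infty$. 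Here a pseudo-inverse together with the sign-flip route handles degenerate $\Sigma_j$; if $\beta_j$ falls outside its range, the statistic still diverges under sign-flip calibration. Hence $p_j\to0$ in probability.

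The Cauchy combination then inherits this: $\tan\{(1/2-p_j)\pi\}\to\infty$ while the other terms are $O_P(1)$ or positive, so the combined $p$-value tends to $0$. This gives consistency whenever some admissible $\beta_{\phi_X\phi_Y}\neq0$. Conversely, if every $\beta$ vanishes, each block's centred statistic has the same $\chi^2_k$ limit as under the null. The only moment property the limit used was $\E\psi=0$ plus the orthogonality-driven remainder bound; this bound needs only that the nuisance perturbation enters multiplied by a factor whose relevant cross-moment vanishes, and I would check that it still holds. The ACAT validity under arbitrary dependence then yields asymptotic power equal to $\alpha$.

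I expect this converse to be the main obstacle. Off the null, the factors need not be conditionally independent, so first-order nuisance terms such as $\E[\partial\phi_X\cdot\phi_Y\mid Z]$ need not vanish. I would handle it by stating the converse for alternatives in which each product retains a conditionally centred factor, which is the structure guaranteed by admissibility. If needed, I would interpret the claim as "power tends to the level" only for that class.

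\emph{Effect curve.} For (i)--(iii), write $\beta(\tau)=\E[e_X(\tau-F_{Y\mid X,Z}(Q_\tau(Y\mid Z)))]$ by conditioning on $(X,Z)$.

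(i) For a location shift $Y=g(Z)+h(X,Z)+\varepsilon$ with $h$ correlated with $e_X$ given $Z$, the map $x\mapsto F_\varepsilon(Q_\tau-g-h)$ is strictly monotone in $h$ wherever the density is positive, under (A3). Its covariance with $e_X$ is therefore nonzero at every $\tau$, including $0.5$.

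(ii) For a scale shift $Y=g(Z)+s(X,Z)\varepsilon$ with $\varepsilon$ symmetric, $Q_{0.5}(Y\mid Z)=g(Z)$, so the indicator has conditional probability $1/2$ irrespective of $X$. This gives $\beta(0.5)=0$. Symmetry gives $F_\varepsilon(-u)=1-F_\varepsilon(u)$ and $Q_{1-\tau}-g=-(Q_\tau-g)$, which yields $\beta(1-\tau)=-\beta(\tau)$. For $\tau\neq0.5$, the term $F_\varepsilon(q/s)$ is strictly monotone in $s$, so $\beta(\tau)\neq0$ whenever $\Cov(e_X,s\mid Z)\neq0$. With skewed noise, the median of $s\varepsilon$ depends on $s$, so $\beta(0.5)$ may be nonzero.

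(iii) This is immediate from the general dichotomy restricted to the quantile blocks.

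\emph{Moment pairs.} For the moment pairs, $\E[e_Xe_Y]=\E\Cov(X,Y\mid Z)$. Next, $\E[e_Xc_Y]=\E\Cov(X,|e_Y|\mid Z)$, which is nonzero under location-scale dependence with $\Cov(X,s\mid Z)\neq0$. Finally, $\E[c_Xe_Y]=\E\Cov(|e_X|,\E[Y\mid X,Z]\mid Z)$, which picks up non-monotone (e.g.\ even) mean dependence with zero covariance.
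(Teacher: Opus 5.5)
Your overall structure matches the paper's: the block Wald statistic diverges iff some $\beta_{\phi_X\phi_Y}\neq0$, ACAT passes this through, and the effect curve is then analysed case by case. You are, if anything, more explicit than the paper on the oddness $\beta(1-\tau)=-\beta(\tau)$ and on the moment-pair identities, which the paper states without proof.

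\textbf{The converse.} The real gap is the converse (all $\beta=0$ implies power tends to the level), which you flag but do not close. Restricting to ``alternatives in which each product retains a conditionally centred factor'' restricts nothing. At their true nuisances, $e_X$, $c_X$, $r_\tau$ and the one-hot residuals have mean zero given $Z$ under every distribution, so admissibility holds off the null automatically and does not touch the term you identified.

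Take $e_X\,r_\tau(Y)$. Shifting the threshold by $\Delta(Z)$ moves the moment, to first order, by $-\E\big[\Delta(Z)\,\E[e_X\,f_{Y\mid X,Z}(Q_\tau\mid X,Z)\given Z]\big]$. Under $X\indep Y\given Z$ the density factors out and $\E[e_X\given Z]=0$ kills it. Off the null it survives whenever $X$ moves the conditional density at $Q_\tau$, and this is compatible with every $\beta$ vanishing.

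For a concrete case, let $Z$ be trivial, $\E X=0$ and $X$ bounded. Take $F_{Y\mid X}(y\mid x)=F_0(y)+\epsilon\,x\,w(y)$, so that $F_Y=F_0$; let $\mu_0$ be the mean of $F_0$. Choose $w$ with $\int w=\int\mathrm{sign}(y-\mu_0)\,w(y)\,dy=0$ and $w(Q_{\tau_i})=0\neq w'(Q_{\tau_i})$. Then every moment and quantile effect is zero, yet $\E[e_X f_{Y\mid X}(Q_{\tau_i}\mid X)]=\epsilon\,\E[X^2]\,w'(Q_{\tau_i})\neq0$. The statistic therefore carries a first-order term of order $\sqrt n\,\|\widehat Q_\tau-Q_\tau\|$. (A2) does not make this $o_P(1)$, and $\widehat R$ does not account for it.

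So the converse needs an extra hypothesis. One option is orthogonality at the alternative: $\E[\phi_X\,\partial_\eta\E[\phi_Y\given X,Z]\given Z]=0$ in every nuisance direction $\eta$. Another is $o_P(n^{-1/2})$ nuisance error in the non-orthogonal directions. The paper's own proof handles this step only by citing Lemma~\ref{lem:orth}, which is proved under conditional independence, so the same caveat applies there.

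\textbf{Two smaller repairs.}
\begin{enumerate}
\item The consistency direction does not need a $\sqrt n$-expansion centred at $\beta_j$. By your own later remark, you do not have one off the null anyway. Consistency of the nuisances already gives $\bar\psi_j\to\beta_j$ in probability, which is all the divergence argument uses.
\item In (i) and (ii), the step ``strictly monotone in $h$ (resp.\ $s$), hence nonzero covariance with $e_X$'' needs $h(\cdot,z)$ (resp.\ $s(\cdot,z)$) to be monotone in $x$. Chebyshev's association inequality then gives a strict sign. A nonzero $\Cov(e_X,h\given Z)$ by itself is not preserved under a nonlinear monotone transform. The paper avoids this with the first-order linearization $g_\tau\approx-f_{Y\mid Z}(Q_\tau)\,\delta(X)$, under which $\E[e_X\,\delta(X)]\neq0$ suffices. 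Either state the monotonicity assumption or work in that small-effect regime.
\end{enumerate}
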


Proposition~\ref{thm:detection} justifies the default $\{0.1,0.5,0.9\}$ (the median
detects a location shift directly, while the symmetric tail pair $\{0.1,0.9\}$ catches
scale and shape dependence, to which the median is blind), states the blind spot
honestly, and gives a design rule (match the grid to the shape of interest, adding an
intermediate level when a specific alternative calls for it). It is a
statement about the power geometry of finite quantile sets, distinct from the
DML validity machinery~\cite{ci25}. Proof sketches are in
Appendix~\ref{app:proofs}; the claims are validated empirically
(Figure~\ref{fig:detection}).

We are careful about what this dial buys. In unsupervised discovery one cannot tune $\tau$
per edge: GFCM ships a fixed default grid and inherits its blind spots, so the configurability
is a lever for supervised, targeted analysis (a known tail level of interest), not an advantage at discovery time over an undirected operator such as SGCM. ``No expert knowledge''
here means only that GFCM needs no model of where the heteroscedasticity lives, unlike
ParCorr-WLS.

\begin{figure}[ht]\centering
\includegraphics[width=0.82\textwidth]{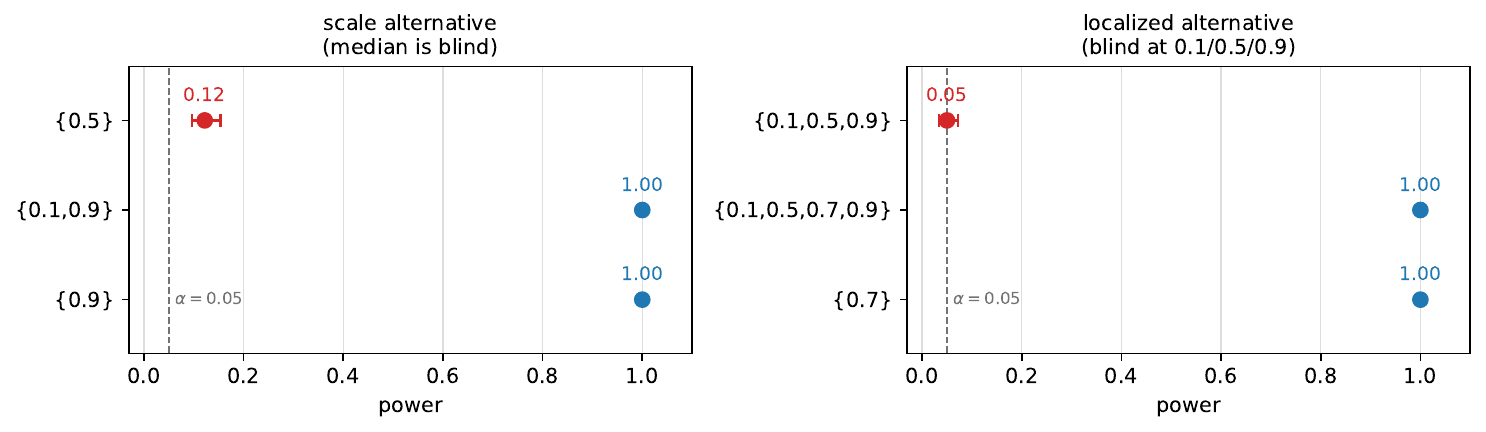}
\caption{\textbf{Detection class of the quantile subset.} $n=2000$, $500$
replications; points are rejection rates, whiskers are Wilson $95\%$ intervals.
Both panels isolate the single decisive level. Left: a scale alternative is invisible
to the median $\{0.5\}$ (power $0.12$) but caught by any grid containing a tail level,
including $\{0.9\}$ alone ($1.00$). Right: an alternative whose effect curve vanishes at
$0.1,0.5,0.9$ escapes that grid (power $0.05$) yet is recovered by adding $\tau=0.7$
($1.00$) or testing $\{0.7\}$ alone ($1.00$). Red markers flag the blind grids; the grey
line is the nominal level $\alpha=0.05$.}
\label{fig:detection}
\end{figure}

\subsection{$\Phi$-faithfulness}
\label{sec:bankfaithful}
Inside PC, GFCM does not test full conditional independence; it tests the configured
null that every set cross-covariance vanishes. The object PC then recovers is an
identification target defined by the set, and naming it precisely is what lets a
consistency statement go through. Write $X\indep_\Phi Y\given Z$ for $\Phi$-independence: $\beta_{\phi_X\phi_Y}(X,Y\given Z)=0$ for every admissible pair $(\phi_X,\phi_Y)$ in
the set (Section~\ref{sec:gfcm}). True conditional independence implies $\Phi$-independence,
$X\indep Y\given Z\Rightarrow X\indep_\Phi Y\given Z$, but not conversely: a dependence
living entirely in the set's blind spot (Proposition~\ref{thm:detection}(iii)) satisfies
$\indep_\Phi$ while violating $\indep$.

\begin{definition}[$\Phi$-faithfulness]
\label{def:phifaith}
A distribution $P$ is $\Phi$-faithful to a DAG $G$ if, for all disjoint $X,Y,Z$,
$X\indep_\Phi Y\given Z$ holds if and only if $X$ is $d$-separated from $Y$ by $Z$ in $G$.
\end{definition}

Definition~\ref{def:phifaith} is the $\Phi$-relative form of the faithfulness assumption every
constraint-based method makes relative to the independence relation its test actually checks.
Linear-Gaussian PC with partial correlation assumes exactly this for the covariance relation,
that a vanishing partial correlation coincides with $d$-separation (linear
faithfulness; \citealp{algo15,algo34}); $\Phi$-faithfulness generalizes it from the
single covariance relation to the detection class of an arbitrary features set. The forward
direction is free: under the global Markov property $d$-separation implies
$\indep$, hence $\indep_\Phi$. The reverse direction is the assumption, and it is
stronger than ordinary faithfulness precisely because $\indep_\Phi$ is the coarser
relation: it requires every $d$-connection to leave a signal somewhere in the set,
not merely some conditional dependence to exist. An edge whose dependence is confined to the
set's blind spot violates $\Phi$-faithfulness.

\begin{theorem}[$\Phi$-CPDAG consistency]
\label{thm:phicpdag}
Let $P$ be global Markov and $\Phi$-faithful to a DAG $G$. Then PC using $\indep_\Phi$ as its
conditional independence oracle returns the CPDAG of $G$. With GFCM as the test at a level
sequence $\alpha_n\to0$, valid under the conditions of Theorem~\ref{thm:validity} and
consistent against every $\Phi$-detectable separation (Proposition~\ref{thm:detection}), the
empirical output converges in probability to $\mathrm{CPDAG}(G)$.
\end{theorem}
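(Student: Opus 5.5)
The proof has two parts: a population (oracle) statement and a finite-sample consistency statement. The first reduces to the standard PC correctness theorem; the second is a union bound over the queries PC can issue.

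\textbf{Oracle part.} By Definition~\ref{def:phifaith}, on every triple $(X,Y,Z)$ of disjoint sets the relation $\indep_\Phi$ coincides with $d$-separation in $G$. The forward direction uses the global Markov property together with $\indep\Rightarrow\indep_\Phi$; the reverse direction is the assumption itself. As noted in Section~\ref{sec:prelim}, PC uses its test only as a black-box oracle. Its correctness proof (Spirtes et al.; Kalisch--B\"uhlmann) relies only on the oracle agreeing with $d$-separation of $G$ on every query issued.

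So I would retrace the standard argument with $\indep_\Phi$ in place of $\indep$:
\begin{enumerate}
\item Skeleton phase. Adjacent pairs are never separated, since no set $d$-separates adjacent vertices. Every non-adjacent pair is separated by some subset of the current neighbours of $X$ or $Y$, namely $\mathrm{pa}(X)$ or $\mathrm{pa}(Y)$. These neighbourhoods always contain the true adjacencies, because true edges are never removed.
\item Orientation phase. The recorded separating sets correctly identify unshielded colliders, and Meek's rules then yield $\mathrm{CPDAG}(G)$.
\end{enumerate}
No property of $\indep_\Phi$ beyond this agreement (for instance semi-graphoid axioms) is needed, because agreement with $d$-separation already supplies them.

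\textbf{Sample part.} Fix $p=|V|$ finite. PC issues queries only from the finite collection $\mathcal Q$ of triples $(X,Y,S)$ with $S\subseteq V\setminus\{X,Y\}$, and $|\mathcal Q|\le p^2 2^{p}$. Let $E_n$ be the event that, for every $q\in\mathcal Q$, GFCM at level $\alpha_n$ reports independence exactly when $\indep_\Phi$ holds at $q$. On $E_n$, the empirical run of PC coincides query by query with the oracle run, so it outputs $\mathrm{CPDAG}(G)$. It therefore suffices to show $\Pr(E_n)\to1$, and by a union bound over $\mathcal Q$ it suffices to control each query separately.
\begin{itemize}
\item \emph{Null queries} ($\indep_\Phi$ holds, equivalently $d$-separation). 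I need $\Pr(p\text{-value}\le\alpha_n)\to0$. Theorem~\ref{thm:validity} gives an asymptotic $\chi^2$ limit for each block statistic, so each block $p$-value converges in law to a uniform. Hence $\Pr(p_j\le\alpha_n)\to0$ for any $\alpha_n\to0$. The Cauchy combination satisfies $\Pr(\mathrm{ACAT}\le\alpha_n)\le\sum_j\Pr(p_j\le c\,\alpha_n)$ asymptotically for a constant $c$, via the tail bound of the Cauchy statistic. Degenerate cases are handled by the sign-flip calibration of Lemma~\ref{lem:signflip}.
\item \emph{Alternative queries} ($d$-connection). By $\Phi$-faithfulness some admissible pair has $\beta_{\phi_X\phi_Y}\ne0$. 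That block's quadratic form then grows like $n\|\beta\|^2$, so its $p$-value is $\exp(-cn)$-small in probability. The ACAT $p$-value is at most a constant times the minimum block $p$-value, so it also tends to $0$. I need $\alpha_n$ to vanish slower than this, e.g.\ $\alpha_n\to0$ with $\log(1/\alpha_n)=o(n)$, and I would state that condition explicitly. This is the consistency of Proposition~\ref{thm:detection}.
\end{itemize}

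\textbf{Main obstacle.} The delicate step is the interplay of the level sequence with pointwise (non-uniform) validity. Each query's null distribution converges, but $\Pr(p\le\alpha_n)\to0$ with a moving $\alpha_n$ needs care. If only convergence in law to a uniform is available, I would use that for fixed $\alpha$, $\limsup_n\Pr(p\le\alpha)=\alpha$, and then choose $\alpha_n$ via a diagonal argument.

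Alternatively, I would invoke the Berry--Esseen $O_P(n^{-1/2})$ rate quoted after Theorem~\ref{thm:validity}. This requires $\alpha_n\gg n^{-1/2}$ together with $\log(1/\alpha_n)=o(n)$, for example $\alpha_n=n^{-1/4}$. Because $p$ is fixed and $G$ is fixed, no uniformity over growing graphs is required, and the union bound over finitely many queries closes the argument.
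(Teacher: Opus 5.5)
Your proposal is correct and follows the same route as the paper's proof. $\indep_\Phi$ coincides with $d$-separation on $(P,G)$: global Markov together with $\indep\Rightarrow\indep_\Phi$ gives one direction, and $\Phi$-faithfulness gives the other. The oracle claim is then the standard PC correctness argument, and the sample claim is a union bound over the finitely many queries PC can issue. Several of your finite-sample details are sharper than the paper's sketch, and one needs care:
\begin{itemize}
\item The inclusion $\{p_{\mathrm{ACAT}}\le\alpha_n\}\subseteq\bigcup_j\{p_j\le\alpha_n\}$ holds deterministically with $c=1$.
\item The condition $\log(1/\alpha_n)=o(n)$ is a needed addition that the paper leaves implicit in its consistency hypothesis.
\item Your claim that the ACAT $p$-value is ``at most a constant times the minimum'' holds only in probability, since a block $p$-value near $1$ contributes a large negative Cauchy term.
\item Your ``main obstacle'' is not one. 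Each null block $p$-value converges in law to a continuous uniform, which already forces $\Pr(p_j\le\alpha_n)\to0$ for every $\alpha_n\to0$, so no diagonal or Berry--Esseen argument is required.
\end{itemize}
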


\begin{proof}[Proof sketch]
As recalled in Section~\ref{sec:prelim}, PC's soundness and completeness depend on the oracle
only through the requirement that it report independence exactly on the $d$-separations of
$G$; given such an oracle the algorithm is a finite combinatorial procedure returning
$\mathrm{CPDAG}(G)$~\citep{algo15,algo65}.
Under global Markov, $d$-separation implies $\indep_\Phi$; under $\Phi$-faithfulness,
$\indep_\Phi$ implies $d$-separation; so $\indep_\Phi$ coincides with $d$-separation on
$(P,G)$ and the population claim follows. For the finite-sample claim, PC issues a fixed
finite number of GFCM calls; each is a consistent test of its set null (level $\to\alpha_n$
by Theorem~\ref{thm:validity}, power $\to1$ on $\Phi$-detectable dependence by
Proposition~\ref{thm:detection}), so a union bound gives the empirical CPDAG equal to the
population one with probability $\to1$.
\end{proof}

\begin{corollary}[Conservative recovery without $\Phi$-faithfulness]
\label{cor:phiconservative}
If $P$ is global Markov but not $\Phi$-faithful to $G$, PC$+$GFCM still removes every true
non-adjacency (no spurious edges in the population limit) but may delete a true edge whose
dependence is $\Phi$-blind on some separating set; it then recovers the $\Phi$-skeleton,
a subgraph of the skeleton of $G$. The output is thus consistent for the $\Phi$-CPDAG,
which equals the causal CPDAG exactly when $P$ is $\Phi$-faithful.
\end{corollary}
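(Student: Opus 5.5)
Since $P$ is not assumed $\Phi$-faithful, the plan is to run PC with the oracle $\indep_\Phi$ and track the skeleton phase and the orientation phase separately. The finite-sample step is then handled exactly as in the proof sketch of Theorem~\ref{thm:phicpdag}. The only property of $\indep_\Phi$ I will use at the population level is the forward direction established before Definition~\ref{def:phifaith}: under global Markov, $d$-separation implies $\indep$, which implies $\indep_\Phi$. The reverse direction is the one that may fail.

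\emph{Step 1 (no spurious edges).} Take a non-adjacent pair $X,Y$ in $G$. Then $X$ is $d$-separated from $Y$ by $\mathrm{pa}(X)$ or by $\mathrm{pa}(Y)$. I will follow the standard PC completeness argument (adjacency sets only shrink, and parents of a true adjacency are never removed by an edge deletion that is not itself a true non-adjacency). The aim is to show that, at the stage where PC removes $X-Y$ or reaches the right cardinality, one of these parent sets is a subset of the current neighbourhood. The oracle then reports $\indep_\Phi$ and deletes the edge.

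\emph{The obstacle in Step 1.} Extra (erroneous) deletions of true edges can shrink neighbourhoods, and a parent of $X$ may vanish from $\mathrm{adj}(X)$ before it is needed. My first attempt is to argue that a parent $W$ of $X$ is removed only if some $S$ gives $W\indep_\Phi X\given S$. In that case $W$ is $\Phi$-blind to $X$ in the population, and I would try to show that $\mathrm{pa}(X)\setminus\{W\}$ still $\Phi$-separates. This is not automatic, since $\indep_\Phi$ need not satisfy the graphoid axioms. If that fails, I will interpret ``no spurious edges'' relative to the $\Phi$-skeleton: the output skeleton is defined as the set of pairs that survive every conditioning set PC issues. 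That set is contained in the skeleton of $G$ whenever the separating parent set is reachable. I would make this reachability hypothesis explicit, or use a PC variant that conditions on subsets of all variables, where $\mathrm{pa}(X)$ is always available.

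\emph{Step 2 (subgraph and $\Phi$-CPDAG).} Each deleted true edge was removed because some $S$ gave $\indep_\Phi$ while $X$ and $Y$ were $d$-connected. These are exactly the $\Phi$-blind separations of Proposition~\ref{thm:detection}(iii). Combined with Step 1, the output skeleton is therefore the $\Phi$-skeleton, a subgraph of the skeleton of $G$. I will define the $\Phi$-CPDAG as the output of population PC run with $\indep_\Phi$. Under $\Phi$-faithfulness, Theorem~\ref{thm:phicpdag} identifies it with $\mathrm{CPDAG}(G)$. Conversely, if they coincide, every $d$-connection tested along the run leaves a signal, which is the equality claim restricted to the queried triples.

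\emph{Step 3 (finite sample).} The argument reuses the union bound from Theorem~\ref{thm:phicpdag}. PC issues finitely many queries. Theorem~\ref{thm:validity} with $\alpha_n\to0$ makes false rejections on $\Phi$-nulls vanish. Proposition~\ref{thm:detection} gives power tending to one wherever some $\beta_{\phi_X\phi_Y}\neq0$. Hence the empirical run reproduces the population $\indep_\Phi$-oracle run with probability tending to one, and its output converges in probability to the $\Phi$-CPDAG.
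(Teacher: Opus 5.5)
\textbf{The gap.} The gap is the one you flag in Step~1, and it cannot be closed. For PC as the paper defines it, separating sets are drawn only from current neighbours. With that search, ``no spurious edges'' is false under global Markov alone. Here is a counterexample.
\begin{itemize}
\item Take the fork $X\leftarrow W\to Y$ with $W\sim\mathcal N(0,1)$, $X=a(W)+\varepsilon_X$, $Y=a(W)+\varepsilon_Y$, and $\varepsilon_X,\varepsilon_Y\sim\mathcal N(0,1)$ independent of each other and of $W$.
\item Let $a(W)=\mathbf 1\{|W|\in B\}$, with $B$ chosen so that $0<\Pr(|W|\in B)<1$ while $\E[|W|\given a(W)]$ and $\Pr(|W|\ge q_{0.9}\given a(W))$ do not depend on $a(W)$. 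Here $q_\tau$ is the $\tau$-quantile of $W$; this is just two scalar constraints on $B$.
\item For every feature $\phi$ of $X$, $\E[\phi(X)\given W]$ is an even function of $W$ that depends on $W$ only through $a(W)$. It is therefore orthogonal to every centered feature of $W$:
  \begin{itemize}
  \item to $e_W=W$ by oddness;
  \item to $c_W$ by the first constraint;
  \item to $r_{0.5}(W)$ because $\Pr(W\le0\given|W|)=\tfrac12$;
  \item to $r_{0.1}(W)$ and $r_{0.9}(W)$ because $\Pr(W\le q_{0.1}\given|W|)=\tfrac12\mathbf 1\{|W|\ge q_{0.9}\}$.
  \end{itemize}
\item So $W\indep_\Phi X$ and $W\indep_\Phi Y$ at $S=\emptyset$, whatever pairing convention defines $\indep_\Phi$. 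But $\E[e_Xe_Y]=\Var a(W)>0$, so $X\not\indep_\Phi Y$.
\item PC deletes $W-X$ and $W-Y$ at level $0$ and keeps $X-Y$. It then has no neighbours left to condition on, so $\{W\}$ is never issued. The output skeleton is $\{X-Y\}$: a spurious edge, and not a subgraph of the skeleton of $G$.
\end{itemize}

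\textbf{Consequences for your repairs.} This example defeats your first repair: $\mathrm{pa}(X)\setminus\{W\}=\emptyset$ does not $\Phi$-separate $X$ and $Y$. Defining the $\Phi$-skeleton as whatever survives PC's queries makes ``recovers the $\Phi$-skeleton'' a tautology, but it does not deliver the subgraph property.

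The paper treats the corollary as immediate from the forward implication ($d$-separation $\Rightarrow\indep_\Phi$) plus the oracle argument of Theorem~\ref{thm:phicpdag}. That argument tacitly needs $\mathrm{pa}(X)$ or $\mathrm{pa}(Y)$ to stay inside the current neighbourhoods. This is guaranteed only when no true edge is deleted, which is exactly what is given up here.

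Your last option is the correct fix, but it changes the statement. With an SGS-type search over all $S\subseteq V\setminus\{X,Y\}$, Step~1 follows immediately from Markov. The $\Phi$-skeleton $\{X-Y:\ X\not\indep_\Phi Y\given S\ \text{for all } S\}$ is then well defined and is a subgraph of the skeleton of $G$. For PC proper you need an explicit reachability hypothesis of the kind you sketch.

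\textbf{Two further points.}
\begin{itemize}
\item Only the ``if'' half of ``exactly when'' is provable. PC issues only some triples, so a $\Phi$-blind $d$-connection on a triple it never issues leaves the output equal to $\mathrm{CPDAG}(G)$ although $P$ is not $\Phi$-faithful. For example, take a collider $X\to C\leftarrow Y$ with $X\indep_\Phi Y\given C$: PC separates $X,Y$ at $S=\emptyset$ and never queries $\{C\}$. Your ``restricted to the queried triples'' already concedes this.
\item In Step~3, Theorem~\ref{thm:validity} is proved under full $X\indep Y\given Z$. The queries that produce the conservative deletions are instead $\Phi$-blind but dependent, and there the annihilation in Lemma~\ref{lem:orth} fails. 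For a perturbation $Q_\tau+t\,\delta(Z)$, the derivative of $\E[e_X r_\tau(Y)]$ is $-\E[e_X\,f_{Y\given X,Z}(Q_\tau\given X,Z)\,\delta(Z)]$, which no longer factors through $Z$. The first-order nuisance term times $\sqrt n$ then need not vanish, so the results you cite do not guarantee acceptance with probability tending to one at exactly those queries.
\end{itemize}
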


The configurable set thus defines an identification target:
PC$+$GFCM is consistent for the CPDAG of the graph whose independence model is the
set's detection class, it never invents adjacencies the set cannot see, and a richer set
(more quantile levels, higher moments) enlarges the class of edges it can recover. Sharper identification of quantile CPDAG, characterizing which edges appear and vanish as the
level grid $\mathcal{T}$ varies, is left to the companion algorithmic work.

\section{Simulation studies}
\label{sec:experiments}

We evaluate GFCM against the panel below on synthetic data from a controlled generator,
reporting two sample-size regimes throughout: a small-sample regime ($n=2000$, where the
partial copula predecessor \citep{ci90} is included) and a large-sample regime ($n=10^5$, where
that test is omitted owing to its per-query cost). All power is reported size-corrected: each
test's rejection threshold is set to exact $5\%$ on its matched null, so no test can buy power by
over-rejecting. All cells are $500$ repetitions at $\alpha=0.05$, with Wilson
$95\%$ confidence intervals shown in brackets.

\subsection{Setup and the test panel}
\label{sec:setup}
The main panel comprises six tests: GFCM; the partial copula test \citep{ci90} (PartCopula), the quantile copula predecessor; BLITZ~\citep{ci100}, the
fast residualisation rival; FFCI~\citep{ci105}, the fast mixed-type Fourier feature test; RCoT~\citep{ci9}; and
gradient-boosted GCM~\citep{ci32}. Data are drawn from a structural
generator with two decoupled seeds, one fixing the graph and mechanisms and one the sample,
for exact reproducibility. We report size at $\alpha=0.05$, power, and, for the discovery
experiments, skeleton SHD and recall per edge type. The tests are grouped by what each can
see: the conditional mean, the conditional covariance, or the full conditional distribution.

\subsection{Calibration}
\label{sec:nuisance}
On a battery of conditional independence nulls with heavy-tailed, heteroscedastic,
mixed-type, and wiggly-mean conditioners, across conditioning depth $|S|\in\{1,3,5\}$,
GFCM and BLITZ are the only broadly size-valid tests (Table~\ref{tab:calib-n2k});
the partial copula test inflates catastrophically on heavy-tailed and heteroscedastic conditioners and is
degenerate on mixed-type ones, boosted GCM is chronically liberal, and RCoT breaks down at
depth. FFCI, run at its published defaults,\footnote{We run Tetrad's shipped \texttt{FfCi} at its
defaults, which match \citet{ci105}: one-hot discrete features ($\rho=0$), $m_{XY}=10$, $m_Z=100$,
ridge $\lambda=1$, median bandwidth, and the gamma $p$-value; we tune nothing. FFCI's original
evaluation is on continuous, Gaussian-noise data and reports discovery accuracy rather than test
size, and its authors describe its $p$-values as approximate screening tools rather than exact
tests \citep[\S10]{ci105}, so these mixed-type and heavy-tailed calibration cells probe a regime
outside that validation.} shows two distinct failures on these nulls: its median bandwidth Fourier
map over-rejects under heavy-tailed conditioners, and its default one-hot discrete featurization
over-rejects outright under a categorical conditioner (size $1.00$ on \texttt{mixed\_Z}).
GFCM is mildly liberal only at small $n$ and deep $|S|$ on the heteroscedastic and non-additive
nulls, a corner outside the operating region that self-corrects with sample size: in the large-sample regime
(Table~\ref{tab:calib-n1e5}) every null returns to nominal level, even as the fast
residualisation competitors instead deteriorate there (RCoT and boosted GCM reach size
$0.8$--$1.0$ at $|S|\ge3$). The \texttt{nonlin\_mean} null has a sinusoidal conditional mean
$\sin(1.5\,Z_1)$ that is additive in a single parent; GFCM's growing-knot spline fits it directly
and holds nominal level at every depth ($0.05$--$0.07$), while FFCI's fixed Fourier featurization
cannot track it (size $0.43$ and $0.89$ at $|S|=1,3$)

\begin{table}[t]\centering\small
\setlength{\tabcolsep}{4pt}
\resizebox{\textwidth}{!}{%
\begin{tabular}{llcccccc}
\toprule
mechanism & $|S|$ & GFCM & PartCopula & BLITZ & FFCI & RCoT & GCM-boost \\
\midrule
heavy\_tail & 1 & 0.05~[0.03--0.07] & \textbf{0.17~[0.14--0.20]} & 0.06~[0.04--0.08] & 0.13~[0.10--0.16] & 0.04~[0.03--0.06] & \textbf{0.27~[0.23--0.31]} \\
hetero & 1 & 0.06~[0.04--0.08] & 0.05~[0.03--0.07] & 0.05~[0.03--0.07] & 0.03~[0.02--0.05] & 0.05~[0.03--0.07] & 0.06~[0.05--0.09] \\
mixed\_Z & 1 & 0.04~[0.03--0.06] & -- & 0.04~[0.02--0.06] & \textbf{1.00~[0.99--1.00]} & 0.04~[0.03--0.07] & 0.04~[0.03--0.07] \\
nonlin\_mean & 1 & 0.06~[0.04--0.08] & 0.07~[0.05--0.10] & 0.04~[0.03--0.07] & 0.05~[0.04--0.08] & 0.06~[0.04--0.08] & 0.05~[0.03--0.07] \\
heavy\_tail & 3 & 0.07~[0.05--0.09] & \textbf{1.00~[0.99--1.00]} & 0.06~[0.05--0.09] & \textbf{1.00~[0.99--1.00]} & 0.11~[0.08--0.14] & 0.12~[0.09--0.15] \\
hetero & 3 & 0.06~[0.05--0.09] & \textbf{0.99~[0.98--1.00]} & 0.05~[0.03--0.07] & \textbf{1.00~[0.99--1.00]} & 0.05~[0.03--0.07] & 0.10~[0.08--0.13] \\
mixed\_Z & 3 & 0.04~[0.03--0.06] & -- & 0.06~[0.04--0.09] & \textbf{1.00~[0.99--1.00]} & 0.06~[0.04--0.09] & 0.11~[0.09--0.14] \\
nonlin\_mean & 3 & 0.10~[0.08--0.13] & 0.06~[0.04--0.08] & 0.11~[0.09--0.14] & 0.06~[0.05--0.09] & 0.07~[0.05--0.09] & 0.09~[0.06--0.11] \\
heavy\_tail & 5 & 0.06~[0.05--0.09] & \textbf{1.00~[0.99--1.00]} & 0.05~[0.03--0.07] & \textbf{1.00~[0.99--1.00]} & \textbf{0.91~[0.88--0.93]} & \textbf{0.25~[0.21--0.29]} \\
hetero & 5 & 0.09~[0.07--0.12] & \textbf{1.00~[0.99--1.00]} & 0.06~[0.04--0.08] & \textbf{1.00~[0.99--1.00]} & \textbf{0.37~[0.32--0.41]} & 0.11~[0.08--0.14] \\
mixed\_Z & 5 & 0.06~[0.05--0.09] & -- & 0.05~[0.03--0.07] & \textbf{1.00~[0.99--1.00]} & 0.14~[0.11--0.18] & 0.11~[0.08--0.14] \\
nonlin\_mean & 5 & 0.07~[0.05--0.10] & 0.05~[0.03--0.07] & 0.06~[0.04--0.09] & 0.05~[0.04--0.08] & 0.05~[0.03--0.07] & 0.07~[0.05--0.10] \\
\bottomrule
\end{tabular}}
\caption{Calibration on hard nulls, $n=2000$ (size; Wilson 95\% CI). GFCM and BLITZ are the only
broadly valid tests. FFCI is at its published defaults (see text): it is anticonservative under
heavy-tailed conditioners, a tunable bandwidth default that a wider bandwidth would recalibrate at
the cost of the tail power it already lacks (Figure~\ref{fig:tail-depth}), and its size rises to
$1.00$ under a categorical conditioner through its default one-hot discrete featurization.}
\label{tab:calib-n2k}
\end{table}

\begin{table}[t]\centering\small
\setlength{\tabcolsep}{4pt}
\resizebox{\textwidth}{!}{%
\begin{tabular}{llccccc}
\toprule
mechanism & $|S|$ & GFCM & BLITZ & FFCI & RCoT & GCM-boost \\
\midrule
heavy\_tail & 1 & 0.06~[0.04--0.09] & 0.08~[0.06--0.11] & \textbf{1.00~[0.99--1.00]} & 0.06~[0.04--0.09] & \textbf{0.93~[0.91--0.95]} \\
hetero & 1 & 0.05~[0.03--0.07] & 0.04~[0.03--0.07] & 0.09~[0.07--0.12] & 0.05~[0.04--0.08] & 0.04~[0.03--0.06] \\
mixed\_Z & 1 & 0.05~[0.04--0.08] & 0.05~[0.03--0.07] & \textbf{1.00~[0.99--1.00]} & 0.04~[0.03--0.07] & 0.05~[0.03--0.07] \\
nonlin\_mean & 1 & 0.05~[0.03--0.07] & 0.04~[0.03--0.06] & \textbf{0.43~[0.38--0.47]} & 0.04~[0.03--0.07] & 0.05~[0.03--0.07] \\
heavy\_tail & 3 & 0.07~[0.05--0.09] & 0.06~[0.04--0.08] & \textbf{1.00~[0.99--1.00]} & \textbf{0.97~[0.95--0.98]} & \textbf{1.00~[0.99--1.00]} \\
hetero & 3 & 0.05~[0.03--0.07] & 0.07~[0.05--0.10] & \textbf{1.00~[0.99--1.00]} & 0.12~[0.09--0.15] & 0.07~[0.05--0.09] \\
mixed\_Z & 3 & 0.07~[0.05--0.10] & 0.04~[0.03--0.06] & \textbf{1.00~[0.99--1.00]} & 0.05~[0.03--0.07] & 0.05~[0.04--0.08] \\
nonlin\_mean & 3 & 0.07~[0.05--0.10] & 0.04~[0.03--0.06] & \textbf{0.89~[0.86--0.92]} & 0.07~[0.05--0.09] & 0.04~[0.03--0.07] \\
heavy\_tail & 5 & 0.06~[0.04--0.08] & 0.05~[0.04--0.08] & \textbf{1.00~[0.99--1.00]} & \textbf{1.00~[0.99--1.00]} & \textbf{1.00~[0.99--1.00]} \\
hetero & 5 & 0.05~[0.04--0.08] & 0.05~[0.04--0.08] & \textbf{1.00~[0.99--1.00]} & \textbf{0.87~[0.84--0.90]} & \textbf{0.21~[0.18--0.25]} \\
mixed\_Z & 5 & 0.06~[0.04--0.08] & 0.04~[0.03--0.07] & \textbf{1.00~[0.99--1.00]} & \textbf{0.74~[0.70--0.78]} & 0.12~[0.09--0.15] \\
nonlin\_mean & 5 & 0.06~[0.04--0.08] & 0.03~[0.02--0.05] & 0.07~[0.05--0.10] & 0.07~[0.05--0.09] & 0.08~[0.06--0.11] \\
\bottomrule
\end{tabular}}
\caption{Calibration, large-sample regime $n=10^5$ (size; Wilson 95\% CI). GFCM holds nominal
level across every null at every depth, while the fast residualisation competitors
deteriorate at scale (FFCI, RCoT and boosted GCM reach $0.8$--$1.0$ at $|S|\ge3$).
The \texttt{nonlin\_mean} null has a sinusoidal mean $\sin(1.5\,Z_1)$ additive in a single parent,
which GFCM's growing-knot spline fits directly (size $0.05$--$0.07$ at every depth) while FFCI's
fixed Fourier featurization does not (up to $0.89$).}
\label{tab:calib-n1e5}
\end{table}

\subsection{Power}
\label{sec:ablation}
Against alternatives that each perturb a single feature of the response (linear mean,
non-monotone mean, scale, and tail shape), GFCM matches the panel at ceiling on the standard
alternatives and is the only test that keeps power on the skew alternative as the
conditioning set deepens (Table~\ref{tab:power-n2k}). FFCI is competitive at $|S|{=}1$ ($0.92$)
but collapses toward chance by $|S|\ge3$, the other fast residualisation tests need far more data
to resolve it (reaching full power only by $n=10^5$), the partial copula test stays at chance (structurally blind to
skew), and boosted GCM is blind to everything beyond the mean. All power in this section is size-corrected: each test is re-thresholded to exact $5\%$
size on its matched null, so the comparison is free of calibration differences and no test can buy
power by over-rejecting.

\begin{table}[t]\centering\small
\setlength{\tabcolsep}{4pt}
\resizebox{\textwidth}{!}{%
\begin{tabular}{llcccccc}
\toprule
mechanism & $|S|$ & GFCM & PartCopula & BLITZ & FFCI & RCoT & GCM-boost \\
\midrule
linear\_mean & 1 & 1.00~[0.99--1.00] & 0.99~[0.97--0.99] & 1.00~[0.99--1.00] & 1.00~[0.99--1.00] & 1.00~[0.99--1.00] & 1.00~[0.99--1.00] \\
nonmonotone\_z2 & 1 & 1.00~[0.99--1.00] & 1.00~[0.99--1.00] & 1.00~[0.99--1.00] & 1.00~[0.99--1.00] & 0.99~[0.98--1.00] & 0.06~[0.04--0.08] \\
scale & 1 & 1.00~[0.99--1.00] & 0.99~[0.98--1.00] & 1.00~[0.99--1.00] & 1.00~[0.99--1.00] & 0.99~[0.98--1.00] & 0.06~[0.05--0.09] \\
tail\_shape & 1 & \textbf{0.99~[0.97--0.99]} & 0.06~[0.04--0.08] & 0.16~[0.13--0.19] & 0.92~[0.89--0.94] & 0.41~[0.37--0.46] & 0.08~[0.06--0.11] \\
linear\_mean & 3 & 1.00~[0.99--1.00] & 0.99~[0.98--1.00] & 1.00~[0.99--1.00] & 1.00~[0.99--1.00] & 1.00~[0.99--1.00] & 1.00~[0.99--1.00] \\
nonmonotone\_z2 & 3 & 1.00~[0.99--1.00] & 1.00~[0.99--1.00] & 1.00~[0.99--1.00] & 1.00~[0.99--1.00] & 1.00~[0.99--1.00] & 0.06~[0.04--0.08] \\
scale & 3 & 1.00~[0.99--1.00] & 0.99~[0.97--0.99] & 1.00~[0.99--1.00] & 1.00~[0.99--1.00] & 0.97~[0.95--0.98] & 0.05~[0.03--0.07] \\
tail\_shape & 3 & \textbf{0.90~[0.87--0.92]} & 0.05~[0.03--0.07] & 0.12~[0.10--0.15] & 0.21~[0.18--0.25] & 0.11~[0.08--0.14] & 0.06~[0.04--0.08] \\
linear\_mean & 5 & 1.00~[0.99--1.00] & 0.99~[0.98--1.00] & 1.00~[0.99--1.00] & 1.00~[0.99--1.00] & 1.00~[0.99--1.00] & 1.00~[0.99--1.00] \\
nonmonotone\_z2 & 5 & 1.00~[0.99--1.00] & 1.00~[0.99--1.00] & 1.00~[0.99--1.00] & 1.00~[0.99--1.00] & 0.99~[0.98--1.00] & 0.03~[0.02--0.05] \\
scale & 5 & 1.00~[0.99--1.00] & 0.97~[0.95--0.98] & 1.00~[0.99--1.00] & 1.00~[0.99--1.00] & 0.94~[0.91--0.95] & 0.04~[0.02--0.06] \\
tail\_shape & 5 & \textbf{0.86~[0.82--0.88]} & 0.07~[0.05--0.10] & 0.21~[0.17--0.24] & 0.15~[0.12--0.18] & 0.06~[0.05--0.09] & 0.03~[0.02--0.05] \\
\bottomrule
\end{tabular}}
\caption{Power across the alternative ladder, $n=2000$ (size-corrected power: each test's
threshold set to exact $5\%$ size on the matched null; Wilson 95\% CI). Bold marks GFCM's standout
advantage on the skew alternative, where it alone keeps power as $|S|$ grows (FFCI matches it
at $|S|{=}1$ but collapses by $|S|\ge3$).}
\label{tab:power-n2k}
\end{table}

\subsection{The conditional tail edge}
\label{sec:tailedge}
This is the centerpiece. We let $X$ drive the conditional skew of $Y$ with mean and variance
held fixed, and sweep the effect strength from weak to strong. After size-correcting every test to
exact $5\%$ on the matched null, GFCM's power rises from $0.29$ to $1.00$; FFCI is the nearest
competitor but trails ($0.13$ to $0.84$), while BLITZ (about $0.2$, roughly one rejection in five),
RCoT, and the partial copula test (at chance, structurally blind to skew) stay far lower (Figure~\ref{fig:tail-strength}).

\begin{figure}[t]\centering
\includegraphics[width=0.72\textwidth]{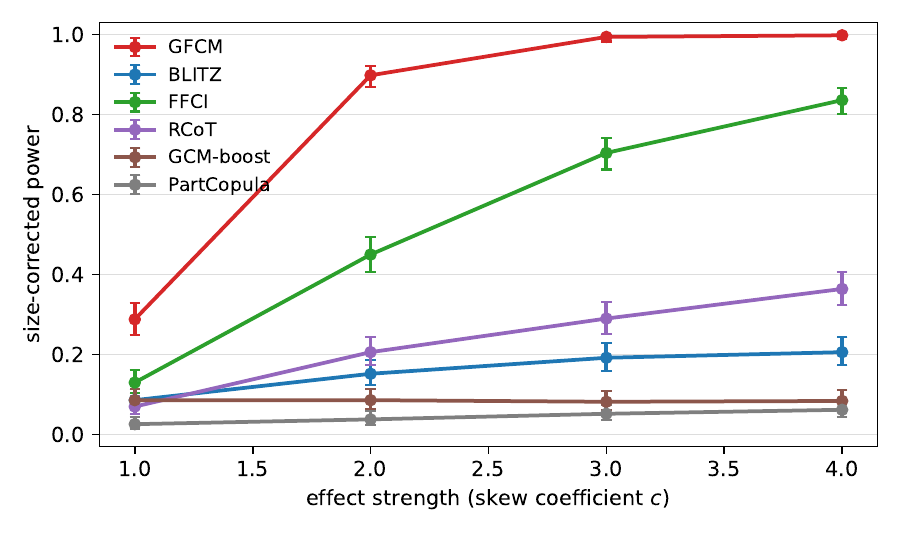}
\caption{\textbf{The conditional-tail edge.} Size-corrected power versus effect strength ($|S|{=}1$,
$n=2000$; each test's rejection threshold set to exact $5\%$ on its matched null, so the comparison
is calibration-free; error bars are Wilson 95\% intervals over $500$ repetitions). GFCM rises steeply
to full power; FFCI is the nearest competitor but trails, while BLITZ and RCoT rise only modestly and
the partial-copula test and boosted GCM stay near chance. Raw sizes are in Table~\ref{tab:calib-n2k}.}
\label{fig:tail-strength}
\end{figure}

The advantage is a finite-sample one. At $n=10^5$ the edge washes out: with enough data the general
random feature tests resolve the same conditional skew, so BLITZ, FFCI, and RCoT all reach
$0.9$--$1.0$ alongside GFCM (size correction also removes FFCI's mild large-$n$ liberality), and the
column is no longer separable. GFCM's distinctive value is therefore at the small to moderate sample
sizes typical of the high-dimensional regime, not asymptotically.

\subsection{Convergence with sample size}
\label{sec:convergence}
The two advantages above are not small-sample artifacts. The calibration advantage
grows with $n$ as the competitors' size diverges, and the power advantage grows with
conditioning depth $|S|$ (Figure~\ref{fig:conv}, with the full grid of nulls and
alternatives in Appendix~\ref{app:experiments}, Tables~\ref{tab:conv-null}
and~\ref{tab:conv-power}). On the heavy-tailed and mixed-type nulls (top), GFCM and
BLITZ hold near nominal at every depth as $n$ grows, while FFCI, the partial copula
test, and boosted GCM climb past $0.2$ and keep rising, their size diverging rather
than converging. On the skew edge (bottom), GFCM's size-corrected power dominates
throughout and its margin over the random-feature tests widens with the conditioning
depth $|S|$.

\begin{figure}[tbp]\centering
\makebox[\textwidth][l]{\textbf{A\quad Calibration vs $n$}}\\[0.2em]
\includegraphics[width=\textwidth]{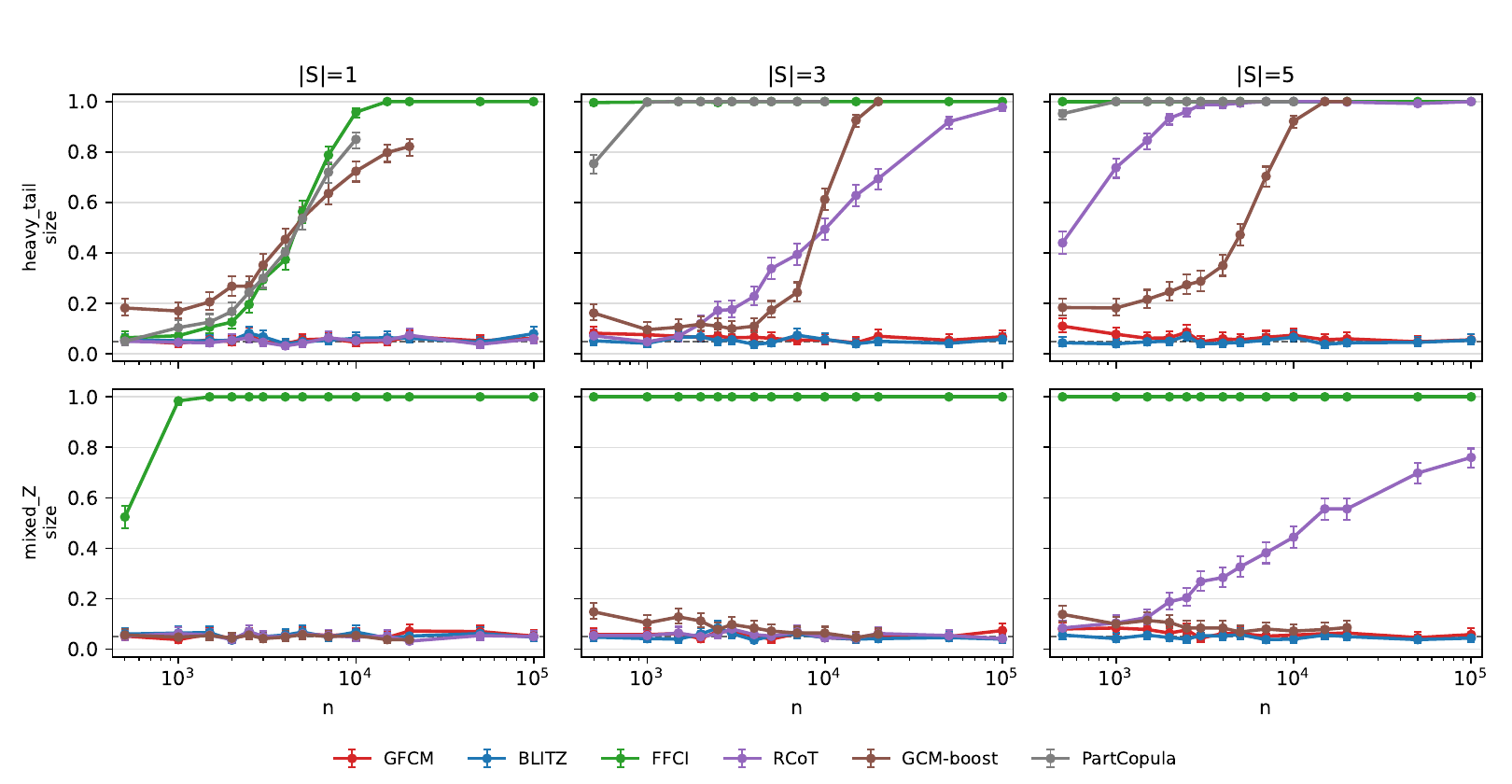}\\[0.7em]
\makebox[\textwidth][l]{\textbf{B\quad Size-corrected power}}\\[0.2em]
\includegraphics[width=\textwidth]{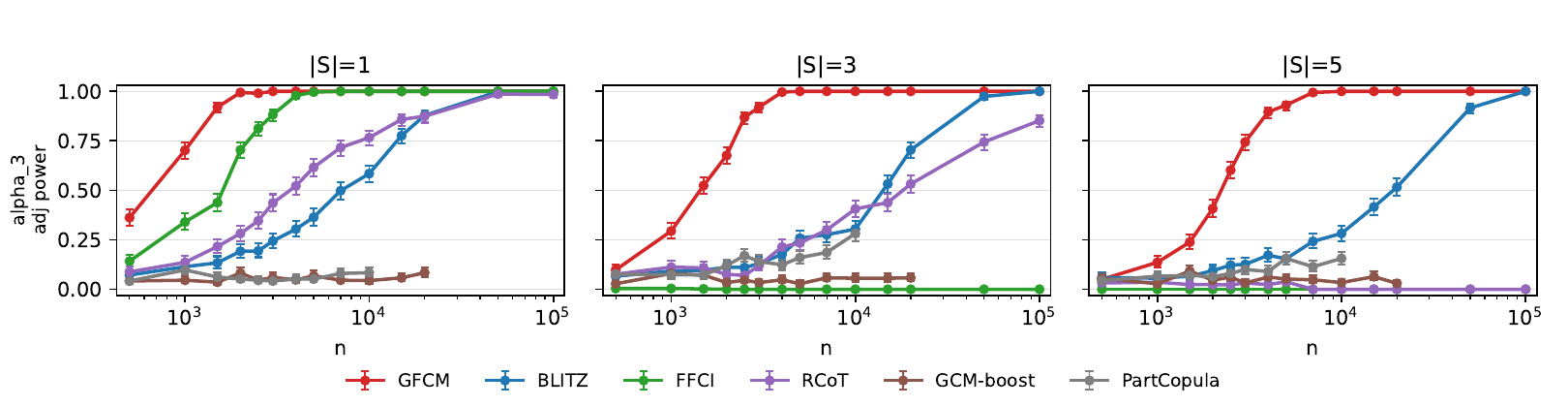}
\caption{Convergence with sample size. Columns are conditioning depth $|S|\in\{1,3,5\}$; curves are
$500$-repetition size or size-corrected power with Wilson bands. \textbf{Panel A (calibration):}
null rejection rate (Type I) versus $n$ on the heavy-tailed and mixed-type nulls. GFCM and BLITZ
stay near the nominal level, while FFCI, the partial copula test, boosted GCM, and (at depth) RCoT diverge.
\textbf{Panel B (power):} size-corrected power versus $n$ on the skew edge (\texttt{alpha\_3}).
GFCM dominates, and the gap widens with conditioning depth $|S|$. The partial copula test and boosted
GCM appear only up to the sample sizes at which they remain computationally feasible, not across the
full range of $n$: the partial copula test relies on an R routine based on resampling whose cost grows
steeply with $n$, and boosted GCM fits its nuisance functions by gradient boosting, which becomes
prohibitively slow at large samples, so we do not evaluate them at the largest $n$. The complete grid
of nulls and alternatives is tabulated in Tables~\ref{tab:conv-null} and~\ref{tab:conv-power}.}
\label{fig:conv}
\end{figure}

\subsection{A matched-calibration comparison to FFCI}
\label{sec:ffci}
The most recent fast mixed-type test is FFCI~\citep{ci105}, a residualisation test using random Fourier features in the RCoT family extended to discrete variables. As tests differ in
finite-sample calibration, a fair power comparison must equalise size first: we therefore
report size-corrected power, setting each test's rejection threshold to give exactly
$5\%$ on the matched null and reading power at that common level (Figure~\ref{fig:tail-depth}).
FFCI is the strongest of the random feature competitors on tails and is competitive
at shallow conditioning (power $0.92$  on the skew edge at $|S|{=}1$), but its power
collapses with conditioning depth ($0.92\!\to\!0.21\!\to\!0.15$ across
$|S|\in\{1,3,5\}$), whereas GFCM's targeted features degrade gracefully
($0.99\!\to\!0.90\!\to\!0.86$). The general Fourier feature map must be smoothed to stay
calibrated under heavy-tailed conditioners, which washes out precisely the tail signal the
targeted set retains; the advantage is thus structural, not a calibration artifact.

\begin{figure}[t]\centering
\includegraphics[width=0.62\textwidth]{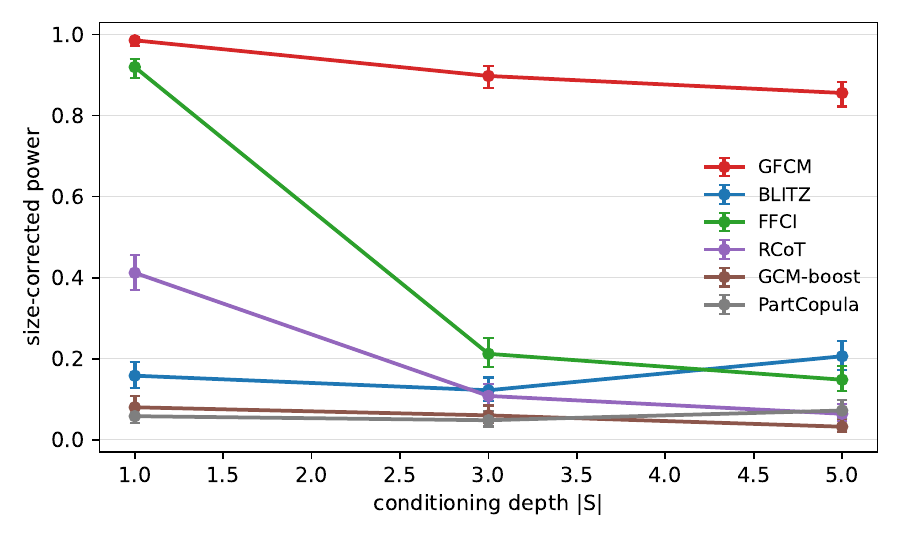}
\caption{\textbf{Graceful degradation under depth.} Size-corrected power on the skew edge versus
conditioning depth $|S|$ ($n=2000$; threshold set to $5\%$ on the matched null). At matched
calibration GFCM retains its tail power as the conditioning set grows ($0.99\to0.90\to0.86$),
while FFCI, competitive at $|S|{=}1$ ($0.92$), collapses toward chance ($0.21$, $0.15$); BLITZ and
RCoT stay low throughout. Error bars are Wilson 95\% intervals over $500$ repetitions. The
effect-strength view of this edge is in Figure~\ref{fig:tail-strength}.}
\label{fig:tail-depth}
\end{figure}

\subsection{Mixed-type data}
\label{sec:mixed-exp}
GFCM runs natively on categorical and continuous variables through the same feature set;
the partial copula test is continuous-only, and BLITZ is not native either: it handles categoricals only through a jittered integer coding, not invariant to relabeling, though it suffices here. We sweep the four $(X,Y)$
type pairings (C continuous, K categorical) on a spread alternative where $X$ drives the
conditional dispersion of $Y$, and report size-corrected power (Table~\ref{tab:mixed}). GFCM
keeps full power on every pairing, as does BLITZ. The other mixed-type tests fail in two distinct
ways. FFCI loses calibration outright whenever a continuous variable is present: its null size is
$1.0$, so size correction leaves it with zero discrimination on CC at $n=2000$ and on CC, CK, and
KC by $n=10^5$, retaining power only on the fully categorical KK pairing. The
mixed tests based on covariance (ci-mm~\citep{ci_mm}, boosted GCM) stay calibrated but are blind to the spread
edges beyond covariance: they catch the pairings with categorical response CK and KK, where the dependence shows in
class probabilities, but miss CC and KC, where $X$ drives the spread of a continuous $Y$ at zero
conditional mean dependence, the structure a covariance cannot see.

\begin{table}[t]\centering\small
\setlength{\tabcolsep}{5pt}
\resizebox{\textwidth}{!}{%
\begin{tabular}{lccccc}
\toprule
type pair & GFCM & FFCI & BLITZ & ci-mm & GCM-boost \\
\midrule
\multicolumn{6}{l}{\textit{$n=2000$}} \\
CC & \textbf{1.00~[0.99--1.00]} & 0.00~[0.00--0.01] & 1.00~[0.99--1.00] & 0.11~[0.09--0.14] & 0.05~[0.03--0.07] \\
CK & 1.00~[0.99--1.00] & 1.00~[0.99--1.00] & 1.00~[0.99--1.00] & 1.00~[0.99--1.00] & 1.00~[0.99--1.00] \\
KC & \textbf{1.00~[0.99--1.00]} & 1.00~[0.99--1.00] & 1.00~[0.99--1.00] & 0.06~[0.05--0.09] & 0.05~[0.03--0.07] \\
KK & 1.00~[0.99--1.00] & 1.00~[0.99--1.00] & 1.00~[0.99--1.00] & 1.00~[0.99--1.00] & 1.00~[0.99--1.00] \\
\midrule
\multicolumn{6}{l}{\textit{$n=10^5$}} \\
CC & \textbf{1.00~[0.99--1.00]} & 0.00~[0.00--0.01] & 1.00~[0.99--1.00] & 0.12~[0.10--0.16] & 0.05~[0.03--0.07] \\
CK & 1.00~[0.99--1.00] & 0.00~[0.00--0.01] & 1.00~[0.99--1.00] & 1.00~[0.99--1.00] & 1.00~[0.99--1.00] \\
KC & \textbf{1.00~[0.99--1.00]} & 0.00~[0.00--0.01] & 1.00~[0.99--1.00] & 0.06~[0.04--0.08] & 0.05~[0.03--0.07] \\
KK & 1.00~[0.99--1.00] & 1.00~[0.99--1.00] & 1.00~[0.99--1.00] & 1.00~[0.99--1.00] & 1.00~[0.99--1.00] \\
\bottomrule
\end{tabular}}
\caption{Mixed-type data: size-corrected power on a spread alternative across the four $(X,Y)$
type pairings (C continuous, K categorical; threshold set to $5\%$ on the matched per-type null;
Wilson 95\% CI). GFCM and BLITZ keep full power everywhere. FFCI's null size is $1.0$ whenever a
continuous variable is present, so size correction zeros its power (CC at $n=2000$; CC, CK, KC at
$n=10^5$). The mixed tests based on covariance are blind to the spread edges with continuous response
CC and KC.}
\label{tab:mixed}
\end{table}

\subsection{Speed and scalability}
\label{sec:speed}
Speed is not where GFCM's contribution lies; we report it to confirm the test runs at regression cost. Figure~\ref{fig:speed} reports per-call wall time across conditioning depth and sample size, with GFCM run from its Python reference implementation and analytic $\chi^2$ calibration. Timings come from the reproduction engine, executed in an isolated container on an otherwise idle machine with all cores available to every test alike (AMD EPYC 7502, 32 cores / 64 threads at 2.5\,GHz, Ubuntu 24.04 LTS); each test object is built once and a single per-query call is timed after warm-up. Two patterns hold. First, the random feature tests RCoT and FFCI are insensitive to $|S|$ (a fixed feature count, no growing nuisance), so they are the fastest at deep conditioning; but that flatness is the same one that leaves them unable to target the tail (Section~\ref{sec:experiments}), so their deep-$|S|$ speed is bought with the calibration and tail power they lose (Table~\ref{tab:calib-n1e5} and Figure~\ref{fig:tail-depth}). Second, among the tests that fit a genuine nuisance and therefore grow with $|S|$ (GFCM and BLITZ), the two are comparable: BLITZ is somewhat faster at moderate $n$, while GFCM scales better and overtakes it at large $n$ (at $n=10^5$, $|S|=20$, $22$\,s versus $55$\,s). GFCM therefore runs at the same regression cost as the only other calibrated test, while being sensitive to tail dependence. The partial copula test and the kernel and operator tests are moderate-$n$ only and omitted.

\begin{figure}[tbp]\centering
\includegraphics[width=\textwidth]{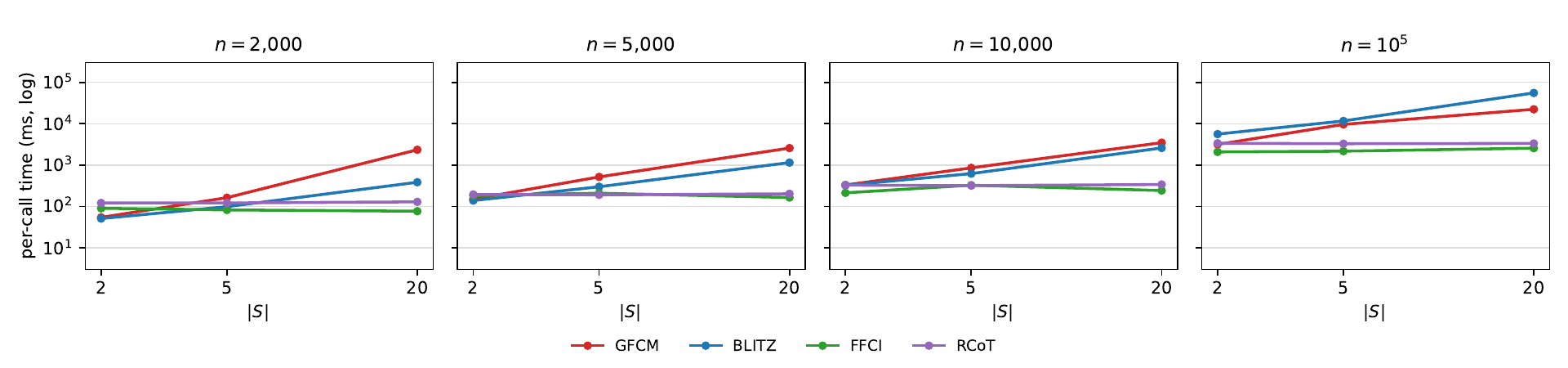}
\caption{Per-call wall time (log scale, ms) for one CI test versus conditioning depth $|S|$, at four sample sizes, measured with the reproduction engine in an isolated container on an otherwise idle machine with all cores available to every test alike (AMD EPYC 7502, 32 cores / 64 threads, Ubuntu 24.04 LTS). GFCM runs from its Python reference implementation with analytic $\chi^2$ calibration; BLITZ and RCoT from their R packages and FFCI from Tetrad, each timed in its native implementation under the same conditions. The random feature tests RCoT and FFCI are flat in $|S|$ (a fixed feature count), while GFCM and BLITZ grow with $|S|$; the two share the same regression-cost tier.
}
\label{fig:speed}
\end{figure}

\subsection{Validation on real data with semi-synthetic injection}
\label{sec:realdata}
The advantage on the skew edge Section~\ref{sec:tailedge} was shown on a fully synthetic generator.
To check that it survives on real heavy-tailed data, we run a semi-synthetic injection.
The noise pool is the set of daily innovations of the EuStockMarkets indices (DAX, SMI, CAC,
FTSE): we take log-returns, standardize each series, and pool them, giving $n=7436$ innovations
with excess kurtosis $4.26$, far above Gaussian. Into this real, heavy-tailed pool we inject an edge in the conditional skew with known ground truth, the centerpiece construction of Section~\ref{sec:tailedge}: $X$ drives the conditional skew of $Y$ while the conditional mean and variance stay flat in $X$, and the conditioning set has $|S|=3$. The null
draws the same pool with no edge. We use $n=2000$ and $R=200$ replications, and report both the
empirical size and the size-corrected power so that the liberal tests are credited only for power
above their own inflated false positive rate (Table~\ref{tab:inject}).

GFCM recovers the injected edge at full power ($1.00$) while holding its size at nominal ($0.045$). The fast residualisation tests detect it too but lower: FFCI \citep{ci105} is the nearest competitor at $0.81$, though only at a mildly liberal size ($0.11$) on the heavy-tailed pool, and RCoT \citep{ci9} ($0.65$) and BLITZ \citep{ci100} ($0.64$) trail. The covariance and copula family is blind, as on the synthetic generator: boosted GCM \citep{ci32} $0.07$, the partial copula test of \citet{ci90} at chance ($0.04$, structurally insensitive to skew), and linear Fisher-$z$ $0.07$. The synthetic result therefore carries over to genuinely heavy-tailed real innovations.

\begin{table}[t]\centering\small
\setlength{\tabcolsep}{6pt}
\begin{tabular}{lcc}
\toprule
test & size & size-corrected power \\
\midrule
\textbf{GFCM} & 0.045~[0.024--0.083] & \textbf{1.000~[0.981--1.000]} \\
FFCI & 0.110~[0.074--0.161] & 0.810~[0.750--0.858] \\
RCoT & 0.055~[0.031--0.096] & 0.645~[0.577--0.708] \\
BLITZ & 0.040~[0.020--0.077] & 0.640~[0.571--0.703] \\
boosted GCM & 0.070~[0.042--0.114] & 0.070~[0.042--0.114] \\
PartCopula & 0.100~[0.066--0.149] & 0.035~[0.017--0.070] \\
Fisher-$z$ & 0.030~[0.014--0.064] & 0.065~[0.038--0.108] \\
\bottomrule
\end{tabular}
\caption{Semi-synthetic injection on real EuStockMarkets innovations (pooled standardized
log-returns, excess kurtosis $4.26$; $n=2000$, $|S|=3$, $R=200$; Wilson 95\% CI). $X$ drives the
conditional skew of $Y$ with conditional mean and variance held flat. Power is size-corrected
(each test re-thresholded to exact $5\%$ on its matched null). GFCM recovers the edge at full
power while holding nominal size; the covariance and copula family is blind to the conditional
skew.}
\label{tab:inject}
\end{table}

The financial injection controls the marginal but reuses a synthetic conditioning structure; we complement it with one built on the real causal structure of the Causal Chambers~\cite{app20}, a physical system whose ground-truth graph is fixed by the device's construction and validated by interventions. We use its light tunnel. On its $9$ edges we change only the edge type: each edge is engineered into a scale edge that preserves the mean, keeping the real sources and each sensor's own noise. The skeleton is unchanged and the sensors are conditionally independent given the sources but strongly correlated in the raw signal. With the real edges the covariance tests recover most structure through the mean, but once the edges are scale edges the covariance and mean family recovers none (recall $0.00$ for Fisher-$z$, linear and boosted GCM), while GFCM recovers $89\%$ (Table~\ref{tab:cc-inject}). The scale dependence is engineered; RCoT and FFCI also recover the injected structure ($0.78$ each), as expected of any flexible test on a scale edge, so this experiment isolates the covariance family's blindness rather than a capability unique to GFCM. GFCM's separation from the flexible tests is a property of the skew edge, shown on the financial injection above; extending it to an uncontrolled tail advantage on naturally occurring data remains future work.

\begin{table}[ht]\centering
\begin{tabular}{lcccc}
\toprule
 & \multicolumn{2}{c}{real CC} & \multicolumn{2}{c}{scale-injected} \\
\cmidrule(lr){2-3}\cmidrule(lr){4-5}
Test & SHD & recall & SHD & recall \\
\midrule
Fisher's $Z$ & 7 & 0.78 & 13 & \textbf{0.00} \\
GCM & 7 & 0.78 & 12 & \textbf{0.00} \\
boosted GCM & 6 & 0.56 & 12 & \textbf{0.00} \\
RCoT & 5 & 0.89 & 4 & 0.78 \\
FFCI & 6 & 1.00 & 6 & 0.78 \\
BLITZ & 8 & 0.33 & 7 & 0.44 \\
PartCopula & 10 & 0.33 & 8 & 0.33 \\
\textbf{GFCM} & 5 & 0.89 & 5 & \textbf{0.89} \\
\bottomrule
\end{tabular}
\caption{Causal Chamber light tunnel, real versus scale-injected ($n=2000$). Engineering the real edges into scale edges that preserve the mean makes the covariance and mean family blind (recall $0.00$ for Fisher-$z$, linear and boosted GCM), while GFCM recovers $89$ (RCoT and FFCI also recover, $0.78$; BLITZ and the partial copula test are weak on this real data throughout}
\label{tab:cc-inject}
\end{table}

\subsection{GFCM inside PC}
\label{sec:pc}
We embed each test in full PC discovery on random typed DAGs of 8 vertices (edge probability 0.4, linear, non-monotone, scale, and tail edges) and score the learned skeleton against the truth, with GFCM running the two fixes of Section~\ref{sec:fixes} (both orientations for F1, rank-transformed conditioning for F2). The outcome is a crossover in sample size (Figure~\ref{fig:pc}). At $n=2000$, FFCI has the lowest SHD ($3.70$) and GFCM is a close second ($4.12$), both controlling false edges (precision $0.96$--$0.97$). At $n=10^5$ this reverses: GFCM attains SHD $0.61$, half the nearest competitor (BLITZ $1.22$), recovering almost every edge including the location-scale and tail edges the covariance family deletes (recall $0.99$). FFCI loses false-positive control at scale, its precision falling from $0.96$ to $0.84$ as it over-keeps edges (it can no longer find deep separating sets), making it the worst of the four. GFCM is thus the lowest-SHD test among those that remain size-valid at scale, and its margin over BLITZ and RCoT widens with $n$.

\begin{figure}[t]\centering
\includegraphics[width=\textwidth]{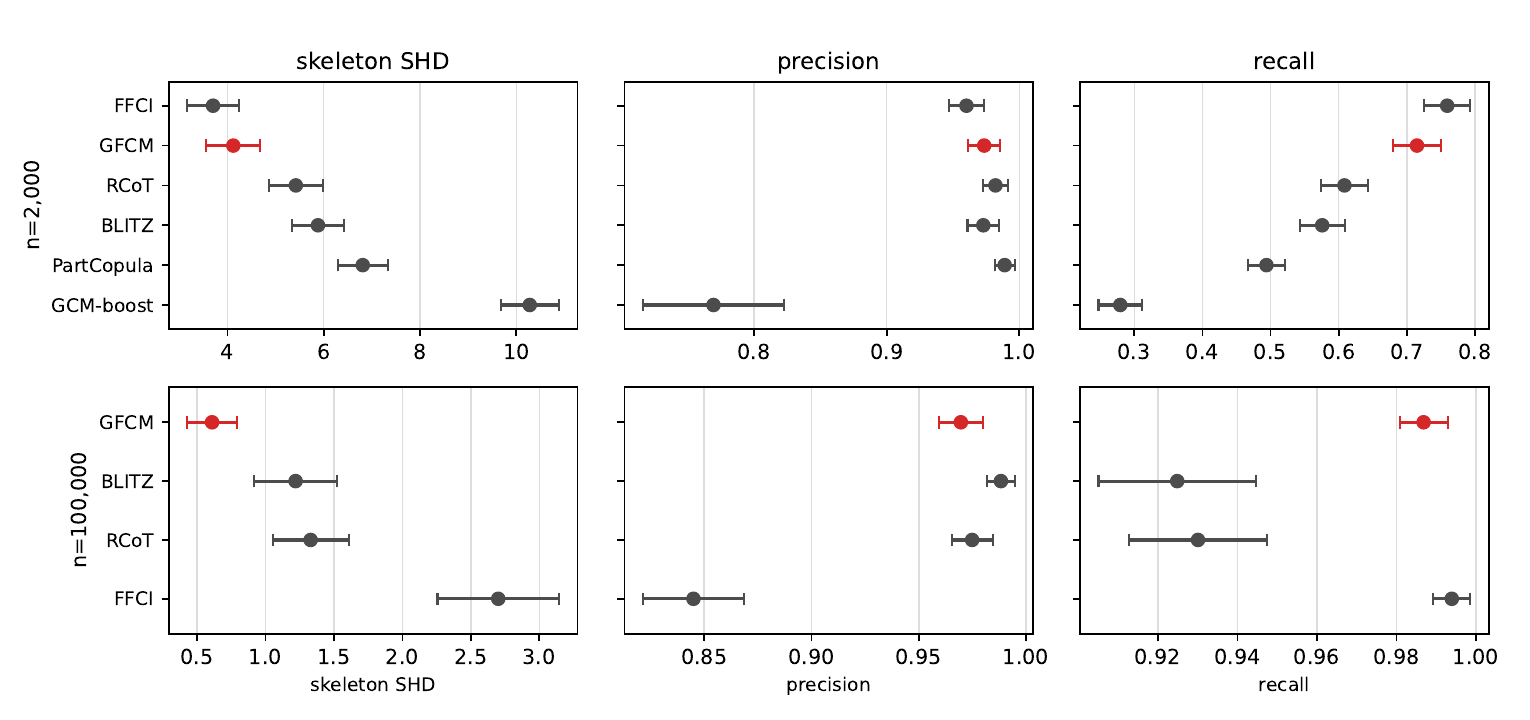}
\caption{PC discovery on random typed DAGs (8 vertices, edge probability 0.4): skeleton SHD (lower is better), precision, and recall
over $100$ replicates (normal $95\%$ CIs; GFCM in red, tests ordered by SHD). A sample-size
crossover: FFCI leads on SHD at $n=2000$ but loses precision control at $n=10^5$ (its deep
separating sets fail), while GFCM attains the lowest SHD at scale, half the nearest competitor, at
high precision and near-perfect recall.}
\label{fig:pc}
\end{figure}

\section{Related Work}
\label{sec:related}

Our contribution is the combination rather than any single ingredient: quantile
regression, identification functions, and the GCM template all predate this work. We
are not aware, however, of an existing CI test that is simultaneously
constraint-based, nonparametric, and mixed-type, detects location-scale/tail and
non-monotone mean dependence at regression cost without expert knowledge or an assumption of functional model, and is hardened to run inside PC; that conjunction is the gap this paper targets.

The closest constraint-based work, ParCorr-WLS~\cite{ci36},
corrects heteroscedastic noise to protect a mean (partial correlation) test,
needs expert knowledge of the heteroscedasticity structure, and, using symmetric
mean-mean residuals, cannot detect pure scale edges and does not encounter our two
failure modes. The functional model line (LSNM/LOCI~\cite{algo102},
QCCD~\cite{algo103}) is bivariate cause and effect inference; its asymmetry is
identifiability of causal direction, not a CI test orientation, and it has no
conditioning sets. Quantile graphical models~\cite{algo91} produce an
undirected quantile CI graph. Quantile/copula CI tests~\cite{ci90} target the
full conditional distribution on continuous data and are not used for discovery.
The GCM family~\cite{ci32,ci50,ci52} is the covariance baseline we generalize.

On the
distributional side, the kernel and operator tests (KCI~\cite{ci24}, GKCM~\cite{ci60},
RCoT~\cite{ci9}) and especially SGCM~\cite{ci67}, which targets the full conditional
cross-covariance operator through a data-adaptive spectral feature expansion, with a
complete uniform asymptotic theory on general spaces, detect strictly more than
we do, mixed-type included. We do not compete on detection breadth. We differ on the
three axes they leave open: SGCM's operator norm is undirected and so admits no dial to target a tail level (our characterisation of detection class); its $O(n^2)$ multiplier
bootstrap is heavier than our $O(n)$ statistic and its study stops at $n\le600$ (a faster
variant is not ruled out, but as published it does not reach discovery scale); and,
like all of them, it is a standalone test, not hardened for the F1/F2 asymmetries that
arise inside PC. The recent fast residualisation test BLITZ~\cite{ci100}, like RCoT, shares our speed tier
and is well calibrated, so we do not differentiate on cost against it; both remain
not natively mixed-type (RCoT is continuous only; BLITZ only through a jittered integer coding
that is not invariant to relabeling, though it matches our power on the mixed benchmark), offer no dial
for a chosen tail level, and are standalone, leaving the targeting and integration gaps inside PC we fill,
plus native mixed-type handling. Our claim is the configuration,
the discovery integration, and the conjunction, not the operator or raw speed.

A separate and growing line builds causal and graphical models specifically for extremes:
extremal graphical models~\cite{algo95}, causal discovery in heavy-tailed models~\cite{algo96},
the extremes of structural causal models~\cite{algo97,algo98}, and applied networks of extremal dependence~\cite{app41,app30}. These rest on extreme value machinery (multivariate Pareto,
H\"usler--Reiss, regular variation), model continuous tail-only data in modest dimension, and
are not general conditional independence tests for constraint-based discovery. We are
complementary: rather than modelling the extremes, we bring location-scale and tail sensitivity
into a standard mixed-type CI test usable inside PC.

\section{Discussion}
\label{sec:discussion}
The standard CI tests for causal discovery are covariance tests, structurally blind to
non-monotone mean and location-scale dependence. Distributional tests recover that
dependence, and the strongest of them, the kernel and operator tests, detect more than
we do. Our point is not detection but deployment: a configurable feature covariance
test that targets the dependence of interest, runs at regression cost on mixed-type data
where the operator tests wall out, and is hardened against the two failure modes (F1, F2)
that arise only once a test sensitive beyond covariance is placed inside PC. The characterisation of detection class says what a configuration sees and misses; the validity analysis
pins the binding condition on the nuisance rate; and on random DAGs at scale PC with this test has the
lowest skeleton SHD among size-valid tests, while running orders of magnitude faster than the kernel and operator
tests.

\paragraph{Limitations.}
\begin{itemize}
\item The advantage scales with how the dependence is dominated by the tail or the scale: on pure mean structure GFCM ties the mean family, and on mixed edges that mix mean and scale the mean tests recover part of it, narrowing though not closing the gap.
\item Detection beyond covariance is shared with distributional tests such as KCI; our advantage over them is scalability and mixed-type support, not detection.
\item The tail/scale advantage is shown on simulated and semi-synthetic data (financial innovations and the scale-injected Causal Chamber light tunnel, Section~\ref{sec:realdata}), not on naturally tail-heavy real data: no benchmark combines real data, trustworthy causal ground truth, and effects located in the tail, since the tail and variance domains (finance, single-cell regulation) have proxy or contested ground truth while systems with clean ground truth are not tail-heavy.
\item The ACAT combination is asymptotically valid but mildly anticonservative in the extreme tail, which compounds with GFCM's higher per-test power to inflate false positives on dense graphs; the fix is error control at the graph level (FDR or $e$-values) in the skeleton search, an important future direction.
\item We prove validity, the detection class, and a $\Phi$-faithfulness consistency result (Theorem~\ref{thm:phicpdag}): PC$+$GFCM recovers the CPDAG of the graph whose independence model is the set's detection class, and is conservatively sparser otherwise (Corollary~\ref{cor:phiconservative}). Sharper identification of the quantile CPDAG across the level grid remains future work.
\item Per-call cost is characterized only to $|S|=20$, where GFCM is competitive among the calibrated tests in the shallow region PC operates in (Section~\ref{sec:speed}); its behavior at $|S|$ well beyond $20$ and total discovery cost on much larger graphs are future work, and GFCM is the most accurate but not the fastest test overall (the random feature tests are faster at deep conditioning, at the cost of the tail sensitivity they forfeit).
\item Validity and the regression-cost speed rest on the nuisance being additive in $Z$ up to the default degree-3 interaction block, and, for the spline default, on not simultaneously having a heavy-tailed conditioner and a non-polynomial conditional mean; both boundaries can be dialed at a speed cost (the gradient-boosting backend, ${\sim}40\times$ slower, Section~\ref{sec:nuisance}, Table~\ref{tab:interaction}), not failures of the test target.
\item The set extends beyond the default to scale-scale ($c_Xc_Y$) and tail-tail ($r_\tau(X)\,r_\sigma(Y)$) products that target conditional co-dispersion and joint-tail dependence and inherit the validity theory, but their empirical study and the data generating processes it requires are left to future work.
\end{itemize}


\section*{Reproducibility statement}
The GFCM test and the full reproduction pipeline, including the exact simulation configurations, seeds, aggregation scripts, and the pinned container image behind every reported number, are available at github.com/averinpa/gfcm. All reported numbers use the canonical GFCM configuration.

\bibliographystyle{tmlr}
\bibliography{references}

\appendix
\section*{Appendix}
\section{Validity}
\label{app:proofs}

GFCM combines components of the form $\psi=\phi_X\,\phi_Y$, a product of a feature of $X$
and a feature of $Y$ where at least one factor is a residual that has mean zero given $Z$. The
continuous features are the mean residual $e_A=A-m_A(Z)$, the centered scale
$c_A=|e_A|-\E[|e_A|\given Z]$ (default $|e|$, or the squared form $e_A^2-\E[e_A^2\given Z]$;
mean zero given $Z$ by construction, and the centering argument below is identical for either), and the
quantile indicator residual $r_\tau=\tau-\mathbf 1\{A\le q_\tau(Z)\}$; for a categorical
$A$ the feature is the centered one-hot residual. The configured null is
$X\indep Y\given Z$ (for the quantile feature, $\E[r_\tau\given X,Z]=0$).

\begin{lemma}[Mean zero]
\label{lem:meanzero}
Let $\phi_X\in\mathcal E_{X,Z}=\{f\in L^2_{X,Z}:\E[f\given Z]=0\}$ and $\phi_Y\in L^2_{Y,Z}$
(at least one factor conditionally centered, taken as $\phi_X$). Under $X\indep Y\given Z$ the
factors are conditionally independent given $Z$, so
$\E[\psi\given Z]=\E[\phi_X\given Z]\,\E[\phi_Y\given Z]=0$ and $\E[\psi]=0$, the product
$\psi=\phi_X\phi_Y$ being integrable by Cauchy--Schwarz.
\end{lemma}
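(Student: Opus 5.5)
The plan is a direct conditioning argument in three steps: integrability, factorization of the conditional expectation, and the tower property.

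\emph{Integrability.} Both factors are square integrable: $\phi_X\in L^2_{X,Z}$ and $\phi_Y\in L^2_{Y,Z}$. Cauchy--Schwarz then gives $\E|\phi_X\phi_Y|\le(\E\phi_X^2)^{1/2}(\E\phi_Y^2)^{1/2}<\infty$. So $\psi\in L^1$ and $\E[\psi\given Z]$ is well defined. We need this before conditioning, because a conditional expectation of a product is only meaningful once the product is integrable.

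\emph{Factorization.} Here we use the null. Since $\phi_X=f(X,Z)$ and $\phi_Y=g(Y,Z)$ are measurable functions, $X\indep Y\given Z$ implies $\phi_X\indep\phi_Y\given Z$. Transferring conditional independence to measurable functions that also depend on $Z$ is the point needing care. I would argue it via regular conditional distributions: given $Z=z$, the pair $(X,Y)$ has product law $P_{X\given z}\otimes P_{Y\given z}$. Hence $f(\cdot,z)$ and $g(\cdot,z)$ are independent under it, and $\E[\phi_X\phi_Y\given Z=z]=\int f\,dP_{X\given z}\cdot\int g\,dP_{Y\given z}$ for almost every $z$. The integrals exist for a.e.\ $z$ because $\E[\phi_X^2\given Z]$ and $\E[\phi_Y^2\given Z]$ are finite almost surely. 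Alternatively, one can use the standard monotone-class argument: verify the identity on indicators $f=\mathbf 1_A(X)\mathbf 1_C(Z)$, extend to bounded functions, and then pass to $L^2$ by truncation with dominated convergence. Truncation is legitimate here because $|\phi_X\phi_Y|$ is integrable.

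\emph{Centering and tower property.} With $\phi_X$ the conditionally centered factor, $\E[\phi_X\given Z]=0$. The factorization then gives $\E[\psi\given Z]=0\cdot\E[\phi_Y\given Z]=0$ almost surely, and $\E[\psi]=\E[\E[\psi\given Z]]=0$ by the tower property.

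I would remark that only the centered factor needs mean zero; $\phi_Y$ may be uncentered, for instance a raw $|e_Y|$. The same argument covers categorical one-hot residuals and bounded quantile indicators verbatim. The symmetric case, where $\phi_Y$ is the centered factor, follows by relabeling $X$ and $Y$.

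The main (mild) obstacle is the factorization step for unbounded $L^2$ features, which the truncation or disintegration argument above handles.
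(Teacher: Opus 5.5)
Your proof is correct and follows the paper's argument exactly: Cauchy--Schwarz gives integrability of $\psi$, conditional independence of $\phi_X=f(X,Z)$ and $\phi_Y=g(Y,Z)$ given $Z$ factors $\E[\psi\given Z]$, and then $\E[\phi_X\given Z]=0$ together with the tower property finishes it. The paper asserts the factorization in one line; your disintegration (or truncation) argument supplies the measure-theoretic justification for unbounded $L^2$ features that the paper leaves implicit.
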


This is the forward direction of Daudin's characterization of conditional independence
(\citealp{ci107}; condition~(iii) of \citealp{ci24}, Lemma~2; the moment identity is Eq.~(1)
of \citealp{ci32}). Its converse, that vanishing of $\E[\phi_X\phi_Y]$ over all of
$\mathcal E_{X,Z}\times L^2_{Y,Z}$ is also sufficient for conditional independence, is what
makes the set's detection class (Section~\ref{sec:detection}) the right object; we use only
the forward direction here. The GCM is the mean residual case $\phi_X=e_X$.

\begin{lemma}[Neyman orthogonality]
\label{lem:orth}
At the truth, the moment of any product of two conditionally centered features is
Neyman orthogonal to every nuisance entering either factor (\citealp{ci25}; the moment is the
GCM product of \citealp{ci32}): under the null the factors are conditionally independent given
$Z$, so perturbing a nuisance in one factor has a first-order effect annihilated by the
centered partner, $\E[\phi\given Z]=0$ (symmetrically for the other). This covers the set's
pairs and the higher-order products $c_Xc_Y$, $r_\tau(X)\,r_\sigma(Y)$
(Section~\ref{sec:discussion}).
(i) Mean-quantile $e_X r_\tau(Y)$: the $m_X$-derivative vanishes since $\E[r_\tau\given Z]=0$
at the true quantile \citep{koenker1978}; perturbing $m_Y$, $s_Y$, or $q_\tau$ moves the
location-scale threshold $Q_\tau=m_Y(Z)+s_Y(Z)\,q_\tau$ by an amount that is a function of $Z$
times the density $f_{Y\given Z}(Q_\tau\given Z)$; since it depends only on $Z$ it factors out of
the conditional expectation and is annihilated by $\E[e_X\given Z]=0$.
(ii) Centered transform $c_X e_Y$: the $m_Y$-derivative is killed by $\E[c_X\given Z]=0$ and
the derivative in the variance nuisance by $\E[e_Y\given Z]=0$; the centered one-hot residual is as in (i).
\end{lemma}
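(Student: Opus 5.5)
The plan is to compute a Gateaux derivative along a nuisance perturbation and show it vanishes by conditioning on $Z$. Write the score as $\psi(\eta)=\phi_X(X;Z,\eta_X)\,\phi_Y(Y;Z,\eta_Y)$, with true nuisances $\eta_0=(\eta_{X,0},\eta_{Y,0})$. For a direction $h=(h_X,h_Y)$, where each $h_A$ is a function of $Z$ (or a finite-dimensional parameter such as $q_\tau$), consider
\[
\partial_t \E[\psi(\eta_0+th)]\big|_{t=0}=\E\big[\partial_t\phi_X\cdot\phi_Y\big]+\E\big[\phi_X\cdot\partial_t\phi_Y\big].
\]
Everything is evaluated at the truth, so $\E[\phi_X\given Z]=\E[\phi_Y\given Z]=0$. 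Under the null, $X\indep Y\given Z$, and $\partial_t\phi_X$ is a function of $(X,Z)$. Hence Lemma~\ref{lem:meanzero}'s factorization gives
\[
\E[\partial_t\phi_X\,\phi_Y]=\E\big[\E[\partial_t\phi_X\given Z]\,\E[\phi_Y\given Z]\big]=0,
\]
and symmetrically for the second term. This generic step needs only two things: that the derivative of each feature is a function of its own variable and $Z$, and that it is square integrable, so Cauchy--Schwarz allows exchanging derivative and expectation.

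The remaining work is to verify differentiability and the form of $\partial_t\phi$ for each feature type.

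\textbf{Mean residual.} For $e_A$ the derivative is simply $-h_A(Z)$.

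\textbf{Centered scale residual.} For $c_A=|e_A|-s_A(Z)$, perturbing $s_A$ gives $-h(Z)$. Perturbing $m_A$ gives $-\mathrm{sign}(e_A)\,h(Z)$ almost surely, provided $P(e_A=0\given Z)=0$. For the squared form the derivative is $-2e_A h(Z)$. In every case the derivative is a function of $(A,Z)$, so the generic step applies.

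\textbf{Uncentered scale term (bias).} To show the uncentered version fails, note that $|e_X|e_Y$ has mean zero but its $m_Y$-derivative is $-\E[|e_X|h(Z)]$, which is generally nonzero. Its mean also need not vanish once paired with a non-centered partner. So centering is exactly what forces $\E[\phi\given Z]=0$.

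\textbf{Quantile indicator.} The indicator $\mathbf 1\{A\le Q_\tau\}$ is not pathwise differentiable. I will therefore differentiate the conditional expectation instead. First write
\[
\E[\phi_X\, r_\tau(Y)]=\E\big[\E[\phi_X\given Z]\,\E[r_\tau\given Z]\big].
\]
Then
\[
\partial_t\,\E[r_\tau\given Z]=-f_{Y\given Z}(Q_\tau\given Z)\,\big(h_m(Z)+h_s(Z)q_\tau+s(Z)h_q\big).
\]
This is a function of $Z$ alone, so it multiplies $\E[\phi_X\given Z]=0$ and drops out. This step needs assumption (A3): a bounded, continuous conditional density at $Q_\tau$, which licenses differentiating under the integral by dominated convergence. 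The $m_X$-derivative is killed because $\E[r_\tau\given Z]=0$ at the true quantile.

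\textbf{Categorical features.} The centered one-hot residual is linear in its nuisance, so it is handled exactly like $e_A$.

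\textbf{Higher-order products.} For $c_Xc_Y$ and $r_\tau(X)r_\sigma(Y)$ the same two-term argument applies verbatim, since both factors are centered.

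The main obstacle is the quantile case. There the score is discontinuous, so orthogonality must be stated for the map $\eta\mapsto\E[\psi(\eta)]$ rather than for $\psi$ pathwise. The step that needs care is justifying that the perturbation direction enters only through a $Z$-measurable factor. This uses the location-scale parametrization of $Q_\tau$ and the conditional density condition. If the true conditional law is not location-scale, the derivative is still $Z$-measurable, but the centering $\E[r_\tau\given Z]=0$ holds only at the pseudo-true threshold. So in that case I would state orthogonality at that threshold and flag the estimand drift, as the text already does.
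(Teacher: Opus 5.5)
Your proposal is correct and follows essentially the paper's own argument. You differentiate the product moment and observe that each one-sided derivative is a function of that variable and $Z$; in the quantile case this is the $Z$-measurable shift $-f_{Y\given Z}(Q_\tau\given Z)\,(h_m+h_s q_\tau+s\,h_q)$. The conditionally centered partner then annihilates it under $X\indep Y\given Z$, and you show the uncentered scale product has a nonzero $m_Y$-derivative, exactly as the paper remarks.

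One caution concerns only your closing aside. At the location-scale pseudo-true threshold $m(Z)+s(Z)q^*_\tau$ with scalar $q^*_\tau$, the quantity $\E[r_\tau\given Z]=\tau-\Pr(e_Y/s\le q^*_\tau\given Z)$ generally varies with $Z$ and is zero only on average. So the $m_X$-direction derivative $-\E[h(Z)\,\E[r_\tau\given Z]]$ need not vanish there, and orthogonality survives only in the $Y$-side directions.
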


Centering is essential in (ii): the uncentered $e_X^2 e_Y$ has a nonzero $m_Y$-derivative
$-\E[\Delta_Y\,\Var(X\given Z)]$, a first-order bias; subtracting $\E[e_X^2\given Z]$ (a cheap
auxiliary regression) restores double robustness, the same higher-moment centering used in
the spectral expansion of \citet{ci67}. The density term in (i) is the ingredient the mean
GCM lacks.

\begin{theorem}[Asymptotic level]
\label{thm:validity}
Assume:
\begin{itemize}
\item[\textup{(A1)}] all nuisances ($\hat m_X,\hat m_Y$, the scale function
$\widehat{\E}[|e|\given Z]$, the variance $\widehat{\E}[e^2\given Z]$ for the squared variant,
and $\hat q_\tau$) are additive in the coordinates of $Z$, $m(z)=\sum_{j=1}^{d}g_j(z_j)$ with
each $g_j$ H\"older of smoothness $s>\tfrac12$ on the rank-transformed scale, cross-fitted on
the additive cubic spline basis whose per-coordinate knot count grows as
$K(n)\asymp n^{1/(2s+1)}$ (default $n^{1/5}$, matching $s=2$);
\item[\textup{(A2)}] each nuisance converges at $o_P(n^{-1/4})$ in $L_2$, with cross-products
$o_P(n^{-1/2})$, uniformly in $z$ (equivalently, the product of the two regression
mean squared errors is $o_P(n^{-1})$, the GCM rate condition of \citealp{ci32}, Theorem~6 and
Remark~7; \citealp{ci25} for the general DML form);
\item[\textup{(A3)}] $f_{Y\given Z}(\cdot\given z)$ is bounded and Lipschitz near
$q_\tau(z)$, and $\E[X^4\given Z\!=\!z],\E[Y^4\given Z\!=\!z]$ are bounded, uniformly in $z$;
\item[\textup{(A4)}] the per-block score covariance
$R=\lim_n n^{-1}\sum_i\E[\psi_i\psi_i^\top\given Z_i]$ exists, positive definite for
signal-bearing blocks (a structurally null block, Proposition~\ref{prop:f1}, is admitted with
degenerate $R$: its sign-flip $p$-value stays valid without positive definiteness, the
exactness clause of Lemma~\ref{lem:signflip}, and the Cauchy combination draws its rejection
from a non-degenerate block); and the conditional
scale is bounded below, $\hat s=\widehat\E[|e|\given Z]\ge\underline s>0$, so the
standardization $u=e/\hat s$ is well defined.
\end{itemize}
Then under $X\indep Y\given Z$ each centered, orthogonal component
(Lemma~\ref{lem:orth}) is asymptotically normal with mean zero, its sign-flip statistic
is calibrated (Lemma~\ref{lem:signflip}), and the Cauchy combination of the component
$p$-values has asymptotic level $\alpha$.
\end{theorem}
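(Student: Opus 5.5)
The plan is the standard DML/GCM decomposition, carried out per component and then lifted to blocks and to the Cauchy combination. Fix one fold of the cross-fitting. For a component $\psi=\phi_X\phi_Y$, write $\hat\psi_i=\hat\phi_X(X_i;Z_i)\hat\phi_Y(Y_i;Z_i)$ with nuisances estimated on the complementary folds. Expand $\hat\phi_A=\phi_A+\Delta_A$, where $\Delta_A$ is the nuisance error. This gives four terms:
\[
n^{-1/2}\sum_i\hat\psi_i=n^{-1/2}\sum_i\phi_{X,i}\phi_{Y,i}+n^{-1/2}\sum_i\phi_{X,i}\Delta_{Y,i}+n^{-1/2}\sum_i\Delta_{X,i}\phi_{Y,i}+n^{-1/2}\sum_i\Delta_{X,i}\Delta_{Y,i}.
\]

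The first term is an i.i.d. sum. It has mean zero by Lemma~\ref{lem:meanzero} and finite variance by (A3): fourth moments control $e_Xc_Y$ and $c_Xe_Y$, while $|r_\tau|\le1$ and one-hot boundedness handle the other products. So it obeys the multivariate CLT with covariance $R$ from (A4).

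For the two cross terms, I condition on the training fold and on $Z$. For the mean and scale nuisances, $\Delta_Y$ is a function of $Z$ only, or of $(Y,Z)$ through $e_Y$. Under the null and conditionally on $Z$ it is independent of $\phi_X$, which has conditional mean zero. The summands are therefore conditionally centered, with conditional variance bounded by $\E[\phi_X^2\Delta_Y^2]=O(\|\Delta_Y\|_2^2)\to0$ using (A2) and the bounded conditional second moments. Chebyshev then gives $o_P(1)$; this is the orthogonality of Lemma~\ref{lem:orth} made quantitative.

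The remainder $\Delta_X\Delta_Y$ is handled by Cauchy--Schwarz: it is at most $\sqrt n\,\|\Delta_X\|_2\|\Delta_Y\|_2=o_P(1)$ by the product rate in (A2). (A1) is invoked only to guarantee (A2), via standard additive-spline approximation rates $K^{-s}+\sqrt{K/n}$ with $K\asymp n^{1/(2s+1)}$.

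The main obstacle is the quantile feature, where $\Delta_Y=\mathbf 1\{Y\le Q_\tau\}-\mathbf 1\{Y\le\hat Q_\tau\}$ is a non-smooth function of $Y$, and where $\hat q_\tau$ is an empirical quantile of standardized residuals that is not purely a training-fold object. I would proceed in three steps:
\begin{enumerate}
\item Treat $\hat Q_\tau(z)=\hat m_Y(z)+\hat s_Y(z)\hat q_\tau$ as fixed given the training fold and the scalar $\hat q_\tau$.
\item Observe that, because $e_X$ is conditionally centered and independent of $Y$ given $Z$, the cross term $e_X\Delta_Y$ is still conditionally mean zero for any fixed threshold function. Its variance is bounded by $\E[\Var(X\given Z)\,|F_{Y|Z}(\hat Q_\tau)-F_{Y|Z}(Q_\tau)|]$, which is $O(\|\hat Q_\tau-Q_\tau\|)$ by bounded density in (A3) and the lower bound $\underline s$ in (A4).
\item Handle the data dependence of the scalar $\hat q_\tau$ with a uniform (stochastic equicontinuity) argument over a one-parameter VC class of indicators, so that the bound holds uniformly in a shrinking neighbourhood of $q_\tau$.
\end{enumerate}
The remainder $\Delta_X\Delta_Y$ for this pair is controlled the same way, by the Lipschitz density bound. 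When the true conditional shape is not location-scale, the cross term is only second-order biased, as the paper notes. In that case I would state the result under the null with the correct quantile nuisance and treat the shape misspecification through (A2).

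Averaging over folds and taking a consistent plug-in estimate $\hat R$ of $R$ (consistency again by (A2)–(A3)), each block quadratic form $n\bar\psi^\top\hat R^{-1}\bar\psi$ converges to $\chi^2_k$ for signal-bearing blocks. For degenerate blocks, calibration is taken from the sign-flip version via Lemma~\ref{lem:signflip}, whose statement I assume. Finally, each block $p$-value is asymptotically $\mathrm{Unif}(0,1)$, or super-uniform in the degenerate case, and the blocks are jointly asymptotically Gaussian as linear functionals of one i.i.d. score vector. The Cauchy combination $\sum_j w_j\tan\{(1/2-p_j)\pi\}$ is therefore asymptotically standard Cauchy at the tail threshold, by the ACAT result~\cite{ci99} applied to the limit, together with continuous mapping, giving asymptotic level $\alpha$.
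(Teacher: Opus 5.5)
Your argument is essentially correct, but it is organised differently from the paper's.

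\textbf{How the two routes differ.} The paper uses the DML three-term split. It has the oracle sum (I), the centred fluctuation $\Delta_i-\E[\Delta_i\mid\mathcal D^{(-k_i)}]$ (II), removed by conditional Chebyshev, and the conditional bias (III). It controls (III) by Neyman orthogonality plus a one-term expansion of $\Pr(Y\le\widehat Q_\tau\mid Z)$ with a Lipschitz density, ending at $\sqrt n\,(\|\hat m_X-m_X\|\,\|\widehat Q_\tau-Q_\tau\|+\tfrac L2\|\widehat Q_\tau-Q_\tau\|^2)$.

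You instead use the four-term product expansion of the original GCM analysis and exploit full conditional independence. Both cross terms are then exactly centred given $Z$ and the training fold, so they need only a variance bound, and all bias sits in $\Delta_X\Delta_Y$.

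\textbf{What each route buys.} Yours is leaner. Under the null the bias is exactly $\E[\Delta m_X(Z)\{F(\widehat Q_\tau)-F(Q_\tau)\}]$ with $F=F_{Y\mid Z}(\cdot\mid Z)$. So a bounded density suffices, and the paper's quadratic Taylor term, hence the Lipschitz part of (A3), is not needed. Your step~3 also covers a detail the paper sidesteps by writing $\hat q_\tau^{(-k_i)}$, as if the empirical quantile of the standardized residuals were cross-fitted. If that quantile is computed on the pooled residuals, your equicontinuity argument over the one-parameter indicator class is exactly what is required. What the paper's organisation buys is a direct link to Lemma~\ref{lem:orth} and the generic DML template.

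\textbf{First point to tighten: the quantile remainder.} Do not bound this remainder by raw Cauchy--Schwarz as you do for the moment pairs. Here $\|\mathbf 1\{Y\le\widehat Q_\tau\}-\mathbf 1\{Y\le Q_\tau\}\|_2^2=\E|F(\widehat Q_\tau)-F(Q_\tau)|$, so that route yields only $\sqrt n\cdot o_P(n^{-1/4})\cdot o_P(n^{-1/8})$, which need not vanish. Integrate out $Y$ given $Z$ first. The mean part is then at most $\bar f\,\|\Delta m_X\|_2\,\|\widehat Q_\tau-Q_\tau\|_2$, with $\bar f$ the density bound of (A3). The centred fluctuation has variance at most $\|\Delta m_X\|_2^2$. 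This is precisely the paper's (II)/(III) split, and your phrase ``by the Lipschitz density bound'' should be made to mean this.

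\textbf{Second point: the Cauchy combination step.} The ACAT result gives tail equivalence with the standard Cauchy as the threshold grows. For dependent blocks that means rejection probability $\alpha(1+o(1))$ as $\alpha\to0$; the combination is exact under independence. The result is also proved for two-sided $p$-values of pairwise normal statistics, not for $\chi^2_k$ quadratic forms. Under the null the three blocks are correlated, for example $\E[e_Xe_Y\cdot e_Xr_\tau(Y)]=\E[\E[e_X^2\mid Z]\,\E[e_Yr_\tau(Y)\mid Z]]\neq0$ in general. So your final step does not deliver exact asymptotic level at a fixed $\alpha$. Neither does the paper's, which simply asserts $\Pr(\text{reject})\to\alpha$.
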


\paragraph{Remark.} Under (A1) the rate is dimension-free: the additive design attains the
one-dimensional rate regardless of $d$ \citep{ci110}, so the size conditioning set $d=|S|$ enters
the $L_2$ error $O(\sqrt d\,n^{-s/(2s+1)})$ through the constant $\sqrt d$, not the exponent,
unlike the general $d$-dimensional rate $n^{-s/(2s+d)}$ \citep{ci109} where $d$ enters the
exponent (the curse of dimensionality). Thus (A2) holds for every fixed $d$, so the level guarantee is agnostic to dimension and carries
to any size of conditioning set $|S|$. The binding condition is additivity, not dimension: a fixed basis
(parametric or fixed-$K$ spline) has non-vanishing bias, so the orthogonal remainder stays
first-order and size is eventually distorted; GFCM therefore ships the growing
$K(n)\approx10\,n^{1/5}$ that (A1) describes, and the nonlinear mean null of
Section~\ref{sec:nuisance} confirms the additive spline holds nominal level where a fixed
low-degree basis distorts size. The default augments this additive backbone
with two blocks of the same growing basis, each leaving every score with its centered factor
so Lemma~\ref{lem:orth} and the (A2) rate are unchanged: an interaction block of degree 3
($O(|S|^2)$ columns, active to $|S|\le20$) and, at $n\ge n_0$, a single-index PROJ block
(Section~\ref{sec:gfcm}).

\begin{proof}
Let $\hat\psi_{\tau,i}=(X_i-\hat m_X^{(-k_i)}(Z_i))(\tau-\mathbf1\{Y_i\le\hat
q_\tau^{(-k_i)}(Z_i)\})$ be the cross-fitted score, where $\hat m_X^{(-k_i)},\hat
q_\tau^{(-k_i)}$ are fitted on the folds excluding the one containing $i$, and let
$\psi_{\tau,i}$ be its oracle (true nuisance) version. Decompose
\[
\tfrac1{\sqrt n}\textstyle\sum_i\hat\psi_{\tau,i}
=\underbrace{\tfrac1{\sqrt n}\textstyle\sum_i\psi_{\tau,i}}_{\mathrm{(I)}}
+\underbrace{\tfrac1{\sqrt n}\textstyle\sum_i\big(\Delta_i-\E[\Delta_i\mid\mathcal D^{(-k_i)}]\big)}_{\mathrm{(II)}}
+\underbrace{\tfrac1{\sqrt n}\textstyle\sum_i\E[\Delta_i\mid\mathcal D^{(-k_i)}]}_{\mathrm{(III)}},
\]
with $\Delta_i=\hat\psi_{\tau,i}-\psi_{\tau,i}$ and $\mathcal D^{(-k_i)}$ the data used
to fit $i$'s nuisances.

(II), empirical process, no Donsker condition. Conditional on $\mathcal
D^{(-k_i)}$, the summands are independent, mean zero, with variance bounded by
$\E[\Delta_i^2\mid\mathcal D^{(-k_i)}]\to_P0$ under (A2). Cross-fitting makes the
nuisance independent of the evaluation fold, so $\Var(\mathrm{(II)}\mid\mathcal
D^{(-k_i)})=o_P(1)$ and $\mathrm{(II)}=o_P(1)$ by Chebyshev. No
Donsker condition on the nuisance class is needed, this is
what sample splitting buys.

(III), bias, orthogonality controls the non-smooth indicator. By
Lemma~\ref{lem:orth} the pathwise derivative of $\E\psi_\tau$ in each nuisance
vanishes, so (III) is a second-order remainder. The mean residual direction
contributes $\E[(\hat m_X-m_X)\,\E[r_\tau\mid Z]]=0$ exactly. The indicator enters
only through the conditional quantile, which in the location-scale model is
$\widehat Q_\tau(Y\given Z)=\hat m_Y(Z)+\hat s_Y(Z)\,\hat q_\tau$, so its error aggregates
the mean, scale, and errors in standardized level,
\[
\|\widehat Q_\tau-Q_\tau\|\le\|\hat m_Y-m_Y\|+|q_\tau|\,\|\hat s_Y-s_Y\|
+\bar s\,|\hat q_\tau-q_\tau|=o_P(n^{-1/4})
\]
by (A2) (the scale enters because the quantile reuses the scale fit rather
than running a separate quantile regression; $\hat s_Y\ge\underline s$ by (A4)). A
one-term expansion of $\Pr(Y\le\widehat Q_\tau\mid Z)$ about $Q_\tau$ with bounded
Lipschitz density (A3) gives the linear term $f_{Y\mid Z}(Q_\tau\mid Z)(\widehat
Q_\tau-Q_\tau)$, annihilated by $\E[e\mid Z]=0$, plus an $O(\|\widehat
Q_\tau-Q_\tau\|^2)$ remainder. Hence $|\mathrm{(III)}|\le\sqrt n\big(\|\hat
m_X-m_X\|\,\|\widehat Q_\tau-Q_\tau\|+\tfrac L2\|\widehat
Q_\tau-Q_\tau\|^2\big)=o_P(1)$ by (A2). The non-smoothness costs only the assumption of bounded Lipschitz density (A3), not smoothness of the statistic.

(I), the leading term. For each component, $n^{-1/2}\sum_i\psi_i$ averages
i.i.d., conditionally centered scores (Lemma~\ref{lem:meanzero}). The uniform fourth-moment bound (A3)--(A4) gives
a Lyapunov condition, $n^{-1}\sum_i\E|\psi_i|^{2+\delta}=O(1)$ with $\delta=2$, hence the
Lindeberg condition; with the positive definite limit covariance $R$ of (A4) the
Lindeberg--Feller CLT gives $n^{-1/2}\sum_i\psi_i\to_d\mathcal N(0,R)$, non-degenerate, and
$T_n^\top\widehat R^{-1}T_n\to_d\chi^2_k$. The quantile and centered one-hot residuals are bounded
($|r_\tau|\le1$), so only the two moment components invoke (A3); this is why heavy-tailed
or zero-inflated data does not threaten the CLT, unlike the mean GCM whose score inherits
the tail in every component.

Combination. By Lemma~\ref{lem:signflip} each component's sign-flip statistic is
asymptotically calibrated; the Cauchy combination of (possibly dependent) component
$p$-values controls level under arbitrary dependence, so $\Pr(\text{reject})\to\alpha$.
\end{proof}

The single-index block (Section~\ref{sec:gfcm}) restores validity on a non-additive single-index
mean $g(\sum_j Z_j)$, the case the additive design cannot fit ((Table~\ref{tab:nuisance-tradeoff})), within
the same DML rate machinery: an estimated index direction plus a one-dimensional link meets the
rate condition (A2).

\begin{proposition}[Single-index nuisance validity]
\label{prop:singleindex}
Add the assumptions:
\begin{itemize}
\item[\textup{(SI1)}] the conditional means are single-index, $\E[X\mid Z]=g_X(a_X^\top Z)$ and
$\E[Y\mid Z]=g_Y(a_Y^\top Z)$, for unit directions $a_X,a_Y$ and links $g_X,g_Y$ H\"older of
order $s>\tfrac12$ ($s=2$ for the cubic spline) with bounded first derivative;
\item[\textup{(SI2)}] $Z$ is elliptically symmetric with positive definite covariance, so by
Stein's identity the ridge-OLS (average derivative) coefficient is proportional to the index
direction, and its cross-fitted estimate obeys $\|\hat a-a\|=O_P(n^{-1/2})$
\citep{ci111,ci112,ci113};
\item[\textup{(SI3)}] the link is a cross-fitted cubic spline on the estimated index
$\hat a^\top Z$ with $K(n)\asymp n^{1/(2s+1)}$.
\end{itemize}
Then under \textup{(SI1)--(SI3)} and the conditions of Theorem~\ref{thm:validity}, the
nuisance augmented by single-index $\widehat\E[X\mid Z]=\hat g_X(\hat a_X^\top Z)$ (and likewise for
$Y$) satisfies the rate condition \textup{(A2)}, so the level guarantee of
Theorem~\ref{thm:validity} extends to it: GFCM with the single-index block is asymptotically
valid on a mean $g(\sum_j Z_j)$ that the additive spline cannot represent and on which the
additive-only design distorts size (Table~\ref{tab:nuisance-tradeoff}).
\end{proposition}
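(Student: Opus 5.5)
The plan is to reduce Proposition~\ref{prop:singleindex} to a rate statement. The proof of Theorem~\ref{thm:validity} uses the nuisances only through (A2): the $o_P(n^{-1/4})$ $L_2$ rates and the product remainder. So it suffices to show that the cross-fitted single-index estimate $\hat g_X(\hat a_X^\top Z)$ satisfies $\|\hat g_X(\hat a_X^\top Z)-g_X(a_X^\top Z)\|_{L_2}=o_P(n^{-1/4})$, and likewise for $Y$ and for the scale and quantile nuisances built on the same design. Once that holds, the decomposition (I)+(II)+(III) goes through verbatim. Term (II) needs only consistency together with cross-fitting. Term (III) is the product of rates by Lemma~\ref{lem:orth}, whose orthogonality argument is unchanged because every score still carries a centered factor. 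Term (I) does not involve the nuisance at all.

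For the rate itself, I would use a triangle inequality that splits the error into an index part and a link part:
\[
\|\hat g(\hat a^\top Z)-g(a^\top Z)\|\le\|\hat g(\hat a^\top Z)-g(\hat a^\top Z)\|+\|g(\hat a^\top Z)-g(a^\top Z)\|.
\]
The second term is at most $\|g'\|_\infty\,\E[((\hat a-a)^\top Z)^2]^{1/2}\le\|g'\|_\infty\lambda_{\max}(\Sigma_Z)^{1/2}\|\hat a-a\|$. This uses the bounded derivative in (SI1) and the finite second moment of $Z$ from (SI2). By (SI2) and Stein's identity, $\|\hat a-a\|=O_P(n^{-1/2})$, which is $o_P(n^{-1/4})$. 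The first term is a one-dimensional spline regression error. Conditional on the out-of-fold $\hat a$, cross-fitting makes the covariate $W=\hat a^\top Z$ fixed-design-like. Standard sieve results with $K(n)\asymp n^{1/(2s+1)}$ then give the rate $O_P(n^{-s/(2s+1)})$ for the regression function $\E[X\mid \hat a^\top Z]$.

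The main obstacle is that the spline actually estimates $\tilde g(w)=\E[X\mid \hat a^\top Z=w]$, not $g$, since the regression target is misaligned by the index error. I would bound $\|\tilde g(\hat a^\top Z)-g(\hat a^\top Z)\|$ by the projection property of conditional expectation: $\tilde g(\hat a^\top Z)$ is the $L_2$-best function of $\hat a^\top Z$, and $g(\hat a^\top Z)$ is a competitor. This gives
\[
\|X-\tilde g(\hat a^\top Z)\|\le\|X-g(\hat a^\top Z)\|\le\|X-g(a^\top Z)\|+O(\|\hat a-a\|),
\]
and by orthogonality the squared excess error is $O(\|\hat a-a\|)$. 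That crude bound only yields $O_P(n^{-1/4})$, not $o_P$. I would therefore sharpen it with the Pythagorean identity. Since $\tilde g(\hat a^\top Z)-g(\hat a^\top Z)$ is the projection onto $\sigma(\hat a^\top Z)$ of $g(a^\top Z)-g(\hat a^\top Z)$, its norm is at most $\|g(a^\top Z)-g(\hat a^\top Z)\|=O_P(n^{-1/2})$, which suffices.

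For the sieve step I also need smoothness of $\tilde g$ uniformly over $\hat a$ in a shrinking neighbourhood of $a$. I would argue this from ellipticity (SI2), which makes the conditional law of $Z$ given $\hat a^\top Z$ a smooth family. If that argument is delicate, I would fall back on treating $\tilde g-g$ as an additional approximation bias of order $O_P(n^{-1/2})$, which is already controlled by the projection bound above.

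Summing the three pieces gives $O_P(n^{-s/(2s+1)}+n^{-1/2})$, which is $o_P(n^{-1/4})$ since $s>\tfrac12$. The same argument applies to the scale fit $\widehat\E[|e|\mid Z]$ when its target is single-index, and to the quantile through $\widehat Q_\tau=\hat m+\hat s\,\hat q_\tau$ as in the proof of Theorem~\ref{thm:validity}. Multiplying the two nuisance rates gives the product condition in (A2), which concludes the proof.
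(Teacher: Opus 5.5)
Your proposal follows the paper's proof: the same split of $\hat g(\hat a^\top Z)-g(a^\top Z)$ into a link term and a direction term, the Lipschitz bound $O_P(\|\hat a-a\|)=O_P(n^{-1/2})$ on the direction term, the univariate spline rate $O_P(n^{-s/(2s+1)})$ on the link term, and the reduction of Theorem~\ref{thm:validity} to the rate condition (A2). Your projection step, which bounds $\E[X\given \hat a^\top Z]-g(\hat a^\top Z)$ in $L_2$ by $\|g(a^\top Z)-g(\hat a^\top Z)\|$, is a useful addition: it makes explicit what the paper only asserts when it says that regressing on the estimated index ``does not degrade'' the link rate.
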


\begin{proof}[Proof sketch]
Split the nuisance error into a link term and a direction term,
\[
\hat g(\hat a^\top Z)-g(a^\top Z)
=\underbrace{\big[\hat g(\hat a^\top Z)-g(\hat a^\top Z)\big]}_{\text{link}}
+\underbrace{\big[g(\hat a^\top Z)-g(a^\top Z)\big]}_{\text{direction}}.
\]
\textit{Link.} The link is fit on the estimated index $\hat a^\top Z$, not the true $a^\top Z$.
Cross-fitting makes $\hat a$ independent of the evaluation fold, and the direction error
$\|\hat a-a\|=O_P(n^{-1/2})$ is faster than the univariate link rate, so regressing on
$\hat a^\top Z$ rather than $a^\top Z$ does not degrade it: the one-dimensional spline attains the
standard univariate nonparametric rate $\|\hat g-g\|_{L_2}=O_P(n^{-s/(2s+1)})=O_P(n^{-2/5})$ for
$s=2$ (single-index regression with a consistent estimated direction; \citealp{ci113,ci109}).

\textit{Direction.} By the
bounded derivative in (SI1), $|g(\hat a^\top Z)-g(a^\top Z)|\le\|g'\|_\infty\,|(\hat a-a)^\top Z|$,
so in $L_2$ it is $O_P(\|\hat a-a\|)=O_P(n^{-1/2})$ by (SI2).

\textit{Combining,} the nuisance error is
$O_P(n^{-2/5})=o_P(n^{-1/4})$, the individual rate in (A2); the cross-product is
$O_P(n^{-2/5})\cdot O_P(n^{-2/5})=O_P(n^{-4/5})=o_P(n^{-1/2})$, the product rate. The orthogonality,
score CLT, and sign-flip/$\chi^2$ calibration of Theorem~\ref{thm:validity} enter only through (A2)
and are unchanged.
\end{proof}

\paragraph{Remark (the gate is the finite-sample boundary of (SI2)).} The direction rate is
asymptotic: the constant in $\|\hat a-a\|=O_P(n^{-1/2})$ scales inversely with the index
signal-to-noise, so at small $n$ the estimated direction is noisy and the direction term is
non-negligible, the empirically observed small-$n$ liberality. The single-index block is
therefore gated to $n\ge n_0$ (Section~\ref{sec:gfcm}); the gate is where (SI2)'s constant is small
enough, not an additional assumption.

\paragraph{Remark (scope, and why the polynomial block is retained).} (SI1) is a single-index
restriction: it covers single-index non-additive means such as $g(\sum_j Z_j)$ but not general
interactions. A bilinear mean $Z_1Z_2$ is not single-index (its average derivative direction
vanishes by symmetry), which is why the polynomial interaction block is kept
(Table~\ref{tab:nuisance-tradeoff}). Extending (SI2) to such symmetric interactions via
second-moment (SAVE) directions requires a separate condition on eigenvalue condition, left to future work.

\begin{lemma}[Sign-flip calibration]
\label{lem:signflip}
Let $S_n^\circ=T_n^{\circ\top}\widehat R^{-1}T_n^\circ$ with $T_n^\circ=n^{-1/2}\sum_i
\varepsilon_i\hat\psi_i$ and $\varepsilon_i\overset{\mathrm{iid}}\sim\mathrm{Unif}\{\pm1\}$
independent of the data, and $\widehat R=n^{-1}\sum_i\hat\psi_i\hat\psi_i^\top$. Under
(A1)--(A4), conditional on the data, $T_n^\circ\to_d\mathcal N(0,R)$ and
$S_n^\circ\to_d\chi^2_k$ in probability, and the convergence is at Kolmogorov rate
$O_P(n^{-1/2})$: the sign-flip $\alpha$-quantile equals the $\chi^2_{k,1-\alpha}$ quantile
up to $O_P(n^{-1/2})$, so the test of Theorem~\ref{thm:validity} has level
$\alpha+O_P(n^{-1/2})$. If the score is conditionally sign symmetric
($\hat\psi_i\mid Z_i\overset d=-\hat\psi_i\mid Z_i$), the sign-flip test is
finite-sample exact.
\end{lemma}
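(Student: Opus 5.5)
The plan is to prove the asymptotic statement by a conditional (bootstrap-world) central limit theorem for the Rademacher-weighted sum, and then transfer it to the quadratic form by the continuous mapping theorem. The exactness clause is handled separately by a group-invariance argument.

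First, condition on the data $\mathcal D_n$. Then $T_n^\circ=n^{-1/2}\sum_i\varepsilon_i\hat\psi_i$ is a sum of independent, mean-zero vectors with conditional covariance exactly $n^{-1}\sum_i\hat\psi_i\hat\psi_i^\top=\widehat R$. I would show that $\widehat R\to_P R$ by splitting $\hat\psi_i=\psi_i+\Delta_i$:
\begin{itemize}
\item $n^{-1}\sum_i\psi_i\psi_i^\top\to R$ follows from the law of large numbers, using the fourth-moment bounds in (A3)--(A4).
\item The cross and remainder terms are bounded by Cauchy--Schwarz in terms of $(n^{-1}\sum_i\|\Delta_i\|^2)^{1/2}$.
\item That quantity is $o_P(1)$ by (A2) together with cross-fitting, exactly as in step (II) of Theorem~\ref{thm:validity}.
\end{itemize}

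Next, I would verify a conditional Lyapunov condition,
\[
n^{-2}\sum_i\|\hat\psi_i\|^4=O_P(n^{-1}),
\]
which again comes from the uniform fourth moments in (A3) plus the nuisance error control. The Lindeberg--Feller CLT, applied along almost surely converging subsequences, then gives $T_n^\circ\mid\mathcal D_n\to_d\mathcal N(0,R)$ in probability. For a signal-bearing block $R$ is positive definite, so $\widehat R^{-1}\to_P R^{-1}$ and continuous mapping yields $S_n^\circ\to_d\chi^2_k$.

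For the rate, I would apply a multivariate Berry--Esseen bound for independent non-identical summands, for instance Bentkus' bound on convex sets, which cover ellipsoids $\{t:t^\top\widehat R^{-1}t\le c\}$. This gives a Kolmogorov distance of order
\[
n^{-3/2}\sum_i\|\widehat R^{-1/2}\hat\psi_i\|^3=O_P(n^{-1/2}),
\]
using the third moments implied by the fourth-moment bounds. Because the $\chi^2_k$ density is bounded away from zero near its $(1-\alpha)$ quantile, the Kolmogorov bound inverts into an $O_P(n^{-1/2})$ bound on the quantile. Combining this with the observed statistic's limit from Theorem~\ref{thm:validity} gives level $\alpha+O_P(n^{-1/2})$. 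The caveat is that the observed statistic's own approximation error also involves the nuisance remainder, so the stated rate presumes that term is of the same order; I would state this explicitly as an assumption.

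For exactness, suppose $\hat\psi_i\mid Z_i\overset d=-\hat\psi_i\mid Z_i$ and the $\hat\psi_i$ are conditionally independent across $i$ given the $Z$'s and the out-of-fold nuisances, which cross-fitting provides. Then the vector $(\hat\psi_i)_i$ has the same law as $(\varepsilon_i\hat\psi_i)_i$ for every fixed sign vector. Note that $\widehat R$ is invariant under sign flips. The observed $S_n$ is therefore exchangeable with its $2^n$ sign-flipped copies, and the standard randomization-test argument, with randomized tie-breaking, gives $\Pr(\text{reject})\le\alpha$ exactly.

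The degenerate-$R$ clause needs its own step: exactness does not need invertibility if a pseudo-inverse is used. The main obstacle I expect is the independence of the $\hat\psi_i$ across folds in the exactness claim, since the nuisances from different folds share data. I would first try to argue conditionally on all fold-wise fitted nuisances, treating them as fixed functions of the complementary data.
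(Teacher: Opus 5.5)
Your asymptotic argument is the paper's: a conditional Lindeberg/Lyapunov CLT for the Rademacher sum with $\delta=2$, Bentkus' bound applied conditionally on the data (your ellipsoids $\{t:t^\top\widehat R^{-1}t\le c\}$ are the paper's Euclidean balls for $\widehat R^{-1/2}T_n^\circ$), and inversion at the $\chi^2_{k,1-\alpha}$ quantile. On two points your version is cleaner.

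First, $\widehat R$ is exactly the conditional covariance of $T_n^\circ$. So the Kolmogorov bound compares $S_n^\circ$ with $\chi^2_k$ directly, and you only need $\widehat R\to_P R$ to keep $\lambda_{\min}(\widehat R)$ away from zero. The paper instead adds a perturbation step resting on $\widehat R-R=O_P(n^{-1/2})$, justified by treating the $\hat\psi_i\hat\psi_i^\top$ as i.i.d. (A2) does not deliver that rate for the indicator scores, where $\widehat R-R$ picks up a term of first order in $\widehat Q_\tau-Q_\tau$.

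Second, your caveat about the level is correct. The argument controls only the sign-flip quantile, while the observed $S_n$ carries the remainders (II)--(III) of Theorem~\ref{thm:validity}, which are only $o_P(1)$. So $\alpha+O(n^{-1/2})$ needs the additional rate assumption you propose; the paper's proof leaves this implicit.

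The step that would fail is your repair of the exactness clause. Under cross-fitting, the per-observation hypothesis $\hat\psi_i\mid Z_i\overset{d}{=}-\hat\psi_i\mid Z_i$ does not imply joint sign invariance of $(\hat\psi_i)_i$. Conditioning on the fold-wise nuisances does not help. Each fold's nuisance is a function of the other folds' raw data, and for $K_{\mathrm{fold}}\ge2$ these training sets together cover the whole sample, so you would be conditioning on the very observations whose scores you flip.

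A minimal example: take $Z$ trivial, an intercept-only nuisance, two folds $I_1,I_2$ of size $m$, and independent standard normal $X,Y$.
\begin{itemize}
\item The score $\hat\psi_1=(X_1-\bar X_{I_2})(Y_1-\bar Y_{I_2})$ is symmetric, being a product of independent centered normals.
\item Yet with $\hat\psi_j=(X_j-\bar X_{I_1})(Y_j-\bar Y_{I_1})$ for $j\in I_2$, one gets $\E[\hat\psi_1\hat\psi_j]=(-2/m)^2=4/m^2\neq0$. Flipping $\hat\psi_1$ alone would negate this covariance, so joint invariance fails.
\item Conditionally on $(\bar X_{I_2},\bar Y_{I_2})$, $\hat\psi_1$ has mean $\bar X_{I_2}\bar Y_{I_2}\neq0$ and is not symmetric at all.
\end{itemize}

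The paper's proof does not derive joint invariance either. It asserts directly that $(\varepsilon_i\hat\psi_i)_i$ and $(\hat\psi_i)_i$ have the same law, i.e.\ it reads the hypothesis as joint sign symmetry of the score vector. State that as your assumption. With it, your randomization argument (using that $\widehat R$ is invariant under sign flips, plus randomized tie-breaking) is complete and needs no cross-fold independence.
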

\begin{proof}
Conditional on the data, $T_n^\circ$ sums independent mean zero terms
$\varepsilon_i\hat\psi_i$ with conditional covariance $\widehat R$, and $\widehat
R=R+O_P(n^{-1/2})$ by the CLT for the i.i.d.\ (cross-fit) terms $\hat\psi_i\hat\psi_i^\top$,
which have finite variance under the uniform $(2+\delta)$-moment bound (A3)--(A4) with
$\delta=2$.

Conditional Lindeberg, explicit. For $\eta>0$ the Lindeberg sum is
$L_n(\eta)=n^{-1}\sum_i\E\!\big[|\hat\psi_i|^2\mathbf1\{|\hat\psi_i|>\eta\sqrt n\}\mid
\text{data}\big]$. On the event the moment factors satisfy
$n^{-1}\sum_i|\hat\psi_i|^{2+\delta}=O_P(1)$ (A3)--(A4), Markov's inequality gives
$|\hat\psi_i|^2\mathbf1\{|\hat\psi_i|>\eta\sqrt n\}\le(\eta\sqrt n)^{-\delta}
|\hat\psi_i|^{2+\delta}$, so $L_n(\eta)\le\eta^{-\delta}n^{-\delta/2}\,
n^{-1}\sum_i|\hat\psi_i|^{2+\delta}=O_P(n^{-\delta/2})\to0$ (for the bounded quantile and
one-hot components $|\hat\psi_i|\le1$ makes the indicator vanish for $n>\eta^{-2}$
outright). The conditional Lindeberg--Feller CLT then gives
$\widehat R^{-1/2}T_n^\circ\to_d\mathcal N(0,I_k)$, matching the unconditional limit of
$T_n$, and the continuous map $\|\cdot\|^2$ gives $S_n^\circ\to_d\chi^2_k$.

Rate. For the quadratic form, the multivariate Berry--Esseen bound over Euclidean
balls~\citep{bentkus2005} gives, conditional on the data,
$\sup_{r}\big|\Pr(\|\widehat R^{-1/2}T_n^\circ\|\le r\mid\text{data})-\Pr(\|N\|\le r)\big|
\le C\,k^{1/4}\,n^{-1/2}\,n^{-1}\!\sum_i\E\|\widehat R^{-1/2}\hat\psi_i\|^3=O_P(n^{-1/2})$,
using the uniform third-moment bound and $\widehat R\succeq c>0$ (A4). Adding the
covariance error $\widehat R-R=O_P(n^{-1/2})$, which perturbs the quadratic form by
$O_P(n^{-1/2})$, the conditional law of $S_n^\circ$ is within $O_P(n^{-1/2})$ of $\chi^2_k$
in Kolmogorov distance; inverting at the $1-\alpha$ level gives the stated quantile and
level rates.

Exactness. Under conditional sign-symmetry the law of $(\varepsilon_i\hat\psi_i)_i$
equals that of $(\hat\psi_i)_i$, so the sign-flip reference distribution is exact at every
$n$, with no appeal to the CLT or its rate.
\end{proof}

\begin{corollary}[Validity-invariance]
\label{cor:validinv}
The same level guarantee applies to any feature of $Y$ whose centered residual $r$
satisfies $\E[r\given Z]=0$ at the true nuisance and the moment conditions of
Theorem~\ref{thm:validity}. The quantile indicator and centered one-hot
residuals are bounded, so they meet these automatically; the identity (mean GCM)
residual additionally requires $Y$ to have finite conditional variance, which is
exactly the condition the bounded quantile residual removes. Within this class the
user's per-variable choice changes only the alternative against which the test has
power (the non-degeneracy direction A4), never the level.
\end{corollary}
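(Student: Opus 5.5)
The plan is to obtain the corollary by re-running the proof of Theorem~\ref{thm:validity} with a generic centered feature $r$ of $Y$ in place of the quantile indicator. Each step of that proof uses $r_\tau$ only through three properties, and I would check each one for an arbitrary admissible feature. The properties are: (a) conditional centering $\E[r\given Z]=0$ at the true nuisance; (b) a first-order nuisance perturbation of $r$ that depends only on $Z$, or is annihilated by the centered partner; and (c) moment bounds sufficient for the Lyapunov condition.

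\emph{Mean zero.} Under $X\indep Y\given Z$, Lemma~\ref{lem:meanzero} gives $\E[e_X r\given Z]=\E[e_X\given Z]\,\E[r\given Z]=0$. This uses only centering and square integrability, so term (I) has mean zero for any such $r$.

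\emph{Orthogonality.} Lemma~\ref{lem:orth} applies verbatim, since both factors are conditionally centered. The $m_X$-derivative is killed by $\E[r\given Z]=0$. Any derivative in $r$'s own nuisance is conditionally on $Z$ multiplied by $\E[e_X\given Z]=0$, because under the null $e_X$ is independent of $Y$ given $Z$. Hence (III) is a second-order product of nuisance errors and is $o_P(1)$ under (A2). For a non-smooth feature one needs a (A3)-type regularity condition to justify the one-term expansion; I would fold this into ``the moment conditions of Theorem~\ref{thm:validity}''. Term (II) needs nothing beyond cross-fitting and $L_2$-consistency of $\hat r$.

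\emph{CLT and bounded features.} For (I), the Lyapunov condition with $\delta=2$ requires $\E|e_X r|^{4}$ bounded. For bounded $r$, $|r|\le C$ gives $\E|e_Xr|^4\le C^4\E[X^4\given Z]$ up to the mean term, so the condition is automatic given the fourth moment of the partner. This covers $r_\tau$ ($|r_\tau|\le1$) and the centered one-hot residual ($|\mathbf 1\{A=c\}-p_c(Z)|\le1$). For the identity residual $r=e_Y$, $\E|e_Xe_Y|^4$ needs $\E[Y^4\given Z]$ bounded, and in particular finite conditional variance, which the bounded features remove. The sign-flip calibration (Lemma~\ref{lem:signflip}) and the Cauchy combination then apply unchanged.

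\emph{Invariance of level.} Since the limiting null law is $\chi^2_k$ regardless of which admissible $r$ is chosen, the level $\alpha$ is unaffected. The choice of $r$ enters only through $\beta=\E[e_X r]$ under alternatives, and through the non-degeneracy of $R$ in (A4). The main obstacle is the orthogonality step for a generic, possibly non-smooth feature, since the expansion of (III) must be justified without the explicit location-scale structure. I would handle it by assuming pathwise differentiability of $z\mapsto\E[r(Y;\eta)\given Z=z]$ in the nuisance $\eta$, with a Lipschitz remainder. This is the abstract analogue of (A3).
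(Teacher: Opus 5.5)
Your proposal is correct and follows essentially the paper's route. The corollary is stated without a separate proof; it rests on the observation that the proof of Theorem~\ref{thm:validity} uses the quantile feature only through its conditional centering (Lemmas~\ref{lem:meanzero} and~\ref{lem:orth}), the product rate (A2), and the moment bounds behind the Lyapunov step, which is exactly the checklist you re-run for a generic $r$.

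Your ``main obstacle'' is easier than you fear. Under the null with cross-fitting, $\E[e_X(\hat r-r)\given Z]=\E[e_X\given Z]\,\E[\hat r-r\given Z]=0$ holds exactly, so the only surviving bias is $\E\bigl[(m_X-\hat m_X)(Z)\,\E[\hat r-r\given Z]\bigr]$. A Lipschitz bound $|\E[r(Y;\eta')-r(Y;\eta)\given Z]|\le L\,|\eta'-\eta|$ (for $r_\tau$, simply a bounded conditional density) therefore suffices in place of pathwise differentiability with a Lipschitz remainder.
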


\section{Structural directional asymmetry (F1)}
\label{app:f1}
The directional statistic uses one variable's mean residual and the other's quantile
residual, so it is not symmetric in $(X,Y)$. The next proposition shows that for a
pure location-scale edge one of the two orientations has zero population
signal at every level, independently of the nuisance estimator, the conditioning
design, and the quantile set. The blindness is therefore a property of the statistic,
not of the nuisance estimator (the concern of F2).

\begin{proposition}[Directional blindness to pure-scale edges]
\label{prop:f1}
Suppose $\E[Y\mid X,Z]=\E[Y\mid Z]$: $X$ alters the conditional distribution of $Y$
given $Z$ but not its conditional mean (a pure location-scale edge). Let
$e_Y=Y-\E[Y\mid Z]$ and $r_\tau(X)=\tau-\mathbf1\{X\le Q_\tau(X\mid Z)\}$. Then the
$Y$-mean\,/\,$X$-quantile orientation carries no signal at any level,
\[
\beta_{Y\to X}(\tau):=\E[e_Y\,r_\tau(X)]=0\qquad\text{for all }\tau\in(0,1),
\]
so used alone it is consistent only against its own level and PC deletes the edge at
the marginal test. The complementary orientation $\beta_{X\to Y}(\tau)=\E[e_X\,
r_\tau(Y)]$ is nonzero whenever $X$ moves a non-median quantile of $Y$
(Proposition~\ref{thm:detection}); hence the symmetric two-orientation test is consistent
against every pure scale alternative.
\end{proposition}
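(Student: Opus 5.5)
The zero-signal claim follows from the tower property, conditioning on $(X,Z)$ rather than on $Z$ alone. The quantile factor $r_\tau(X)=\tau-\mathbf 1\{X\le Q_\tau(X\mid Z)\}$ is a measurable function of $(X,Z)$ for every $\tau$, since $Q_\tau(X\mid Z)$ depends only on $Z$. The same holds for any estimated threshold $\widehat Q_\tau$ that is fitted out of fold and so is a fixed function of $Z$. The hypothesis $\E[Y\mid X,Z]=\E[Y\mid Z]$ gives $\E[e_Y\mid X,Z]=\E[Y\mid X,Z]-\E[Y\mid Z]=0$. Hence
\[
\E[e_Y\,r_\tau(X)]=\E\big[r_\tau(X)\,\E[e_Y\mid X,Z]\big]=0 ,
\]
and integrability follows from $|r_\tau|\le1$ and $\E|e_Y|<\infty$ under (A3). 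The argument uses nothing about $\tau$, the nuisance for $X$, or the level grid, so the orientation is blind at every level and for every nuisance. The same computation kills $\E[e_Y\,h(X,Z)]$ for any bounded $h$, which is the sense in which $Y$ is uncorrelated with every function of $X$.

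The "PC deletes the edge" clause then follows from Proposition~\ref{thm:detection}. An orientation whose every pair has $\beta=0$ has asymptotic power equal to its level, so at the marginal query ($Z=\emptyset$, or the first query in the skeleton phase) the one-orientation test accepts independence with probability at least $1-\alpha$. In the limit $\alpha_n\to0$ it accepts with probability tending to one, and PC removes $X-Y$. The "consistent only against its own level" wording should be read exactly this way.

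The complementary orientation is handled through the effect curve. Write $\beta_{X\to Y}(\tau)=\E\big[e_X\,(\tau-F_{Y\mid X,Z}(Q_\tau(Y\mid Z)))\big]$ by conditioning on $(X,Z)$. Since $\E[e_X\mid Z]=0$, this equals $-\E\big[\Cov\big(X,\,F_{Y\mid X,Z}(Q_\tau(Y\mid Z)\mid X,Z)\mid Z\big)\big]$. So it is nonzero exactly when $X$ covaries, on average over $Z$, with the conditional probability that $Y$ falls below its $Z$-marginal $\tau$-quantile. This is what the statement means by "$X$ moves a non-median quantile of $Y$."

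The step I expect to need care is the final "hence consistent against every pure scale alternative." That step needs a level in $\mathcal T$ at which this covariance does not vanish. For a monotone scale effect, $Y=m(Z)+\sigma(X,Z)\varepsilon$ with $\sigma$ monotone in $X$, the map $x\mapsto F_{Y\mid X=x,Z}(q)$ is monotone in $x$ for $q$ away from the center. Its covariance with $X$ is then strictly nonzero at $\tau=0.1$ or $0.9$, matching Proposition~\ref{thm:detection}(ii). For non-monotone scale dependence, for example $\sigma$ depending on $|X|$, this covariance can vanish. Those alternatives are instead caught by the moment block ($e_Xc_Y$), so I would state the consistency claim relative to the full default set $\Phi$ and invoke Proposition~\ref{thm:detection} for the union. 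The claim then combines with Theorem~\ref{thm:validity} and ACAT validity: the blind orientation is a structurally null block admitted under (A4), and the Cauchy combination inherits power from the non-degenerate orientation.
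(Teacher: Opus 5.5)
Your proof of the displayed identity is the paper's own argument. You use that $r_\tau(X)$ is $(X,Z)$-measurable, that $\E[e_Y\mid X,Z]=0$ by mean independence, and the tower property, so $\beta_{Y\to X}(\tau)=0$ for every $\tau$ and every nuisance for $X$. Your integrability remark and your reading of the PC-deletion clause are also fine. The paper's proof stops there and only cites Proposition~\ref{thm:detection} for the final sentence. You try to actually justify ``consistent against every pure scale alternative,'' and that is where the proposal breaks.

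The failing step is your repair for non-monotone scale dependence: ``those alternatives are instead caught by the moment block ($e_Xc_Y$).'' Take $e_X$ conditionally symmetric given $Z$, and $Y=m(Z)+\sigma(|e_X|,Z)\,\varepsilon$ with $\varepsilon\indep(X,Z)$ and $\E\varepsilon=0$. Then every default pair vanishes:
\begin{itemize}
\item $\E[|e_Y|\mid X,Z]=\sigma(|e_X|,Z)\,\E|\varepsilon|$ is even in $e_X$, so $\E[e_Xc_Y]=\E|\varepsilon|\,\E[e_X\,\sigma(|e_X|,Z)]=0$. The squared variant vanishes the same way.
\item $e_Xe_Y$ and $c_Xe_Y$ vanish by mean independence.
\item $\E[r_\tau(Y)\mid X,Z]=\tau-F_\varepsilon\big((Q_\tau(Y\mid Z)-m(Z))/\sigma(|e_X|,Z)\big)$ is even in $e_X$, so $\beta_{X\to Y}(\tau)=0$ at every $\tau$.
\item $r_\tau(X)e_Y$ vanishes by the proposition itself.
\end{itemize}
By Proposition~\ref{thm:detection}, the full test therefore has power equal to its level, so taking the union over $\Phi$ does not rescue the claim. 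Only the excluded co-dispersion product sees this edge: $\E[c_Xc_Y]=\E|\varepsilon|\,\E\big[\Cov(|e_X|,\sigma(|e_X|,Z)\mid Z)\big]\neq0$ when $\sigma$ is strictly monotone in $|e_X|$.

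So the statement's final clause holds only in restricted form. It is true for pure-scale alternatives with $\beta_{X\to Y}(\tau)\neq0$ at some $\tau\in\mathcal T$, for example $\sigma$ monotone in $X$. There your monotone-CDF covariance argument works, given strict monotonicity of $F_\varepsilon$ at the threshold and non-degenerate $X$ given $Z$. Alternatively, the clause holds once $c_Xc_Y$ is added to the set. You were right that this clause needs care, but the correct move is to restrict it, not to appeal to $e_Xc_Y$.
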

\begin{proof}
Since $r_\tau(X)$ is a function of $(X,Z)$, $\E[e_Y\,r_\tau(X)\mid Z]=\E[\,\E[e_Y\mid
X,Z]\,r_\tau(X)\mid Z]$. The hypothesis gives $\E[e_Y\mid X,Z]=\E[Y\mid X,Z]-\E[Y\mid
Z]=0$, so the inner expectation is $0$ and $\beta_{Y\to X}(\tau)=\E_Z[0]=0$ for every
$\tau$. The argument uses only the mean independence $\E[Y\mid X,Z]=\E[Y\mid Z]$: it
holds for any conditioning design, any consistent nuisance, and any $\tau$, so the
asymmetry is structural.
\end{proof}

\paragraph{Reconciliation with Theorem~\ref{thm:validity}.} A structurally null
orientation is a degenerate score direction, which is exactly the case assumption
(A4) admits per block rather than excludes globally. The null orientation occupies its
own block, and that block's sign-flip $p$-value (Lemma~\ref{lem:signflip}) is valid even
when its score covariance is rank-deficient, because the sign-flip bootstraps the
empirical law and does not invoke the non-degenerate $\chi^2_k$ limit; the Cauchy
combination then rejects through the complementary, non-degenerate orientation. So
Proposition~\ref{prop:f1} and Theorem~\ref{thm:validity} are consistent on the pure scale
edge: F1 makes one block null, and the test stays valid. We keep the asymmetric
mean-quantile construction despite F1 because it targets a directed
alternative, $X$'s level driving $Y$'s distribution, that a symmetric quantile-quantile
product does not (Section~\ref{sec:ablation}).

\section{Detection class}
\label{app:detection}

With $g_\tau(X,Z)=\Pr(Y\le Q_\tau(Y\given Z)\given X,Z)-\tau$ (so
$\E[g_\tau\given Z]=0$), the quantile effect curve is
$\beta(\tau)=\E[e\,r_\tau]=-\E[e\,g_\tau]$, and $\beta\equiv0$ under the null.

\begin{proof}[Proof of Proposition~\ref{thm:detection}]
By Theorem~\ref{thm:validity} each component score has $\E[\psi_{\tau_i}]=\beta(\tau_i)$ and
a finite limiting variance $\sigma_{\tau_i}^2$, so the studentized statistic
$T_{\tau_i}=\sqrt n\,\widehat\E[\psi_{\tau_i}]/\widehat\sigma_{\tau_i}$ concentrates around
$\sqrt n\,\beta(\tau_i)/\sigma_{\tau_i}$: it is $O_P(1)$ when $\beta(\tau_i)=0$ and diverges
at rate $\sqrt n$ when $\beta(\tau_i)\neq0$. The block's sign-flip Wald statistic
$S_n=T_n^\top\widehat R^{-1}T_n$ therefore diverges iff some component does, its $p$-value
$\to0$; and the ACAT combination rejects iff some block's $p$-value $\to0$, i.e.\ iff
$\beta_{\phi_X\phi_Y}\neq0$ for some admissible pair. If all $\beta_{\phi_X\phi_Y}=0$ every
score has mean zero to first order (Lemma~\ref{lem:orth}) and the combined test keeps its null
level.

(i) Location. If $X$ shifts the conditional distribution,
$g_\tau\approx -f_{Y\given Z}(Q_\tau)\,\delta(X)$, so
$\beta(\tau)\approx f_{Y\given Z}(Q_\tau)\,\E[e\,\delta(X)]$ is single-signed in
$\tau$; detected at any single level.

(ii) Scale. With $Y=m(Z)+s(X,Z)\varepsilon$, $\varepsilon$ symmetric with
median $0$: $Q_{0.5}(Y\given X,Z)=m(Z)$ is independent of $X$, so $g_{0.5}=0$ and
$\beta(0.5)=0$; for $\tau>0.5$, $g_\tau$ depends on $s(X,Z)$ and correlates with
$e$, so $\beta(\tau)\neq0$, with $\beta(1-\tau)=-\beta(\tau)$. Hence $\{0.5\}$ is
blind and any $\tau\neq0.5$ detects it.

(iii) Blind spot. For a finite $\mathcal T$, the alternatives with
$\beta(\tau_i)=0$ for all $i$ but $\beta\not\equiv0$ are exactly those missed; such
$\beta$ exist (perturb the conditional CDF only between grid levels).
\end{proof}

\section{Data generating processes}
\label{app:dgp}

All synthetic experiments draw from structural equation models sharing a fork skeleton
$Z\to X$, $Z\to Y$; the alternatives add the edge $X\to Y$, so the queried null
$X\indep Y\given Z$ holds exactly under the fork and is violated under the triangle. The
conditioning set is $Z=(Z_1,\dots,Z_{|S|})$ with independent coordinates, each
$Z_j\sim\mathcal N(0,1)$ in the Gaussian regime, $Z_j\sim t_3$ in the heavy-tailed regime,
or, in the mixed regime, $Z_1$ a three-level categorical and the rest Gaussian. Two
independent seeds fix the structure (graph, variable types, weights) and the sample, so
every cell is exactly reproducible.

\paragraph{The conditional tail edge (Section~\ref{sec:tailedge}, centerpiece).} With
$\varepsilon_X,\varepsilon_Y$ independent of $Z$,
\begin{align*}
Z_j &\sim \mathcal N(0,1),\ \ j=1,\dots,|S|; \qquad
X = \sum_{j=1}^{|S|} Z_j + \varepsilon_X, \quad \varepsilon_X\sim\mathcal N(0,1);\\
Y &= \sum_{j=1}^{|S|} Z_j + \varepsilon_Y, \qquad
\varepsilon_Y = \frac{W-\mu(\alpha)}{\sigma(\alpha)}, \quad W\sim\mathrm{SN}(\alpha), \quad \alpha = c\,X,
\end{align*}
where $W\sim\mathrm{SN}(\alpha)$ is a skew-normal of shape $\alpha$, standardized affinely by
\[
\delta=\frac{\alpha}{\sqrt{1+\alpha^2}},\qquad
\mu(\alpha)=\delta\sqrt{2/\pi},\qquad
\sigma(\alpha)^2=1-\tfrac{2}{\pi}\,\delta^2,
\]
applied per observation (at each row's own $\alpha=cX$), so that
$\E[\varepsilon_Y]=0$ and $\Var[\varepsilon_Y]=1$ for every $\alpha$. The conditional moments
of $Y$ are therefore
\[
\E[Y\mid X,Z]=\sum_{j} Z_j\ \ (\text{free of }X), \qquad \Var[Y\mid X,Z]=1,
\]
and $X$ enters $Y$ only through the conditional skew, which is monotone in
$\alpha=cX$. The strength sweep is $c\in\{1,2,3,4\}$ (Figure~\ref{fig:tail-strength}); the matched
null sets $c=0$ ($\varepsilon_Y\sim\mathcal N(0,1)$, no $X\to Y$ edge). These standardization
formulas are exactly those of the generator, so the equations above reproduce the reported
cells; we further confirmed numerically that $\E[\varepsilon_Y\mid X]$ and
$\Var[\varepsilon_Y\mid X]$ are flat in $X$ while $\mathrm{Skew}[\varepsilon_Y\mid X]$ is
monotone, matching this construction.

\paragraph{Calibration nulls (Section~\ref{sec:nuisance}).} Each null is the fork alone (no
$X\to Y$ edge), so $X\indep Y\given Z$ holds exactly. $X$ and $Y$ are drawn from the same child
specification with independent noises $\varepsilon_X,\varepsilon_Y$ (so $X$ and $Y$ are
conditionally independent by construction), and $Z=(Z_1,\dots,Z_{|S|})$ has independent
coordinates. The four mechanisms differ only in which nuisance they stress:
\begin{itemize}
\item \texttt{heavy\_tail}: $Z_j\sim t_3$; $X=\sum_j Z_j+\varepsilon_X$, $Y=\sum_j Z_j+\varepsilon_Y$,
with $\varepsilon_X,\varepsilon_Y\sim\mathcal N(0,1)$. Only the conditioners are heavy-tailed; the
noise stays Gaussian.
\item \texttt{hetero}: $Z_j\sim\mathcal N(0,1)$; linear mean $\sum_j Z_j$ on both children with a
shared $Z_1$-driven conditional scale, $\varepsilon_X=\exp(0.4\,Z_1)\,\eta_X$ and
$\varepsilon_Y=\exp(0.4\,Z_1)\,\eta_Y$, where $\eta_X,\eta_Y\sim\mathcal N(0,1)$ (the scale
argument is clipped to $|Z_1|\le5$ purely as an overflow guard, inactive with probability
$\approx1$ under the Gaussian $Z_1$).
\item \texttt{nonlin\_mean}: $Z_j\sim\mathcal N(0,1)$; $X=\sin(1.5\,Z_1)+\varepsilon_X$,
$Y=\sin(1.5\,Z_1)+\varepsilon_Y$, with $\varepsilon_X,\varepsilon_Y\sim\mathcal N(0,1)$. The mean
depends on $Z_1$ alone (additive in a single parent, which an additive spline fits but a fixed
linear basis cannot); $Z_2,\dots,Z_{|S|}$ are inert conditioners.
\item \texttt{mixed\_Z}: $Z_1$ a three-level categorical (codes $\{0,1,2\}$, uniform) and
$Z_j\sim\mathcal N(0,1)$ for $j\ge2$; $X=g(Z_1)+0.8\sum_{j\ge2} Z_j+\varepsilon_X$ and likewise for
$Y$, with stratum means $g(0),g(1),g(2)=-1,0,1$ and $\varepsilon_X,\varepsilon_Y\sim\mathcal N(0,1)$.
\end{itemize}
In every cell $|S|\in\{1,3,5\}$ sets only the number of conditioners; the stressed mechanism is
otherwise unchanged.

\paragraph{Power ladder (Section~\ref{sec:ablation}, Table~\ref{tab:power-n2k}).} Unlike the
calibration nulls, the power alternatives use nonlinear $Z$-dependence, so the conditional test
must genuinely regress out $Z$. With $Z=(Z_1,\dots,Z_{|S|})$, $Z_j\sim\mathcal N(0,1)$,
\[
X=\sin\!\Big(\textstyle\sum_j Z_j\Big)+\varepsilon_X,\quad \varepsilon_X\sim\mathcal N(0,0.7^2);
\qquad Y=0.7\textstyle\sum_j Z_j^2+\phi(X)+\varepsilon_Y .
\]
The null takes $\phi\equiv0$ and $\varepsilon_Y\sim\mathcal N(0,1)$, so $X\indep Y\given Z$; each
alternative adds an $X\to Y$ edge touching a single moment of $Y$:
\begin{itemize}
\item \texttt{linear\_mean}: $\phi(X)=0.6\,X$, $\varepsilon_Y\sim\mathcal N(0,1)$ (conditional mean).
\item \texttt{nonmonotone\_z2}: $\phi(X)=0.5\,X^2$, $\varepsilon_Y\sim\mathcal N(0,1)$ (non-monotone
mean, with $\Cov(X,Y\given Z)\approx0$).
\item \texttt{scale}: $\phi\equiv0$, $\varepsilon_Y$ heteroscedastic with conditional standard
deviation $0.3+\exp(0.5\,X)$ ($X$ clipped to $[-4,4]$): the mean is flat in $X$, the variance is
driven by $X$.
\item \texttt{tail\_shape}: $\phi\equiv0$, $\varepsilon_Y$ a unit variance skew-normal of shape
$\alpha=4X$: mean and variance fixed, only the conditional skew depends on $X$.
\end{itemize}
Each isolates the moment a given block of the feature set is built to detect; $|S|\in\{1,3,5\}$
sets the number of conditioners.

\paragraph{Mixed-type spread alternatives (Section~\ref{sec:mixed-exp}, Table~\ref{tab:mixed}).}
The conditioning set is $Z=(Z_c,Z_k)$ with $Z_c\sim\mathcal N(0,1)$ and $Z_k$ a three-level
categorical (codes $\{0,1,2\}$, uniform). The fork $Z\to X$, $Z\to Y$ is the null; each alternative
adds $X\to Y$. Continuous nodes take a stratum mean $g(Z_k)\in\{-1,0,1\}$ plus $0.8\,Z_c$ with
$\mathcal N(0,1)$ noise; categorical nodes are logistic in their parents. The four $(X,Y)$ type
pairings (C continuous, K categorical) are:
\begin{itemize}
\item \textbf{CC}: continuous $X$ drives $Y$'s conditional variance (std $0.3+\exp(0.5\,X)$, $X$
clipped to $[-4,4]$) with $Y$'s mean free of $X$: a spread edge beyond covariance.
\item \textbf{KC}: categorical $X$ (code $x$) sets $Y$'s conditional standard deviation $0.5+x$,
mean free of $X$: a categorical spread edge.
\item \textbf{CK}: continuous $X$ shifts $Y$'s class log-odds (weights $\{-2,0,2\}$ over the three
levels): a probability shift.
\item \textbf{KK}: categorical $X$ shifts $Y$'s class log-odds ($2.5$ on the matching level): a
categorical probability shift.
\end{itemize}
CC and KC are the spread edges the covariance family misses; CK and KK are the
probability shifts every calibrated test detects.

\section{Additional experimental results}
\label{app:experiments}

\paragraph{Asymmetric versus symmetric quantile product.} The quantile
block uses the asymmetric products $e_X\,r_\tau(Y)$ and $r_\tau(X)\,e_Y$ (a level
times a quantile indicator), which is what creates the F1 orientation risk
(Section~\ref{sec:fixes}); a symmetric quantile-quantile product $r_\tau(X)\,r_\tau(Y)$
would avoid F1, so the asymmetric choice requires justification
(Table~\ref{tab:asym}). The two products target different alternatives. On the skew edge that preserves mean and variance, the quantile block's reason for existing ($X$'s
level driving $Y$'s skew), the asymmetric product has power $0.90$ against the symmetric
product's $0.15$, because the symmetric product sees $X$ only through its tail indicator and
discards the mid-range level information that drives $Y$'s shape. The symmetric product
instead detects pure tail co-movement (symmetric joint extremes of $X$ and $Y$, here
a mean-independent $t$-copula), where the asymmetric product has power $0.03$. Neither
dominates. The asymmetric construction targets ``$X$'s level drives $Y$'s distribution'',
the directed, level-driven alternative that constraint-based discovery cares about; F1 and
the low power against symmetric tail co-movement are the cost of that targeting.

\begin{table}[ht]\centering
\begin{tabular}{lcc}
\toprule
alternative ($X\!\to\!Y$) & asymmetric $e_X r_\tau(Y)$ & symmetric $r_\tau(X) r_\tau(Y)$ \\
\midrule
null & 0.05 & 0.06 \\
scale & 0.99 & 1.00 \\
skew & \textbf{0.90} & 0.15 \\
tail co-movement ($t$-copula) & 0.03 & \textbf{0.99} \\
\bottomrule
\end{tabular}
\caption{Asymmetric versus symmetric quantile product ($n=2000$, $200$ reps,
$\alpha=0.05$; Wilson CIs in text). The asymmetric mean-quantile product GFCM uses
dominates on the level-driven skew edge ($0.90$ vs $0.15$); the symmetric product
instead catches symmetric tail co-movement ($0.99$ vs $0.03$). They target different
alternatives, and GFCM's choice is the one aligned with directed, level-driven dependence,
at the cost of the F1 risk and blindness to symmetric tail co-movement.}
\label{tab:asym}
\end{table}

\paragraph{Rank-transformed conditioning: robustness to heavy-tailed conditioners.} GFCM
fits its nuisance with a spline whose knots are spaced uniformly over the range of the
conditioning set. On a heavy-tailed conditioner a handful of extreme values stretch that
range, so the knots concentrate in the empty tails and leave the bulk underfit, so the
conditional mean is poorly estimated and its residual leaks into the statistic. Rank-transforming $Z$
before the spline maps each conditioner to a uniform scale, so the knots always land where
the data are, whatever the marginal shape. Table~\ref{tab:rankz} isolates the effect on a true
null $X\indep Y\given Z$ with a smooth shared mean: with rank-$Z$ the test holds nominal level
across Gaussian, Student-$t$, and log-normal conditioners, while on raw $Z$ it is calibrated only
when $Z$ is light-tailed and loses control as the tails heavy, reaching size $0.17$ at $t_2$ and
$1.00$ at $t_1$ (Cauchy). Power is untouched (an added edge is recovered at $1.00$ either way),
and a monotone but light-tailed warp (log-normal) does not trigger it, so the driver is heavy
tails, not reparametrization. Rank-$Z$ is thus a design choice for the heavy-tailed, mixed-type
regime GFCM targets, at no cost when $Z$ is well behaved; it is also the scale on which the
validity theory is stated (Appendix~\ref{app:proofs}).

\begin{table}[ht]\centering
\small
\begin{tabular}{lcc}
\toprule
conditioner $Z$ \ (null $X\indep Y\given Z$) & rank-$Z$ (default) & raw $Z$ \\
\midrule
Gaussian (control) & 0.070~[0.051--0.096] & 0.070~[0.051--0.096] \\
Student-$t_5$ & 0.090~[0.068--0.118] & 0.068~[0.049--0.094] \\
Student-$t_2$ (heavy) & 0.058~[0.041--0.082] & \textbf{0.170~[0.140--0.205]} \\
Student-$t_1$ (Cauchy) & 0.062~[0.044--0.087] & \textbf{0.998~[0.989--1.000]} \\
log-normal (monotone warp) & 0.068~[0.049--0.094] & 0.064~[0.046--0.089] \\
\midrule
\textit{power}: $t_2$ with an $X\!\to\!Y$ edge & 1.000~[0.992--1.000] & 1.000~[0.992--1.000] \\
\bottomrule
\end{tabular}
\caption{Rank-transformed conditioning holds calibration on heavy-tailed conditioners; raw $Z$
does not. GFCM size on a true null $X\indep Y\given Z$ whose smooth shared mean
$m(Z)=\tanh(1.2\,Z)$ drives both $X$ and $Y$ with independent noise ($n=2000$, $500$ reps,
nominal $0.05$; Wilson $95\%$ CIs in brackets). With the default rank-$Z$ the test stays at level
whatever the conditioner's tails; on raw $Z$ the uniform knot spline starves the bulk and size
climbs to $0.17$ ($t_2$) and $1.00$ ($t_1$). The final row confirms power is unaffected. A
light-tailed monotone warp (log-normal) does not trigger the failure, so the driver is heavy
tails, not reparametrization.}
\label{tab:rankz}
\end{table}

\paragraph{Additive versus interaction nuisance.} To isolate the scope
of assumption~(A1), Table~\ref{tab:interaction} runs the additive spline backbone alone (without the
default's interaction block) on two nulls $X\indep Y\given Z$ with $Z\in\mathbb{R}^2$: an additive
shared nuisance ($\E[X\given Z]=\sin Z_1+\sin Z_2$, $\E[Y\given Z]=\cos Z_1+\cos Z_2$) and an
interaction shared nuisance ($\E[X\given Z]=\E[Y\given Z]=Z_1Z_2$). On the additive null the backbone
holds level across $n$; on the interaction null it leaves the unfittable $Z_1Z_2$ term in both
residuals, so $\E[e_X e_Y\given Z]=(Z_1Z_2)^2>0$, a first-order bias that does not vanish, and it rejects
at $1.00$ at every $n$; adding knots does not lower it. The default design removes this term: its
degree-3 interaction block includes the $Z_1Z_2$ term and removes the bias at
any $n$ (Table~\ref{tab:nuisance-tradeoff}). Capturing interactions to order $q$ needs
${\sim}\binom{d}{q}K^q$ columns; the full tensor product costs $K^d$, infeasible at regression cost.

\begin{table}[ht]\centering
\begin{tabular}{lccc}
\toprule
shared nuisance $\E[X\given Z]=\E[Y\given Z]$ & $n=10^3$ & $n=4{\cdot}10^3$ & $n=1.6{\cdot}10^4$ \\
\midrule
additive ($\sin Z_1+\sin Z_2$ / $\cos Z_1+\cos Z_2$) & 0.06 & 0.05 & 0.05 \\
interaction ($Z_1 Z_2$) & \textbf{1.00} & \textbf{1.00} & \textbf{1.00} \\
\bottomrule
\end{tabular}
\caption{Null rejection rate (nominal $0.05$, $Z\in\mathbb{R}^2$, $300$ reps; $X\indep Y\given Z$
in both rows). The additive spline backbone alone (without the default interaction block) holds
level under an additive shared mean but collapses to $1.00$ under a pure interaction mean it cannot
represent, at every $n$: a first-order bias, not a vanishing one. This pins the structural scope of
(A1); the default design's interaction block  of degree 3 removes it (Table~\ref{tab:nuisance-tradeoff}).}
\label{tab:interaction}
\end{table}

\paragraph{Complementary polynomial block and the single index.}
The default design carries both a fixed polynomial of degree 3 interaction block and, gated
to $n\ge5000$, a cross-fitted single-index spline. Neither subsumes the other, because they fail
on opposite alternatives (Table~\ref{tab:nuisance-tradeoff}). The polynomial block includes the
interaction terms exactly, so it removes a bilinear confounder $Z_1Z_2$ at any sample size
($0.03$ at $n{=}2000$), but it cannot represent a transcendental mean such as $\sin(\sum_j Z_j)$
and its size rises to $1.00$ at every $n$. The single index instead estimates a direction
$\hat a$ and splines along it, so it captures the transcendental mean ($0.06$ once $n$ is large),
but the estimated direction is noisy at small $n$, where it underfits the very interaction the
polynomial block handles exactly ($0.20$ at $n{=}2000$, falling to $0.09$ by $n{=}10^4$). The two
cover disjoint regimes, which is why the default test keeps the polynomial block (for
interactions, at all $n$) and adds the single index gated to large $n$ (for transcendental means,
where the direction is estimable). A pure single-index nuisance, which would drop the
polynomial block to save its $O(d^2)$ cost, is ruled out by the small-$n$ interaction row, and
no significance threshold on the index recovers it (the liberality is estimation variance, not
spurious admission). The conditioning set cap and the gate on sample size are thus a regime switch
between an exact but limited nuisance and an estimated but flexible one, not free parameters.

\begin{table}[t]\centering\small
\begin{tabular}{llcc}
\toprule
null & $n$ & polynomial block & single index \\
\midrule
additive control & 2000 & 0.02~[0.01--0.05] & 0.10~[0.06--0.14] \\
interaction $Z_1Z_2$ & 2000 & \textbf{0.03~[0.01--0.06]} & 0.20~[0.15--0.26] \\
interaction $Z_1Z_2$ & 10000 & 0.05~[0.03--0.09] & 0.09~[0.05--0.13] \\
transcendental $\sin(\textstyle\sum_j Z_j)$ & 2000 & \textbf{1.00~[0.98--1.00]} & 0.15~[0.11--0.21] \\
transcendental $\sin(\textstyle\sum_j Z_j)$ & 10000 & \textbf{1.00~[0.98--1.00]} & 0.06~[0.03--0.10] \\
\bottomrule
\end{tabular}
\caption{Complementary failure of the two nuisance blocks (size, nominal $0.05$; $200$ reps,
Wilson 95\% CI; $Z\in\mathbb{R}^3$). The polynomial block removes polynomial interactions at any
$n$ but diverges on a transcendental mean; the single index represents the transcendental mean
(cleanly once $n$ is large enough to estimate its direction) but underfits the interaction at
small $n$. The default test uses both: the polynomial block for interactions, the single index
gated to large $n$ for transcendental means.}
\label{tab:nuisance-tradeoff}
\end{table}

\paragraph{Convergence with sample size: full grid.} Figure~\ref{fig:conv} plots two representative
slices (the heavy-tailed and mixed-type nulls, and the \texttt{alpha\_3} skew edge). The
complete convergence grid is presented below: every null (Table~\ref{tab:conv-null}) and alternative
(Table~\ref{tab:conv-power}) across depth $|S|\in{1,3,5}$, each row tracing a test's size or size-corrected power across the full sample-size sweep ($n$ from $500$ to $10^5$). The pattern is uniform: GFCM and BLITZ hold nominal size across every null while FFCI, the partial copula test, and boosted GCM diverge as $n$ grows.

\footnotesize\setlength{\tabcolsep}{3pt}
\begin{longtable}{l*{14}{c}}
\caption{Convergence of null size with sample size (full sweep; columns are $n$ in thousands; nominal $0.05$). PartCopula and boosted GCM are capped at moderate $n$ and shown as \texttt{--} beyond it.}\label{tab:conv-null}\\
\toprule
 & \multicolumn{14}{c}{sample size $n$ (thousands)}\\
\cmidrule(lr){2-15}
test & 0.5 & 1 & 1.5 & 2 & 2.5 & 3 & 4 & 5 & 7 & 10 & 15 & 20 & 50 & 100 \\
\midrule
\endfirsthead
\multicolumn{15}{c}{\tablename\ \thetable\ (continued)}\\
\toprule
test & 0.5 & 1 & 1.5 & 2 & 2.5 & 3 & 4 & 5 & 7 & 10 & 15 & 20 & 50 & 100 \\
\midrule
\endhead
\bottomrule
\endfoot
\midrule \multicolumn{15}{l}{\texttt{heavy\_tail}, $|S|=1$}\\
GFCM & 0.05 & 0.04 & 0.06 & 0.05 & 0.08 & 0.05 & 0.04 & 0.06 & 0.06 & 0.05 & 0.05 & 0.07 & 0.05 & 0.06 \\
BLITZ & 0.06 & 0.05 & 0.05 & 0.05 & 0.08 & 0.07 & 0.04 & 0.05 & 0.05 & 0.06 & 0.06 & 0.06 & 0.05 & 0.08 \\
FFCI & 0.06 & 0.07 & 0.11 & 0.13 & 0.20 & 0.29 & 0.37 & 0.56 & 0.79 & 0.96 & 1.00 & 1.00 & 1.00 & 1.00 \\
RCoT & 0.05 & 0.05 & 0.04 & 0.05 & 0.06 & 0.05 & 0.03 & 0.04 & 0.06 & 0.05 & 0.05 & 0.07 & 0.04 & 0.06 \\
GCM-boost & 0.18 & 0.17 & 0.21 & 0.27 & 0.27 & 0.35 & 0.45 & 0.54 & 0.64 & 0.72 & 0.80 & 0.82 & -- & -- \\
PartCopula & 0.05 & 0.10 & 0.13 & 0.17 & 0.24 & 0.30 & 0.40 & 0.54 & 0.72 & 0.85 & -- & -- & -- & -- \\
\midrule \multicolumn{15}{l}{\texttt{heavy\_tail}, $|S|=3$}\\
GFCM & 0.08 & 0.08 & 0.07 & 0.07 & 0.07 & 0.06 & 0.07 & 0.06 & 0.05 & 0.06 & 0.04 & 0.07 & 0.05 & 0.07 \\
BLITZ & 0.05 & 0.04 & 0.07 & 0.07 & 0.05 & 0.06 & 0.04 & 0.04 & 0.07 & 0.06 & 0.04 & 0.05 & 0.04 & 0.06 \\
FFCI & 1.00 & 1.00 & 1.00 & 1.00 & 1.00 & 1.00 & 1.00 & 1.00 & 1.00 & 1.00 & 1.00 & 1.00 & 1.00 & 1.00 \\
RCoT & 0.07 & 0.05 & 0.07 & 0.12 & 0.17 & 0.18 & 0.23 & 0.34 & 0.39 & 0.49 & 0.63 & 0.69 & 0.92 & 0.98 \\
GCM-boost & 0.16 & 0.10 & 0.11 & 0.12 & 0.11 & 0.10 & 0.11 & 0.17 & 0.24 & 0.61 & 0.93 & 1.00 & -- & -- \\
PartCopula & 0.75 & 1.00 & 1.00 & 1.00 & 1.00 & 1.00 & 1.00 & 1.00 & 1.00 & 1.00 & -- & -- & -- & -- \\
\midrule \multicolumn{15}{l}{\texttt{heavy\_tail}, $|S|=5$}\\
GFCM & 0.11 & 0.08 & 0.06 & 0.06 & 0.09 & 0.05 & 0.06 & 0.06 & 0.07 & 0.07 & 0.06 & 0.06 & 0.05 & 0.06 \\
BLITZ & 0.04 & 0.04 & 0.05 & 0.05 & 0.07 & 0.04 & 0.04 & 0.05 & 0.05 & 0.07 & 0.04 & 0.04 & 0.05 & 0.05 \\
FFCI & 1.00 & 1.00 & 1.00 & 1.00 & 1.00 & 1.00 & 1.00 & 1.00 & 1.00 & 1.00 & 1.00 & 1.00 & 1.00 & 1.00 \\
RCoT & 0.44 & 0.74 & 0.85 & 0.93 & 0.96 & 0.99 & 0.99 & 0.99 & 1.00 & 1.00 & 1.00 & 1.00 & 0.99 & 1.00 \\
GCM-boost & 0.18 & 0.18 & 0.22 & 0.25 & 0.27 & 0.29 & 0.35 & 0.47 & 0.70 & 0.92 & 1.00 & 1.00 & -- & -- \\
PartCopula & 0.95 & 1.00 & 1.00 & 1.00 & 1.00 & 1.00 & 1.00 & 1.00 & 1.00 & 1.00 & -- & -- & -- & -- \\
\midrule \multicolumn{15}{l}{\texttt{hetero}, $|S|=1$}\\
GFCM & 0.06 & 0.05 & 0.08 & 0.06 & 0.06 & 0.05 & 0.05 & 0.05 & 0.05 & 0.05 & 0.04 & 0.05 & 0.06 & 0.05 \\
BLITZ & 0.06 & 0.05 & 0.06 & 0.04 & 0.04 & 0.06 & 0.04 & 0.07 & 0.04 & 0.06 & 0.05 & 0.06 & 0.05 & 0.04 \\
FFCI & 0.07 & 0.03 & 0.05 & 0.03 & 0.05 & 0.05 & 0.04 & 0.06 & 0.04 & 0.07 & 0.04 & 0.05 & 0.08 & 0.09 \\
RCoT & 0.06 & 0.05 & 0.05 & 0.04 & 0.06 & 0.04 & 0.05 & 0.06 & 0.05 & 0.04 & 0.04 & 0.05 & 0.05 & 0.05 \\
GCM-boost & 0.18 & 0.09 & 0.08 & 0.06 & 0.06 & 0.04 & 0.06 & 0.05 & 0.06 & 0.06 & 0.05 & 0.03 & -- & -- \\
PartCopula & 0.05 & 0.05 & 0.05 & 0.05 & 0.06 & 0.05 & 0.06 & 0.06 & 0.05 & 0.05 & -- & -- & -- & -- \\
\midrule \multicolumn{15}{l}{\texttt{hetero}, $|S|=3$}\\
GFCM & 0.09 & 0.07 & 0.07 & 0.06 & 0.09 & 0.08 & 0.05 & 0.06 & 0.07 & 0.07 & 0.06 & 0.06 & 0.05 & 0.05 \\
BLITZ & 0.06 & 0.05 & 0.06 & 0.05 & 0.07 & 0.07 & 0.04 & 0.05 & 0.05 & 0.06 & 0.05 & 0.05 & 0.05 & 0.07 \\
FFCI & 0.59 & 0.94 & 0.99 & 1.00 & 1.00 & 1.00 & 1.00 & 1.00 & 1.00 & 1.00 & 1.00 & 1.00 & 1.00 & 1.00 \\
RCoT & 0.07 & 0.05 & 0.06 & 0.06 & 0.08 & 0.07 & 0.06 & 0.07 & 0.06 & 0.07 & 0.06 & 0.07 & 0.07 & 0.10 \\
GCM-boost & 0.14 & 0.09 & 0.08 & 0.10 & 0.09 & 0.09 & 0.08 & 0.07 & 0.07 & 0.08 & 0.04 & 0.06 & -- & -- \\
PartCopula & 0.28 & 0.78 & 0.93 & 0.99 & 1.00 & 1.00 & 1.00 & 1.00 & 1.00 & 1.00 & -- & -- & -- & -- \\
\midrule \multicolumn{15}{l}{\texttt{hetero}, $|S|=5$}\\
GFCM & 0.17 & 0.12 & 0.10 & 0.09 & 0.06 & 0.08 & 0.09 & 0.08 & 0.06 & 0.07 & 0.07 & 0.07 & 0.06 & 0.05 \\
BLITZ & 0.05 & 0.07 & 0.05 & 0.06 & 0.06 & 0.05 & 0.06 & 0.04 & 0.05 & 0.06 & 0.04 & 0.05 & 0.05 & 0.05 \\
FFCI & 1.00 & 1.00 & 1.00 & 1.00 & 1.00 & 1.00 & 1.00 & 1.00 & 1.00 & 1.00 & 1.00 & 1.00 & 1.00 & 1.00 \\
RCoT & 0.15 & 0.22 & 0.34 & 0.35 & 0.38 & 0.46 & 0.48 & 0.50 & 0.57 & 0.58 & 0.68 & 0.73 & 0.82 & 0.84 \\
GCM-boost & 0.14 & 0.13 & 0.11 & 0.11 & 0.11 & 0.09 & 0.08 & 0.11 & 0.10 & 0.10 & 0.09 & 0.12 & -- & -- \\
PartCopula & 0.38 & 0.93 & 0.99 & 1.00 & 1.00 & 1.00 & 1.00 & 1.00 & 1.00 & 1.00 & -- & -- & -- & -- \\
\midrule \multicolumn{15}{l}{\texttt{mixed\_Z}, $|S|=1$}\\
GFCM & 0.05 & 0.04 & 0.06 & 0.04 & 0.06 & 0.05 & 0.05 & 0.06 & 0.05 & 0.06 & 0.05 & 0.07 & 0.07 & 0.05 \\
BLITZ & 0.06 & 0.06 & 0.07 & 0.04 & 0.07 & 0.05 & 0.06 & 0.07 & 0.05 & 0.07 & 0.04 & 0.05 & 0.06 & 0.05 \\
FFCI & 0.52 & 0.98 & 1.00 & 1.00 & 1.00 & 1.00 & 1.00 & 1.00 & 1.00 & 1.00 & 1.00 & 1.00 & 1.00 & 1.00 \\
RCoT & 0.05 & 0.06 & 0.06 & 0.04 & 0.07 & 0.05 & 0.05 & 0.06 & 0.05 & 0.05 & 0.05 & 0.03 & 0.05 & 0.05 \\
GCM-boost & 0.06 & 0.05 & 0.05 & 0.04 & 0.06 & 0.04 & 0.05 & 0.06 & 0.05 & 0.05 & 0.04 & 0.04 & -- & -- \\
PartCopula & -- & -- & -- & -- & -- & -- & -- & -- & -- & -- & -- & -- & -- & -- \\
\midrule \multicolumn{15}{l}{\texttt{mixed\_Z}, $|S|=3$}\\
GFCM & 0.06 & 0.06 & 0.06 & 0.04 & 0.07 & 0.07 & 0.05 & 0.04 & 0.06 & 0.05 & 0.04 & 0.05 & 0.05 & 0.07 \\
BLITZ & 0.05 & 0.04 & 0.04 & 0.06 & 0.08 & 0.06 & 0.04 & 0.05 & 0.06 & 0.05 & 0.04 & 0.04 & 0.05 & 0.04 \\
FFCI & 1.00 & 1.00 & 1.00 & 1.00 & 1.00 & 1.00 & 1.00 & 1.00 & 1.00 & 1.00 & 1.00 & 1.00 & 1.00 & 1.00 \\
RCoT & 0.05 & 0.05 & 0.06 & 0.05 & 0.06 & 0.08 & 0.06 & 0.05 & 0.07 & 0.05 & 0.04 & 0.06 & 0.05 & 0.04 \\
GCM-boost & 0.15 & 0.10 & 0.13 & 0.11 & 0.08 & 0.10 & 0.08 & 0.07 & 0.06 & 0.06 & 0.05 & 0.06 & -- & -- \\
PartCopula & -- & -- & -- & -- & -- & -- & -- & -- & -- & -- & -- & -- & -- & -- \\
\midrule \multicolumn{15}{l}{\texttt{mixed\_Z}, $|S|=5$}\\
GFCM & 0.08 & 0.08 & 0.08 & 0.06 & 0.08 & 0.04 & 0.06 & 0.06 & 0.05 & 0.06 & 0.06 & 0.06 & 0.05 & 0.06 \\
BLITZ & 0.06 & 0.04 & 0.06 & 0.05 & 0.04 & 0.05 & 0.05 & 0.06 & 0.04 & 0.04 & 0.06 & 0.05 & 0.04 & 0.04 \\
FFCI & 1.00 & 1.00 & 1.00 & 1.00 & 1.00 & 1.00 & 1.00 & 1.00 & 1.00 & 1.00 & 1.00 & 1.00 & 1.00 & 1.00 \\
RCoT & 0.08 & 0.10 & 0.13 & 0.19 & 0.20 & 0.27 & 0.28 & 0.33 & 0.38 & 0.44 & 0.56 & 0.56 & 0.70 & 0.76 \\
GCM-boost & 0.14 & 0.10 & 0.11 & 0.11 & 0.08 & 0.08 & 0.08 & 0.07 & 0.08 & 0.07 & 0.08 & 0.09 & -- & -- \\
PartCopula & -- & -- & -- & -- & -- & -- & -- & -- & -- & -- & -- & -- & -- & -- \\
\midrule \multicolumn{15}{l}{\texttt{nonlin\_mean}, $|S|=1$}\\
GFCM & 0.09 & 0.05 & 0.07 & 0.06 & 0.07 & 0.03 & 0.04 & 0.04 & 0.04 & 0.05 & 0.06 & 0.06 & 0.06 & 0.05 \\
BLITZ & 0.06 & 0.04 & 0.07 & 0.05 & 0.07 & 0.06 & 0.05 & 0.06 & 0.05 & 0.07 & 0.06 & 0.06 & 0.06 & 0.04 \\
FFCI & 0.06 & 0.04 & 0.06 & 0.05 & 0.07 & 0.06 & 0.06 & 0.06 & 0.06 & 0.08 & 0.09 & 0.10 & 0.25 & 0.43 \\
RCoT & 0.07 & 0.06 & 0.05 & 0.05 & 0.07 & 0.06 & 0.04 & 0.04 & 0.04 & 0.06 & 0.05 & 0.04 & 0.05 & 0.04 \\
GCM-boost & 0.18 & 0.09 & 0.08 & 0.05 & 0.09 & 0.05 & 0.06 & 0.05 & 0.05 & 0.06 & 0.04 & 0.03 & -- & -- \\
PartCopula & 0.05 & 0.03 & 0.05 & 0.07 & 0.06 & 0.07 & 0.10 & 0.10 & 0.10 & 0.15 & -- & -- & -- & -- \\
\midrule \multicolumn{15}{l}{\texttt{nonlin\_mean}, $|S|=3$}\\
GFCM & 0.10 & 0.10 & 0.10 & 0.10 & 0.11 & 0.11 & 0.14 & 0.05 & 0.06 & 0.07 & 0.06 & 0.07 & 0.04 & 0.07 \\
BLITZ & 0.09 & 0.14 & 0.14 & 0.11 & 0.11 & 0.11 & 0.08 & 0.06 & 0.05 & 0.05 & 0.04 & 0.07 & 0.06 & 0.04 \\
FFCI & 0.05 & 0.05 & 0.06 & 0.06 & 0.06 & 0.08 & 0.08 & 0.10 & 0.10 & 0.16 & 0.20 & 0.28 & 0.58 & 0.89 \\
RCoT & 0.08 & 0.08 & 0.07 & 0.08 & 0.08 & 0.07 & 0.07 & 0.04 & 0.07 & 0.07 & 0.05 & 0.06 & 0.05 & 0.08 \\
GCM-boost & 0.16 & 0.10 & 0.07 & 0.09 & 0.08 & 0.10 & 0.08 & 0.07 & 0.08 & 0.07 & 0.06 & 0.06 & -- & -- \\
PartCopula & 0.02 & 0.06 & 0.06 & 0.06 & 0.05 & 0.05 & 0.05 & 0.08 & 0.08 & 0.08 & -- & -- & -- & -- \\
\midrule \multicolumn{15}{l}{\texttt{nonlin\_mean}, $|S|=5$}\\
GFCM & 0.10 & 0.06 & 0.09 & 0.07 & 0.07 & 0.08 & 0.06 & 0.07 & 0.08 & 0.05 & 0.07 & 0.06 & 0.05 & 0.06 \\
BLITZ & 0.05 & 0.04 & 0.06 & 0.06 & 0.04 & 0.06 & 0.06 & 0.07 & 0.04 & 0.05 & 0.04 & 0.03 & 0.04 & 0.03 \\
FFCI & 0.04 & 0.05 & 0.05 & 0.05 & 0.06 & 0.05 & 0.06 & 0.05 & 0.06 & 0.06 & 0.06 & 0.04 & 0.08 & 0.07 \\
RCoT & 0.08 & 0.07 & 0.07 & 0.07 & 0.06 & 0.07 & 0.07 & 0.06 & 0.08 & 0.05 & 0.06 & 0.05 & 0.05 & 0.07 \\
GCM-boost & 0.13 & 0.09 & 0.09 & 0.07 & 0.08 & 0.07 & 0.07 & 0.06 & 0.04 & 0.06 & 0.06 & 0.06 & -- & -- \\
PartCopula & 0.03 & 0.04 & 0.04 & 0.05 & 0.04 & 0.05 & 0.07 & 0.07 & 0.05 & 0.04 & -- & -- & -- & -- \\
\end{longtable}
\normalsize\setlength{\tabcolsep}{6pt}

\footnotesize\setlength{\tabcolsep}{3pt}
\begin{longtable}{l*{14}{c}}
\caption{Convergence of size-corrected power with sample size (full sweep; columns are $n$ in thousands).}\label{tab:conv-power}\\
\toprule
 & \multicolumn{14}{c}{sample size $n$ (thousands)}\\
\cmidrule(lr){2-15}
test & 0.5 & 1 & 1.5 & 2 & 2.5 & 3 & 4 & 5 & 7 & 10 & 15 & 20 & 50 & 100 \\
\midrule
\endfirsthead
\multicolumn{15}{c}{\tablename\ \thetable\ (continued)}\\
\toprule
test & 0.5 & 1 & 1.5 & 2 & 2.5 & 3 & 4 & 5 & 7 & 10 & 15 & 20 & 50 & 100 \\
\midrule
\endhead
\bottomrule
\endfoot
\midrule \multicolumn{15}{l}{\texttt{tail\_shape}, $|S|=1$}\\
GFCM & 0.32 & 0.67 & 0.86 & 0.99 & 1.00 & 1.00 & 1.00 & 1.00 & 1.00 & 1.00 & 1.00 & 1.00 & 1.00 & 1.00 \\
BLITZ & 0.09 & 0.12 & 0.13 & 0.17 & 0.19 & 0.23 & 0.26 & 0.32 & 0.50 & 0.57 & 0.76 & 0.88 & 1.00 & 1.00 \\
FFCI & 0.30 & 0.59 & 0.78 & 0.92 & 0.94 & 0.98 & 1.00 & 1.00 & 1.00 & 1.00 & 1.00 & 1.00 & 1.00 & 1.00 \\
RCoT & 0.14 & 0.23 & 0.30 & 0.41 & 0.42 & 0.50 & 0.58 & 0.67 & 0.73 & 0.79 & 0.88 & 0.89 & 0.96 & 0.98 \\
GCM-boost & 0.04 & 0.04 & 0.04 & 0.08 & 0.05 & 0.05 & 0.06 & 0.06 & 0.06 & 0.05 & 0.06 & 0.09 & -- & -- \\
PartCopula & 0.08 & 0.06 & 0.06 & 0.06 & 0.02 & 0.05 & 0.07 & 0.04 & 0.06 & 0.05 & -- & -- & -- & -- \\
\midrule \multicolumn{15}{l}{\texttt{tail\_shape}, $|S|=3$}\\
GFCM & 0.11 & 0.40 & 0.71 & 0.90 & 0.98 & 0.99 & 1.00 & 1.00 & 1.00 & 1.00 & 1.00 & 1.00 & 1.00 & 1.00 \\
BLITZ & 0.07 & 0.12 & 0.13 & 0.13 & 0.18 & 0.21 & 0.25 & 0.34 & 0.41 & 0.50 & 0.70 & 0.81 & 1.00 & 1.00 \\
FFCI & 0.11 & 0.16 & 0.16 & 0.21 & 0.29 & 0.35 & 0.51 & 0.63 & 0.76 & 0.91 & 1.00 & 0.99 & 1.00 & 1.00 \\
RCoT & 0.05 & 0.09 & 0.10 & 0.07 & 0.11 & 0.12 & 0.14 & 0.18 & 0.24 & 0.28 & 0.39 & 0.31 & 0.51 & 0.52 \\
GCM-boost & 0.06 & 0.07 & 0.07 & 0.06 & 0.05 & 0.06 & 0.06 & 0.05 & 0.07 & 0.06 & 0.07 & 0.04 & -- & -- \\
PartCopula & 0.04 & 0.04 & 0.06 & 0.05 & 0.03 & 0.03 & 0.03 & 0.03 & 0.05 & 0.03 & -- & -- & -- & -- \\
\midrule \multicolumn{15}{l}{\texttt{tail\_shape}, $|S|=5$}\\
GFCM & 0.07 & 0.25 & 0.67 & 0.86 & 0.98 & 1.00 & 1.00 & 1.00 & 1.00 & 1.00 & 1.00 & 1.00 & 1.00 & 1.00 \\
BLITZ & 0.13 & 0.16 & 0.17 & 0.21 & 0.27 & 0.29 & 0.31 & 0.45 & 0.54 & 0.69 & 0.85 & 0.92 & 1.00 & 1.00 \\
FFCI & 0.06 & 0.09 & 0.09 & 0.15 & 0.15 & 0.17 & 0.22 & 0.35 & 0.43 & 0.55 & 0.73 & 0.80 & 1.00 & 1.00 \\
RCoT & 0.05 & 0.11 & 0.08 & 0.07 & 0.06 & 0.11 & 0.09 & 0.12 & 0.09 & 0.12 & 0.13 & 0.12 & 0.18 & 0.26 \\
GCM-boost & 0.08 & 0.05 & 0.07 & 0.03 & 0.07 & 0.05 & 0.07 & 0.07 & 0.05 & 0.06 & 0.04 & 0.05 & -- & -- \\
PartCopula & 0.04 & 0.07 & 0.11 & 0.07 & 0.05 & 0.06 & 0.08 & 0.09 & 0.08 & 0.11 & -- & -- & -- & -- \\
\midrule \multicolumn{15}{l}{\texttt{alpha\_3}, $|S|=1$}\\
GFCM & 0.36 & 0.70 & 0.92 & 0.99 & 0.99 & 1.00 & 1.00 & 1.00 & 1.00 & 1.00 & 1.00 & 1.00 & 1.00 & 1.00 \\
BLITZ & 0.07 & 0.11 & 0.13 & 0.19 & 0.19 & 0.24 & 0.30 & 0.36 & 0.50 & 0.58 & 0.78 & 0.88 & 1.00 & 1.00 \\
FFCI & 0.14 & 0.34 & 0.44 & 0.70 & 0.81 & 0.88 & 0.98 & 1.00 & 1.00 & 1.00 & 1.00 & 1.00 & 1.00 & 1.00 \\
RCoT & 0.09 & 0.14 & 0.22 & 0.28 & 0.35 & 0.44 & 0.52 & 0.62 & 0.72 & 0.77 & 0.86 & 0.87 & 0.99 & 0.98 \\
GCM-boost & 0.04 & 0.05 & 0.04 & 0.08 & 0.05 & 0.06 & 0.05 & 0.07 & 0.05 & 0.04 & 0.06 & 0.08 & -- & -- \\
PartCopula & 0.04 & 0.10 & 0.06 & 0.05 & 0.05 & 0.04 & 0.05 & 0.05 & 0.08 & 0.08 & -- & -- & -- & -- \\
\midrule \multicolumn{15}{l}{\texttt{alpha\_3}, $|S|=3$}\\
GFCM & 0.10 & 0.29 & 0.52 & 0.68 & 0.87 & 0.92 & 1.00 & 1.00 & 1.00 & 1.00 & 1.00 & 1.00 & 1.00 & 1.00 \\
BLITZ & 0.07 & 0.09 & 0.10 & 0.11 & 0.11 & 0.13 & 0.18 & 0.26 & 0.28 & 0.30 & 0.53 & 0.70 & 0.97 & 1.00 \\
FFCI & 0.00 & 0.00 & 0.00 & 0.00 & 0.00 & 0.00 & 0.00 & 0.00 & 0.00 & 0.00 & 0.00 & 0.00 & 0.00 & 0.00 \\
RCoT & 0.08 & 0.11 & 0.11 & 0.08 & 0.07 & 0.12 & 0.21 & 0.23 & 0.30 & 0.41 & 0.44 & 0.53 & 0.74 & 0.85 \\
GCM-boost & 0.03 & 0.08 & 0.07 & 0.03 & 0.05 & 0.03 & 0.05 & 0.03 & 0.06 & 0.06 & 0.06 & 0.06 & -- & -- \\
PartCopula & 0.08 & 0.07 & 0.07 & 0.12 & 0.17 & 0.14 & 0.12 & 0.16 & 0.19 & 0.28 & -- & -- & -- & -- \\
\midrule \multicolumn{15}{l}{\texttt{alpha\_3}, $|S|=5$}\\
GFCM & 0.05 & 0.14 & 0.24 & 0.41 & 0.60 & 0.74 & 0.90 & 0.93 & 0.99 & 1.00 & 1.00 & 1.00 & 1.00 & 1.00 \\
BLITZ & 0.06 & 0.06 & 0.07 & 0.09 & 0.12 & 0.13 & 0.17 & 0.15 & 0.24 & 0.28 & 0.42 & 0.52 & 0.92 & 1.00 \\
FFCI & 0.00 & 0.00 & 0.00 & 0.00 & 0.00 & 0.00 & 0.00 & 0.00 & 0.00 & 0.00 & 0.00 & 0.00 & 0.00 & 0.00 \\
RCoT & 0.03 & 0.04 & 0.02 & 0.02 & 0.02 & 0.03 & 0.02 & 0.04 & 0.00 & 0.00 & 0.00 & 0.00 & 0.00 & 0.00 \\
GCM-boost & 0.06 & 0.03 & 0.09 & 0.05 & 0.06 & 0.03 & 0.06 & 0.05 & 0.05 & 0.03 & 0.06 & 0.03 & -- & -- \\
PartCopula & 0.04 & 0.07 & 0.07 & 0.06 & 0.08 & 0.10 & 0.09 & 0.16 & 0.11 & 0.16 & -- & -- & -- & -- \\
\end{longtable}
\normalsize\setlength{\tabcolsep}{6pt}

\end{document}